\long\def\symbolfootnote[#1]#2{\begingroup\def\thefootnote{\fnsymbol{footnote}}\footnote[#1]{#2}\endgroup}
\newcommand{\cA}{\mathcal{A}}
 \newcommand{\cD}{\mathcal{D}}
\newcommand{\cF}{\mathcal{F}}
\newcommand{\cH}{\mathcal{H}}
\newcommand{\cJ}{\mathcal{J}}
\newcommand{\cL}{\mathcal{L}}
\newcommand{\cM}{\mathcal{M}}
 \newcommand{\cS}{\mathcal{S}}
\newcommand{\pa}{{\partial}}
\newcommand{\ignore}[1]{}
\def \a{\alpha}
\def \b{\beta}
\def \e{\epsilon}
\def \d{\delta}
\def \E{\mathbb{E}}
 \def\P{\mathbb{P}}
\def \R{\mathbb{R}}
\def \Sb{\mathbb {S}}
\def\1{{\bf{1}}}
\def\0{{\bf{0}}}
\theoremstyle{plain}
\newtheorem{thm}{Theorem}[section]
\newtheorem{dfn}[thm]{Definition}
\newtheorem{prop}[thm]{Proposition}
\newtheorem{lem}[thm]{Lemma}
\newtheorem{asm}[thm]{Assumption} 
\theoremstyle{definition}
\newtheorem{rem}[thm]{Remark} 
\newtheorem{example}[thm]{Example}
\numberwithin{equation}{section}
\title{Portfolio Choice with Small Temporary and Transient Price Impact}
\begin{document}
\symbolfootnote[0]{The insightful remarks of two anonymous reviewers are gratefully acknowledged.}
\symbolfootnote[0]{$\dag$ Florida State University, Department of Mathematics, 224 LOVE Building, 1017 Academic Way, Tallahassee, FL 32306, USA, email \texttt{iekren@fsu.edu}. Partly supported by Swiss National Science Foundation Grant SNF 153555.}
\symbolfootnote[0]{$\ddag$Carnegie Mellon University, Department of Mathematical Sciences, 5000 Forbes Avenue, Pittsburgh, PA 15213, USA, email: \texttt{johannesmk@cmu.edu}.}

\maketitle

\vspace{-.5cm}
\centerline{\textsc{Ibrahim Ekren$^{\dag}$}}
\medskip\centerline{\textit{Florida State University}}
\medskip\medskip

\centerline{\textsc{Johannes Muhle-Karbe$^{\ddag}$}}
\medskip\centerline{\textit{Carnegie Mellon University}}
\medskip\medskip

We study portfolio selection in a model with both temporary and transient price impact introduced by \cite{garleanu.pedersen.16}. In the large-liquidity limit where both frictions are small, we derive explicit formulas for the asymptotically optimal trading rate and the corresponding minimal leading-order performance loss. We find that the losses are governed by the volatility of the frictionless target strategy, like in models with only temporary price impact. In contrast, the corresponding optimal portfolio not only tracks the frictionless optimizer, but also exploits the displacement of the market price from its\ unaffected level.

\smallskip\footnotesize\textsc{Keywords:} portfolio choice; temporary price impact; transient price impact; asymptotics.

\smallskip\footnotesize\textsc{Mathematics Subject Classification (2010):} {91G10, 91G80, 35K55.}

\smallskip\footnotesize\textsc{JEL Classification:} G11, G12, G23, C61.
\medskip\medskip\medskip
\normalsize

\section{Introduction}

When rebalancing large portfolios, the adverse price impact of each trade is a key concern. Indeed, large transactions deplete the liquidity available in the market and lead to less favorable execution prices. After the completion of a large trade liquidity recovers, but only gradually. Whence, it is of crucial importance for portfolio managers to schedule their order flow in an appropriate manner, so as to trade off the gains and costs of rebalancing in an optimal manner. 

Accordingly, there is a large and growing literature on price impact models. Following the survey paper of \cite{gatheral.schied.13}, these models can be broadly classified into two categories. The first distinguishes between \emph{temporary} trading costs, that only affect each trade separately, and \emph{permanent} price impact that affects the current and all future trades in the same manner (cf., e.g., \cite{bertsimas.lo.98,almgren.chriss.01} and many more recent studies). The second takes into account the \emph{transient} nature of price impact, which is caused by large trades but gradually wears off once these are completed, cf., e.g., \cite{bouchaud.al.04,bouchaud.al.06,obizhaeva.wang.13,gatheral.10,alfonsi.al.10,shreve.al.11,gatheral.schied.13}.

These models were originally developed for optimal execution problems, where the goal is to split up a single, exogenously given order in an optimal manner. More recently, dynamic portfolio choice and hedging problems with price impact have also received increasing attention \citep{garleanu.pedersen.13,garleanu.pedersen.16,dufresne.al.12,guasoni.weber.15a,moreau.al.15,gueant.pu.15,almgren.li.16,bank.al.16}. This means that the target orders to be executed are no longer assumed to be given, but are instead derived endogenously from a dynamic optimization problem. This allows to explicitly model the tradeoff between gains from reacting to new information and costs of trading. However, the complexity of the optimization problem increases considerably. Whence, attention has almost exclusively focused on first-generation price impact models with only temporary trading costs so far. 

The only exception is the recent work of \cite{garleanu.pedersen.16}. They study portfolio choice for agents that try to exploit partially predictable returns in the presence of linear temporary \emph{and} transient price impact. Using dynamic programming arguments, they describe the value function of the problem at hand and the corresponding optimal trading rate via the solution of a coupled system of nonlinear equations.\footnote{The coupled nature of these optimality equations complicates the analysis of even the simplest concrete models, unlike for models with purely temporary costs, where linear-quadratic problems can be solved explicitly in essentially full generality \citep{garleanu.pedersen.13,garleanu.pedersen.16,cartea.jaimungal.16,bank.al.16,bank.voss.17}.}  This analysis identifies the current deviation of the market price from its ``unaffected'' value as an important new state variable. However, the involved nonlinear nature of the optimality conditions makes it difficult to draw further qualitative and quantitative conclusions beyond a benchmark model with a linear factor process.

To overcome this lack of tractability and analyze models with more general dynamics, small-cost asymptotics have proven to be very useful in models with temporary trading costs only. This means that one views the trading friction at hand as a perturbation of the frictionless benchmark model, and looks for corrections of the frictionless optimizer that take it into account in an asymptotically optimal manner. As succinctly summarized by \cite{whalley.wilmott.97}, the goal is to ``reveal the salient features of the problem while remaining a good approximation to the full but more complicated model''. For example, in the context of linear temporary price impact, \cite*{moreau.al.15} have shown that both the optimal trading rate and the leading-order loss due to transaction costs admit explicit asymptotic expressions.\footnote{Related work on other small transaction costs includes~\cite{shreve.soner.94,whalley.wilmott.97,korn.98,janecek.shreve.04,bichuch.14,soner.touzi.13,possamai.al.15,martin.12,kallsen.muhlekarbe.15,kallsen.li.15,altarovici.al.15,cai.al.15,cai.al.16,feodoria.16,melnyk.seifried.17,herdegen.muhlekarbe.17}.} The trading rate turns out to be proportional to the distortion relative to the frictionless target and a universal constant -- the square-root of risk aversion, times market variance, divided by trading costs.\footnote{The same statistic also plays a key role in optimal execution problems \cite{almgren.chriss.01,schied.schoeneborn.09} and models with asymmetric information \cite{muhlekarbe.webster.16}.} In contrast, the volatility of the target strategy does not feature in this formula, so that the optimal relative trading speed is the same for a broad class of optimization problems. In contrast the volatility of the frictionless target is a crucial input for the leading-order effect of the trading costs, which are equal to a suitably weighted average of this quantity, weighted with a term explicitly determined by risk aversion, market volatility, and trading costs.

 \begin{figure}
  \centering
  \includegraphics[width=.7\linewidth]{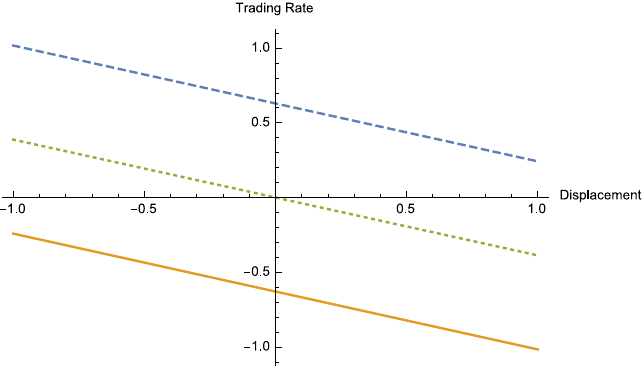}
\caption{Asymptotically optimal trading speed in shares per unit of time plotted against deviation from frictionless optimum, for zero price distortion (dotted), negative distortion (dashed), and positive distortion (solid) with $\Sigma=0.3$, $\Lambda=1$, $C=0.5$ and $\gamma=R=1/2$.}
\label{fig:test2}
\end{figure}

The present study brings similar asymptotic methods to bear on the model of \cite{garleanu.pedersen.16}. As this model includes two frictions -- temporary and transient price impact -- we consider the joint limit where both of these become small. In order to understand the contribution of both frictions, we focus on the ``critical regime'', where both are rescaled so as to feature nontrivially in the limit. 

In a general Markovian setting, this enables us to obtain tractable formulas for the asymptotically optimal trading rate and the leading-order performance loss due to illiquidity due to both temporary and transient price impact. These results show that the volatility of the frictionless target portfolio is still the key statistic for its sensitivity with respect to small trading frictions. Indeed, the representation of the first-oder loss from \cite{moreau.al.15} remains valid in the present context after updating the scaling weight to account for the additional model parameters. In contrast, the optimal trading rate is more complex. To wit, with only temporary transaction costs, the asymptotically optimal policy simply tracks the frictionless target. The transient price distortion in the present model provides an additional predictor for future price changes. Accordingly, the optimal trading rate now trades off tracking the frictionless target against the exploitation of this trading signal. 

This is displayed in Figure~\ref{fig:test2} for a model with one risky asset and constant optimal trading speed. As a positive distortion predicts negative future price changes, the corresponding trading rate is reduced and vice versa. The more involved comparative statics for several risky assets are illustrated in Figures~\ref{fig:test1a} and \ref{fig:test1b}. There, we plot the vector field of asymptotically optimal trading speeds for two assets and display the effect a price distortion in one of the assets has on these. In Figure~\ref{fig:test1a}, the two assets are uncorrelated. Accordingly, a positive price distortion in asset one only affects the optimal trading speed in the latter, whereas the turnover in asset two remains unchanged.\footnote{This decoupling between uncorrelated assets is typical for agents with constant absolute risk aversions \citep{liu.04}, but only holds true approximately in the high risk-aversion limit for constant relative risk aversions \citep{guasoni.muhlekarbe.15}. The comparative statics of a multidimensional model with temporary price impact and constant relative risk aversion are discussed in detail by \cite{guasoni.weber.13}.}  In Figure~\ref{fig:test1b}, the two assets have substantial positive correlation. Then, a positive price distortion in the first asset still is a negative indicator for its future returns, so that the corresponding target position is reduced. However, to compensate for this change of the portfolio composition, the target position in the positively correlated second asset now is increased.

\begin{figure}
\centering
\begin{subfigure}{.5\textwidth}
  \centering
  \includegraphics[width=.9\linewidth]{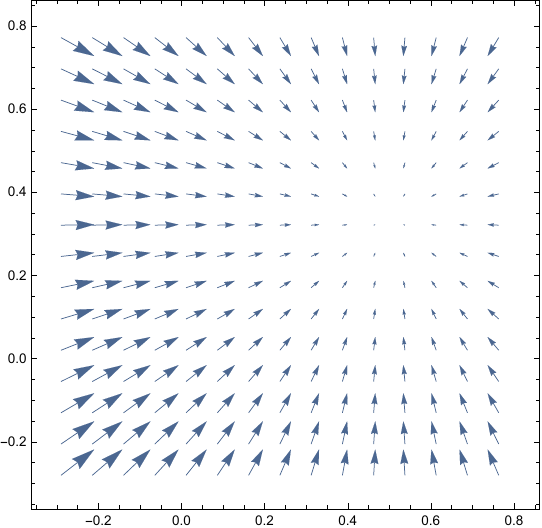}
\end{subfigure}%
\begin{subfigure}{.5\textwidth}
  \centering
  \includegraphics[width=.9\linewidth]{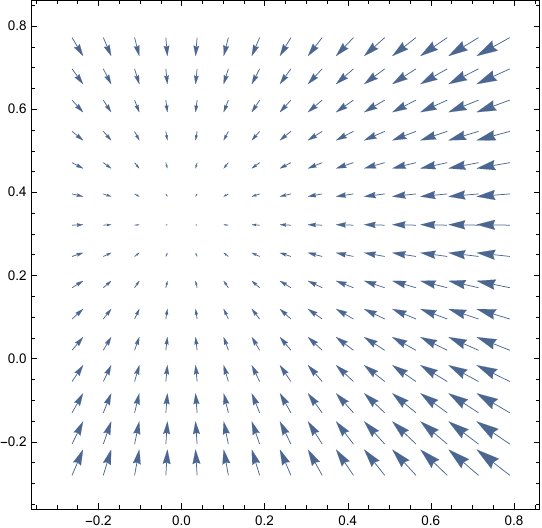}
\end{subfigure}
\caption{Optimal trading rates in two uncorrelated assets, with zero price distortion (left panel), and positive price distortion in asset one (right panel). Parameters are $\Sigma=\tiny\protect\begin{pmatrix} 1 & 0 \\ 0 & 2 \protect\end{pmatrix}$, $\Lambda=\Sigma/2$, $C=2\Sigma$ and $\gamma=R=1/2$.}
\label{fig:test1a}
\end{figure}

\begin{figure}
\centering
\begin{subfigure}{.5\textwidth}
  \centering
  \includegraphics[width=.9\linewidth]{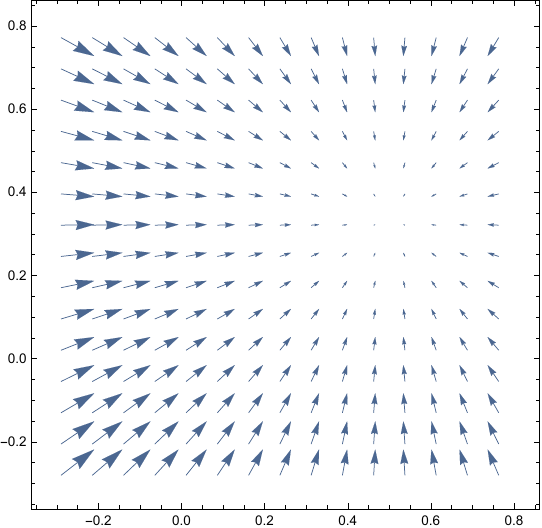}
\end{subfigure}%
\begin{subfigure}{.5\textwidth}
  \centering
  \includegraphics[width=.9\linewidth]{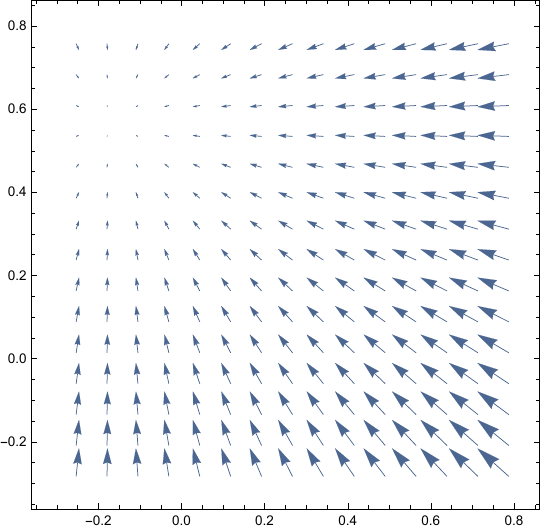}
\end{subfigure}
\caption{Optimal trading rates in two positively correlated assets, with zero price distortion (left panel), and positive price distortion in asset one (right panel). Parameters are $\Sigma=\tiny\protect\begin{pmatrix} 1 & \frac{\sqrt{2}}{2} \\ \frac{\sqrt{2}}{2} & 2 \protect\end{pmatrix}$, $\Lambda=\Sigma/2$, $C=2\Sigma$ and $\gamma=R=1/2$.}
\label{fig:test1b}
\end{figure}

These effects depicted in these figures are clearly visible in our simple asymptotic formulas. More generally, these show that the tracking speed for the frictionless target turns out to be the same as for purely temporary costs. In contrast, the weight placed on the current distortion depends on the tradeoff between all price impact parameters. If temporary price impact is substantially larger than its transient counterpart, this dependence reduces to the simple ratio of price distortion over temporary trading cost, cf.~Section~\ref{ss:1d}.

To prove these results, we follow \cite{soner.touzi.13,altarovici.al.15,moreau.al.15} and use stability results for viscosity solutions. However, substantial new difficulties arise here due to the presence of an additional  state variable, the price distortion caused by trading. Its presence leads to a substantially more complicated limiting control problem. Moreover, with only temporary costs, the frictional value is always dominated by its frictionless counterpart and the partial differential equations involved have some non-degeneracy that is crucial for establishing their expansions. This is no longer the case in the present context. To overcome these difficulties, we therefore study a suitably renormalized version of the value function and develop new methods to obtain locally uniform bounds for its scaled deviation from the frictionless value. 

The remainder of this article is organized as follows. In Section~\ref{sec:model} we introduce the model, its frictionless solution, and the dynamic programming characterization of the version with temporary and transient price impact. Our main results, a first-order expansion of the frictional value function in the large liquidity limit and a corresponding asymptotically optimal policy, are presented in Section~\ref{sec:main}. In Section \ref{s.estimates}, we give estimates that enable us in Section \ref{s.semilimits} to define the upper and lower semilimits of the rescaled deviation of the value functions from the frictionless value. We in turn characterize these semilimits as viscosity semisolutions of the second corrector equation in Section~ \ref{s.corrector.eq}.  The dependence of these semilimits on the initial states is subsequently studied in Section~\ref{s.first}. Finally, the asymptotic optimality of our candidate policy is established in Section \ref{s.optimal.policy}. To ease readability, some technical proofs are delegated to Appendix~\ref{sec:app}.\\

\subsection*{Notation}
We write $I_n$ for the identity matrix on $\mathbb{R}^n$ and denote by $\Sb_n$ the cone of $\mathbb{R}^{n \times n}$-valued, symmetric, positive definite matrices. Inequalities between symmetric matrices are understood in the sense that the difference is symmetric positive semidefinite. $M^\top$ denotes the transpose of a matrix $M$, $\mathrm{Tr}(M)$ its trace, and $|M|$ its Frobenius norm. For a function $f:\xi\in\R^n\times \R^n\to\R$, we write $\pa_\xi f=(\pa_1f^\top,\pa_2f^\top)^\top\in\R^{2n}$ for the gradient of $f$. For a locally bounded function $f$, the corresponding upper and lower semicontinuous envelopes are denoted by $f^*$ and $f_*$, respectively. To simplify notation in the technical estimates, $c$ is used to denote a generic, sufficiently large positive constant that may vary from line to line.

\section{Model}\label{sec:model}

\subsection{Financial Market}
Fix a filtered probability space $(\Omega, \cF, (\cF_t)_{t\geq0}, \P)$, endowed with an $\mathbb{R}^q$-valued Brownian motion $(W_t)_{t \geq 0}$. We consider a financial market with $1+n$ assets. The first one is safe; its price is normalized to one. The other $n$ assets are risky; their prices $(S^1_t,\ldots,S^n_t)$ are given by the first $n$ components of a $\cD$-valued Markovian state process with dynamics
\begin{align}\label{eq:factor}
Y^y_t=y+\int_0^t \mu_Y(Y^y_s)ds+\int_0^t\sigma_Y(Y^y_s)dW_s,
\end{align}
where $\cD= (0,\infty)^n\times \R^m$ or $\cD= \R^n\times \R^m$. The deterministic functions $\mu_Y :\cD\to \R^{m+n}$ and $\sigma_Y:\cD\to \R^{(m+n)\times q}$ are twice continuously differentiable and Lipschitz, so that this stochastic differential equation has a unique strong solution for any initial condition $y\in \cD$. 

To ease notation, we write 
\begin{align}\label{eq:dynamics-price}
dS_t=\mu(Y^y_t)dt+\sigma(Y^y_t)dW_t,
\end{align}
and set 
$$\Sigma(y)=\sigma(y)\sigma(y)^\top,\quad \mu_t=\mu(Y^y_t),\quad \Sigma_t= \Sigma(Y^y_t),$$
when there is no ambiguity about the initial condition $y$ of $Y_t^y$. To rule out degenerate cases, we assume throughout that the infinitesimal covariance matrix $\Sigma$ is invertible with inverse $\Sigma^{-1}$. 

For all smooth functions $\phi: \cD \to \mathbb{R}$, the infinitesimal generator of the diffusion $(Y_t)_{t \geq 0}$ applied to $\phi$ is denoted by
$$
\cL^Y\phi(y):= \mu_Y^\top(y)\pa_y \phi(y)+\frac{1}{2}\mathrm{Tr}\left(\sigma_Y(y)\sigma_Y^\top(y)\pa^2_{yy}\phi(y)\right).
$$

\begin{example}\label{ex:meanrev}
Throughout the paper, we will illustrate our results for a nonlinear extension of the arithmetic model with mean-reverting returns from \cite{bouchaud.al.12,martin.12,garleanu.pedersen.13,garleanu.pedersen.16}. This means that the second component of the factor process is an autonomous Ornstein-Uhlenbeck process with dynamics
$$dY^{2,y}_t=-\lambda Y^{2,y}_tdt+\eta dW^2_t, \quad Y^{2,y}_0=y_2,$$
for $\lambda, \eta>0$. The corresponding risky asset has dynamics
$$dS^y_t=\nu (Y^{2,y}_t) dt+\sigma dW^1_t, \quad S^y_0=y_1,$$
for constants $ \sigma>0$ and $(y_1,y_2) \in \cD=\R\times\R$, as well as a function $\nu: \mathbb{R} \to \mathbb{R}$ with bounded derivatives of order one, two, and three.
\end{example}

\subsection{Trading and Optimizations without Frictions}

Without frictions, self-financing trading strategies are modeled by predictable, $\mathbb{R}^n$-valued processes $H$, where $H^i_t$ denotes the number of shares of risky asset $i=1,\ldots,n$ held at time $t$. The corresponding portfolio returns then are described by the stochastic integral $H_tdS_t$ as usual.
As in \cite{kallsen.02,grinold.06,martin.schoeneborn.11,martin.12,garleanu.pedersen.13,garleanu.pedersen.16,guasoni.mayerhofer.16}, we consider an investor with infinite planning horizon who maximizes her expected returns penalized for the corresponding variances. In the continuous-time limit this leads to the following \emph{local} mean-variance functional,
\begin{equation}\label{eq:FL}
\cJ^0(y; H):=\E\left[\int_0^\infty e^{-\rho t} \left(H^\top_t \mu(Y^y_t) -\frac{\gamma}{2} H^\top_t \Sigma(Y^y_t) H_t\right) dt\right] \to \max_H!
\end{equation}
Here, $\gamma>0$ and $\rho>0$ are the investor's risk aversion and time-discount rate, respectively. Pointwise maximization of the integrand in \eqref{eq:FL} readily yields that the optimizer is the (myopic) Merton portfolio,
\begin{align}\label{eq:markowitz}
\cM_t^y:=\cM(Y_t^y):=\frac{\Sigma(Y^y_t)^{-1}\mu(Y^y_t)}{\gamma}.
\end{align}
Hence, the value function
$$V^0(y):=\sup_{H} \cJ^0(y; H)$$
has the following probabilistic representation,
\begin{equation}\label{eq:vf}
V^0(y)=\E\left[\int_0^\infty e^{-\rho t} \frac{\mu^\top(Y^y_t)\Sigma^{-1}(Y^y_t) \mu(Y^y_t)}{2\gamma}dt\right].
\end{equation}

For our PDE analysis of the corresponding problems with frictions, we focus on the case where this value function is finite and smooth enough in the initial data to satisfy the corresponding dynamic programming equation in the classical sense.
 
\begin{asm}\label{finite-V0}
The frictionless value function \eqref{eq:vf} is finite and a classical solution of the dynamic programming equation,
\begin{align}\label{eq:V0}
\rho V^0 (y)=\cL^Y V^0(y)+\frac{\mu^\top(y)\Sigma^{-1}(y)\mu(y)}{2\gamma}, \quad\mbox{ for }y\in \cD.
\end{align}
\end{asm}

\begin{example} \label{ex:OU}
For the model with mean-reverting returns from Example~\ref{ex:meanrev}, the optimal strategy is also a (possibly nonlinear) function of the Ornstein-Uhlenbeck state process,
$$
\cM_t^y =\cM\left(Y^{2,y}_t\right)= \frac{\nu (Y^{2,y}_t)}{\gamma\sigma^2}.
$$
The corresponding value function can be computed as
\begin{align}\label{eq:frictval}V^0(y)=V^0(y^2)=\frac{1}{2\gamma\sigma^2 }\E\left[\int_0^\infty e^{-\rho t} \nu^2(Y^{2,y}_t)dt\right],
\end{align}
where $Y^{2,y}_t$ with $y=(y^1, y^2)$ is Gaussian with mean $y^2e^{-\lambda t}$ and variance $\frac{\sigma^2}{2\lambda}(1-e^{-2\lambda t})$. Given the boundedness of the first two derivatives of $\nu$, one can easily show by differentiating under the integral that $V^0$ is twice continuously differentiable. 
\end{example}

\subsection{Trading and Optimization with Frictions}

Following \cite{garleanu.pedersen.16}, we now introduce two trading frictions into the above model.\footnote{This is a special case of the general framework for transient price impact studied in \cite{gatheral.10}.} The first one is purely \emph{temporary} in that it only affects each trade separately through a quadratic cost\footnote{Put differently, each trade has a linear temporary price impact proportional to both trade size and speed, compare \cite{guasoni.weber.15,moreau.al.15} for more details.}
\begin{align}\label{eq:temp_cost}
\frac{1}{2}\dot{H}_t^{\top} \Lambda \dot{H}_t
\end{align}
levied on the turnover rate 
$$
\dot{H}_t =\frac{dH_t}{dt}.
$$
This friction is parametrized by the symmetric definite positive matrix $\Lambda\in \Sb_n$,\footnote{Symmetry can be assumed without loss of generality, compare \cite{garleanu.pedersen.13,guasoni.weber.15}; positive-definiteness means that each trade incurs a nontrivial cost. This is evidently satisfied in the most common specifications $\Lambda=\lambda I_n$ or $\Lambda=\lambda \Sigma$ for a scalar $\lambda>0$, for example.} and naturally constrains trading strategies to be absolutely continuous. As pointed out by \cite{garleanu.pedersen.16}, this ``resembles the method used by many real-world traders in electronic markets, namely to continuously post limit orders close to the best bid or ask. The trading speed is the limit orders' ``fill rate'' [\ldots]''.

In addition to the temporary trading cost, trades also have a longer-lasting impact on market prices denoted by $D$. To wit, trading at an (adapted) rate $\dot{H}_tdt$ shifts the unaffected market quote by $C\dot{H}_tdt$ for a symmetric positive definite matrix $C\in \Sb_n$, i.e., purchases create an additional positive drift, etc. However, this price impact is \emph{not} permanent as in \cite{almgren.chriss.01,almgren.li.16} but decays gradually over time with an exponential rate $R>0$.\footnote{Related models with transient price impact have been studied intensively in the optimal execution literature, compare, e.g., \cite{obizhaeva.wang.13,gatheral.10,alfonsi.al.10,shreve.al.11}.} In summary, for an initial state 
$$\theta :=(d,h,y)\in \R^n\times \R^n\times \cD$$ 
the ``transient'' distortion of the actual price from its unaffected version thus has the following Ornstein-Uhlenbeck-type dynamics,
\begin{align}\label{eq:distortion}
dD_t^{\theta,\dot H} &=-R D_t^{\theta,\dot H}dt +C\dot{H}_t dt, \quad D_0^{\theta,\dot H}=d,
\end{align}
and the risky positions $H$ evolve as 
\begin{align}\label{eq:position}
dH^{\theta,\dot H}_t=\dot H_t dt, \quad H^{\theta,\dot{H}}_0=h.
\end{align}
With temporary and transient price impact, maximizing the risk-adjusted returns of a trading strategy $H_t=\dot{H}_t dt$ then boils down to
\begin{align}
\cJ(\theta;\dot H):= \E\Bigg[\int_0^\infty e^{-\rho t}\Big((&H_t^{\theta,\dot H}) ^{\top}(\mu_t-R D_t^{\theta,\dot H}+C \dot{H}_t)-\frac{\gamma}{2} (H_t^{\theta,\dot H})^\top \Sigma_t H_t^{\theta,\dot H} \notag\\
&-\frac{1}{2}\dot{H}^\top_t \Lambda \dot{H}_t\Big) dt\Bigg] \longrightarrow \max_{\dot{H}}! \label{eq:control-e}
\end{align}
As in \cite{garleanu.pedersen.16}, this objective function means that the investor has mean-variance preferences over the changes in wealth in each time period. The first term on the right-hand side of \eqref{eq:control-e} collects the expected returns due to i) changes in the unaffected price process \eqref{eq:dynamics-price} and ii) changes in the price distortion \eqref{eq:distortion}.\footnote{The safe position pinned down by the self-financing condition does not appear explicitly because the safe asset is normalized to one.} The second term is the usual risk penalty, and the third accounts for the temporary transaction costs. Unlike for the frictionless problem \eqref{eq:FL}, the risky positions $H_t$ can no longer be adjusted immediately and for free. Instead, they become additional state variables that can only be adjusted gradually by applying the controls $\dot{H}_t$. In particular, the problem is no longer myopic and therefore needs to be attacked by dynamic programming methods. With additional transient price impact, the distortion of the current price relative to its unaffected value enters as an additional crucial statistic.

To make sure that this infinite-horizon problem \eqref{eq:control-e} is well posed, we focus on \emph{admissible strategies} that satisfy a suitable (mild) transversality condition,\footnote{\cite{garleanu.pedersen.16} mention that a condition of this type is needed, but do not provide it. If the price impact parameters are constant, our notion \eqref{eq:trans} encompasses all uniformly bounded trading rates, for example.} 
\begin{equation}\label{eq:trans}
\begin{split}
\cA_\rho(\theta):=\Bigg\{\dot{H}: &\lim_{t\to\infty }e^{-\rho t}(|H_t^{\theta,\dot H}|^2+|D_t^{\theta,\dot H}|^2)=0,\\
&\E\left[\int_0^\infty e^{-\rho t}(|H_t^{\theta,\dot H}|^2+| D_t^{\theta,\dot H}|^2) dt\right]<\infty\Bigg\}.
\end{split}
\end{equation}
(The dependence of $\cA_\rho(\theta)$ on the initial data will be omitted when there is no ambiguity.)

\subsection{Viscosity Characterization}

We now want to characterize the frictional value function
\begin{align}\label{eq:control-e2}
V(\theta)&:=\sup_{\dot{H}\in\cA_\rho(\theta)}\cJ(\theta; \dot{H}).
\end{align}
If it is locally bounded, then weak dynamic programming arguments as in \cite{bouchard.touzi.11} show that this value function is a (possibly discontinuous, compare \cite[Definition 4.2]{fleming.soner.06}) viscosity solution of the frictional dynamic programming equation:
  
 \begin{prop}\label{asm:frictional.pde}
 Suppose the frictional value function $V$ is locally bounded. Then it is a (possibly discontinuous) viscosity solution of the following frictional dynamic programming equation,
\begin{equation}\label{PDE}
\begin{split}
\rho V(\theta) =&- \frac{\gamma}{2}h^\top\Sigma(y) h+h^\top \mu(y)-R d^\top (\pa_d V(\theta)+h)+ \cL^{Y} V(\theta)\\
&+ \sup_{\dot{h}} \left\{-\frac{1}{2} \dot{h}^\top \Lambda \dot{h}+ \dot{h}^\top \pa_h V(\theta) +\dot{h}^\top C(h+\pa_d V(\theta) )  \right\},
\end{split}
\end{equation}
for all $\theta=(d,h,y)\in \R^n\times \R^n\times \cD$.
\end{prop}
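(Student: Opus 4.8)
The plan is to follow the weak dynamic programming approach of Bouchard and Touzi~\cite{bouchard.touzi.11}: first establish two one-sided versions of the dynamic programming principle for~\eqref{eq:control-e2}, then read off the viscosity sub- and supersolution properties of the upper and lower semicontinuous envelopes $V^*$ and $V_*$ of $V$. Throughout, write
\[
f(\theta,\dot h):=h^\top\big(\mu(y)-Rd+C\dot h\big)-\frac{\gamma}{2}h^\top\Sigma(y)h-\frac12\dot h^\top\Lambda\dot h
\]
for the running reward in~\eqref{eq:control-e}, let $\Theta^{\theta,\dot H}_t:=(D^{\theta,\dot H}_t,H^{\theta,\dot H}_t,Y^y_t)$ denote the controlled state with initial condition $\theta=(d,h,y)$, and let $\cG^{\dot h}$ be its generator under a constant control $\dot h\in\R^n$; a direct computation shows that the right-hand side of~\eqref{PDE}, with $V$ replaced by a smooth $\phi$, equals $\sup_{\dot h\in\R^n}\{\cG^{\dot h}\phi(\theta)+f(\theta,\dot h)\}$. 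The first step is to check that~\eqref{eq:control-e2} fits the framework of~\cite{bouchard.touzi.11}: the coefficients of $Y$ are Lipschitz by hypothesis and those of $(D,H)$ are linear; the control set $\R^n$ is closed; $\cJ(\theta;\cdot)$ is well defined and finite on $\cA_\rho(\theta)$ (after discarding controls with $\cJ=-\infty$, which is harmless since $V>-\infty$); and, crucially, $\cA_\rho(\theta)$ is stable under the two operations the argument needs, namely restriction of an admissible control to a stochastic interval $[0,\tau]$ and concatenation, after $\tau$, with an almost-optimal admissible control for the problem restarted at $\Theta^{\theta,\dot H}_\tau$. This holds because the transversality and integrability requirements in~\eqref{eq:trans} only constrain the tail of $(H,D)$ and are preserved, up to the factor $e^{-\rho\tau}$, under such pasting. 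The resulting weak dynamic programming principle states that, for every $\theta$ and every family of bounded stopping times $\{\tau^{\dot H}\}_{\dot H\in\cA_\rho(\theta)}$,
\[
V(\theta)\le\sup_{\dot H\in\cA_\rho(\theta)}\E\!\left[\int_0^{\tau^{\dot H}}\!\!e^{-\rho t}f\big(\Theta^{\theta,\dot H}_t,\dot H_t\big)\,dt+e^{-\rho\tau^{\dot H}}V^*\big(\Theta^{\theta,\dot H}_{\tau^{\dot H}}\big)\right],
\]
together with the same statement with $V^*$ and $\le$ replaced by $V_*$ and $\ge$.

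\emph{Subsolution property.} Let $\phi$ be smooth and let $V^*-\phi$ attain a strict local maximum equal to $0$ at $\bar\theta$. If the subsolution inequality failed there, then, because $\dot h\mapsto-\frac12\dot h^\top\Lambda\dot h+\dot h^\top(\pa_h\phi+C(h+\pa_d\phi))$ is a strictly concave quadratic with fixed leading matrix $\Lambda\in\Sb_n$—so its maximum is attained and depends continuously on $\theta$—there would be $\delta,r,\kappa>0$ with $\rho\phi(\theta)\ge\cG^{\dot h}\phi(\theta)+f(\theta,\dot h)+\delta$ for all $\theta\in B_r(\bar\theta)$ and all $\dot h\in\R^n$, and $V^*\le\phi-\kappa$ on $\pa B_r(\bar\theta)$. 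Taking $\theta_k\to\bar\theta$ with $V(\theta_k)\to V^*(\bar\theta)=\phi(\bar\theta)$, letting $\tau_k$ be the first exit time of $\Theta^{\theta_k,\dot H}$ from $B_r(\bar\theta)$ capped at a fixed horizon, applying It\^o's formula to $t\mapsto e^{-\rho t}\phi(\Theta^{\theta_k,\dot H}_t)$ for an arbitrary $\dot H\in\cA_\rho(\theta_k)$, and combining with the first weak DPP inequality, one obtains $V(\theta_k)\le\phi(\theta_k)-c_0$ for some $c_0>0$ independent of $k$ and of $\dot H$—the gain coming from the two nonnegative terms $\delta\,\E[\int_0^{\tau_k}e^{-\rho t}\,dt]$ and $\kappa\,\E[e^{-\rho\tau_k}\mathbf 1_{\{\tau_k<\mathrm{cap}\}}]$, whose sum is bounded below uniformly. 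This contradicts $V(\theta_k)-\phi(\theta_k)\to0$.

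\emph{Supersolution property.} Let $V_*-\phi$ attain a strict local minimum equal to $0$ at $\bar\theta$, which (after subtracting a small quadratic bump) may be taken global, and suppose the reverse inequality fails. Then, for the constant maximiser $\dot h_0$ of the bracketed term in~\eqref{PDE} at $\bar\theta$, there would be $\delta,r>0$ with $\rho\phi(\theta)<\cG^{\dot h_0}\phi(\theta)+f(\theta,\dot h_0)-\delta$ on $B_r(\bar\theta)$. The constant control $\dot H\equiv\dot h_0$ lies in $\cA_\rho$, since with a constant control $(D,H)$ grows at most linearly and $Y$ is unaffected, so~\eqref{eq:trans} holds. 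Taking $\theta_k\to\bar\theta$ with $V(\theta_k)\to V_*(\bar\theta)=\phi(\bar\theta)$, running this control up to the exit time $\tau_k$ of $\Theta^{\theta_k,\dot h_0}$ from $B_r(\bar\theta)$ capped at a fixed horizon, applying It\^o, and using $V_*\ge\phi$ on $\overline{B_r(\bar\theta)}$ together with the second weak DPP inequality, one gets $V(\theta_k)\ge\phi(\theta_k)+\delta\,\E[\int_0^{\tau_k}e^{-\rho t}\,dt]\ge\phi(\theta_k)+c_0'$ for some $c_0'>0$ independent of $k$—the lower bound on $\E[\int_0^{\tau_k}e^{-\rho t}\,dt]$ holding because under the fixed control the state stays in $B_r(\bar\theta)$ for a uniformly positive amount of time—again contradicting $V(\theta_k)-\phi(\theta_k)\to0$. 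Since $V$ is locally bounded, $V_*\le V^*$ are finite, and the two properties above identify $V$ as a (possibly discontinuous) viscosity solution of~\eqref{PDE}.

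\emph{Main obstacle.} The viscosity arguments above are essentially routine once the weak DPP is available; the delicate part is establishing that DPP in the present infinite-horizon, discounted setting with the admissibility class~\eqref{eq:trans}. One must verify carefully that restricting and concatenating controls preserves membership in $\cA_\rho$, that $\cJ(\theta;\dot H)$ is unambiguously defined (no $\infty-\infty$) under~\eqref{eq:trans}, and that the conditional value function inherits the measurability required for the Bouchard--Touzi construction, so that their machinery carries over without change.
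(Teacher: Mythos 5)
Your proposal takes essentially the same route as the paper: it invokes the weak dynamic programming principle of Bouchard--Touzi (the paper's citation of their Remark~3.11) and then carries out the standard sub- and supersolution arguments (the paper's citation of their Corollary~5.6), which the paper itself does not reproduce. The only cosmetic discrepancy is that the paper states the second weak-DPP inequality with an arbitrary upper semicontinuous minorant $\phi$ of $V$ rather than with $V_*$, but in your supersolution step you in fact use the test function as the minorant on the relevant ball, so the argument is unaffected.
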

  
\begin{proof}
{See Appendix~\ref{sec:app}}.
\end{proof}  

Using our transversality conditions \eqref{eq:trans}, Lemma \ref{lem:conc} in the appendix shows that the value function is indeed locally bounded under a condition on the model parameters, which sharpens \cite[Lemma 1, Equation (A.27)]{garleanu.pedersen.16}. For constant covariance matrices $\Sigma$, this condition is satisfied in particular if the discount rate $\rho$ is sufficiently small, compare Remark~\ref{rem:para}. It also holds automatically if the transient price impact $C$ is sufficiently small or the resilience parameter $R$ is sufficiently large. This applies, in particular, in the large-liquidity regime that we turn to now.

\section{Main results}\label{sec:main}

\subsection{Large-Liquidity Regime}\label{ss:lle}

Beyond linear state dynamics, the frictional dynamic programming equation \eqref{asm:frictional.pde} only allows to characterize the corresponding value function and optimal policy through the solution of a coupled system of nonlinear equations~\cite{garleanu.pedersen.16}. To shed more light on the qualitative and quantitative properties of the optimal policy and its performance, we therefore perform a large-liquidity expansion around the frictionless case. To wit, we assume that i) the temporary quadratic trading cost $\Lambda$ is small, ii) the permanent price impact $C$ is small, and iii) the mean-reversion speed $R$ towards the unaffected prices is large. 

To study how all three of these liquidity parameters influence the solution, we study the following ``critical regime'',\footnote{This scaling is chosen so that neither of the frictions dominates the other in the limit. If one of them would be sent to zero faster, then only the effects of the other would remain visible asymptotically. Similar ``matched asymptotics'' for small transaction costs and large risk aversion are studied in \cite{barles.soner.98}.} where none of them scales out as the asymptotic parameter $\e$ becomes small,\footnote{This slight abuse of notation is made to emphasize that the matrices $\Lambda$, $C$, $R$ are replaced by their rescaled versions~\eqref{eq:scaling} from now on. The scaling of the permanent impact parameter and the resilience speed are reminiscent of the high-resilience asymptotics in \cite{roch.soner.13,kallsen.muhlekarbe.14} which suggest that -- in related models with purely transient price impact -- $C/R$ should have an effect of the same order as $\Lambda$.}
\begin{align}\label{eq:scaling}
\Lambda=\e^2\Lambda,\qquad C=\e C,\qquad R=\e^{-1}R.
\end{align}

In the large-liquidity limit $\varepsilon \sim 0$, we obtain a first-order expansion of the corresponding frictional value function $V^\e$ (Theorem~\ref{thm:expansion}) and a corresponding asymptotically optimal policy (Theorem~\ref{thm:policy}). Before stating these results, we first introduce the regularity conditions  we require for our rigorous convergence proofs as well as the quantities that appear in the leading-order approximations.

\subsection{Inputs for the Expansion}\label{ss.expansionterms}

Our asymptotic expansion requires the following integrability and smoothness assumptions on the market and cost parameters, which are evidently verified for the model with mean-reverting returns from Example~\ref{ex:OU}, for example.

\begin{asm}\label{assume-expansion}
\begin{enumerate}
\item[(i)]There exists $\underline m>0$ such that for all $(\xi_1,\xi_2,y)\in \R^n\times\R^n\times D$ we have 
\begin{align}\label{eq:degenere-source}
&\frac{\gamma \xi_2^\top\Sigma (y)\xi_2}{2}+R \xi_1^\top C^{-1} \xi_1\geq \underline m|\xi|^2.
\end{align}
\item[(ii)] The following functions are locally bounded on $\cD$,
\begin{align}
\notag&M_\Sigma: y \mapsto  \int_0^\infty e^{-\rho t}\sup_{0\leq s\leq t}\E\left[1+|\Sigma_s|^4+|\cL^Y \Sigma_s|^2 +|(\sigma_Y^\top \pa_y \Sigma)_s|^{4} \right] dt,\\
&M_\cM: y \mapsto  \int_0^\infty e^{-\rho t}\sup_{0\leq s\leq t}\E\left[ 1+ |\cL^Y\cM_s|^{4}+|(\sigma_Y^\top \pa_y \cM)_s|^{4}\right] dt.\notag
\end{align}
\end{enumerate}
\end{asm}

Similarly to \cite{moreau.al.15} and \cite{soner.touzi.13}, the dependence of our expansion on the deviation from the Merton portfolio is described by the solution of the so-called ``first corrector equation'', cf.~\eqref{eq:fst}. With our quadratic costs, this equation can be solved using an algebraic Riccati equation.\footnote{Its explicit solution in the one-dimensional case is discussed in Section~\ref{ss:1d}.} Note that this solution and in turn our leading-order expansions do not depend on the discount rate $\rho$. This time-preference parameter therefore becomes negligible for the asymptotically optimal trading rates in Theorem~\ref{thm:expansion} and only appears as a scaling parameter in the value expansion from Theorem~\ref{thm:policy}, compare Lemma~\ref{lem:feynmann}.

\begin{lem}\label{lem:ricatti}
Suppose Assumption \ref{assume-expansion} is satisfied and define the $\Sb_{2n}$-valued matrices
\begin{align}\label{eq:large-dim}
 \Gamma:=\left( \begin{array}{cc}
 -RI_n&0 \\
  0& 0\end{array} \right), \quad
  \Psi(y):=\left( \begin{array}{cc}
 2RC^{-1}&0 \\
  0& \gamma\Sigma(y) \end{array} \right).
\end{align}
Then, there exists $\d_0>0$ such that for each $y\in \cD$ the matrix-valued Riccati equation 
\begin{align}\label{eq:matrixriccati}
-\Psi(y)-\Gamma  A(y)- A(y)\Gamma+  A(y) \left( \begin{array}{c}
 C \\
   I_n\end{array} \right) \Lambda^{-1}\left(C,  I_n \right)  A(y)=0
\end{align}
has a maximal solution $$ A(y)=\left( \begin{array}{cc}
 A_1(y)& A_{12}(y) \\
   A_{12}^\top(y)&  A_2 (y)\end{array} \right)\in \Sb_{2n}$$  
  for which the corresponding quadratic form
\begin{align}\label{def:hatvarpi}
\varpi(\xi,y) \mapsto\frac{1}{2}\xi^\top  A(y) \xi.
\end{align}
satisfies the following upper and lower bounds,
 \begin{align}\label{lower-bound-varpi}
 \sqrt{1+|y|^2}\frac{|\xi|^2}{\d_0}\geq \varpi(\xi,y)\geq \d_0|\xi|^2, \quad \mbox{for all $\xi\in\R^{2 n}$ and $y\in \cD$.}
 \end{align}
 \end{lem}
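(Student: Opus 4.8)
The plan is to treat \eqref{eq:matrixriccati} as an algebraic Riccati equation (ARE) associated with a linear-quadratic control problem and to leverage the standard theory of AREs, where the matrix $\Psi(y)$ plays the role of the state cost, $\Gamma$ the drift, and $B := \left(\begin{smallmatrix} C \\ I_n\end{smallmatrix}\right)$ together with $\Lambda$ the control cost. The existence of a maximal symmetric solution follows from the classical Riccati theory once two conditions are verified: (a) the pair $(\Gamma, B)$ is stabilizable, and (b) the pair $(\Gamma, \Psi(y)^{1/2})$ is detectable (or, even more simply, that $\Psi(y)$ is positive definite, which we have here). Since $\Psi(y) \in \Sb_{2n}$ is strictly positive definite for every $y$ — its lower-right block is $\gamma\Sigma(y) > 0$ and its upper-left block is $2RC^{-1} > 0$ — detectability is immediate; and since $B$ has full column rank $n$, with $\Lambda^{-1}$ positive definite, controllability of the reachable subspace together with the fact that $\Gamma$ restricted to the unreachable directions is $0$ (hence not anti-stable) gives stabilizability. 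Thus a maximal solution $A(y) = A(y)^\top$ exists; its positive semidefiniteness — in fact positive definiteness — follows because $\Psi(y) > 0$ forces the value of the associated LQ functional to be coercive in the initial state.

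The second task is the locally uniform two-sided bound \eqref{lower-bound-varpi}. For the \emph{lower} bound $\varpi(\xi,y) \geq \d_0 |\xi|^2$, I would read off from the Riccati equation that $A(y)$ dominates the solution of the corresponding Lyapunov equation with the drift $\Gamma$ and source $\Psi(y)$: rearranging \eqref{eq:matrixriccati} gives $-\Gamma A - A\Gamma^\top = \Psi(y) - A B \Lambda^{-1} B^\top A$, but more usefully one compares the LQ value functions: the quadratic form $\varpi(\cdot,y)$ is the value of minimizing $\int_0^\infty (z_t^\top \Psi(y) z_t + u_t^\top \Lambda u_t)\,dt$ over controlled trajectories $\dot z_t = \Gamma z_t + B u_t$, $z_0 = \xi$. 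Since $\Psi(y) \geq \lambda_{\min}(\Psi(y)) I_{2n}$ and, crucially, $\Gamma$ is \emph{negative} semidefinite (its eigenvalues are $-R$ and $0$), the trajectory cannot decay faster than under $\dot z = \Gamma z$ when $u \equiv 0$; a short Gr\"onwall argument bounds $\int_0^\infty |z_t|^2 dt$ from below by a constant times $|\xi|^2$ whenever the control cost is finite, and one checks the zero-control trajectory already has finite cost on the $0$-eigenspace only after noting the structure — actually the cleaner route is: any admissible $(z,u)$ with finite cost has $\int_0^\infty u_t^\top \Lambda u_t\,dt < \infty$, and the component of $z$ in the uncontrolled ($\Gamma = 0$) directions is then constant plus an $L^2$ perturbation, so finiteness of $\int |z|^2$ forces that constant to be controlled by $|\xi|$ up to the $L^2$-norm of $u$; combining gives $\int (z^\top\Psi z + u^\top\Lambda u) \geq \d_0|\xi|^2$ uniformly in $y$ because $\lambda_{\min}(\Psi(y))$ is bounded below by $\min(2R\lambda_{\min}(C^{-1}), \gamma\, \underline m)$ using Assumption~\ref{assume-expansion}(i) — indeed \eqref{eq:degenere-source} is exactly the statement $\Psi(y) \geq \underline m\, I_{2n}$ after accounting for the factor structure, so $\d_0$ can be taken independent of $y$.

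For the \emph{upper} bound $\varpi(\xi,y) \leq \d_0^{-1}\sqrt{1+|y|^2}\,|\xi|^2$, I would exhibit a single feedback control that stabilizes the system and bound its cost. Because $(\Gamma, B)$ is stabilizable with $\Gamma \leq 0$, one can take the explicit feedback that, say, leaves the $0$-eigenspace directions driven to zero at a fixed exponential rate; the resulting closed-loop cost is a fixed quadratic form in $\xi$ whose coefficient depends on $y$ only through $\Psi(y)$, hence through $|\Sigma(y)|$. The issue is that $|\Sigma(y)|$ need only be \emph{locally} bounded (Assumption~\ref{assume-expansion}(ii) controls integrated moments, not pointwise size), so a priori $\lambda_{\max}(\Psi(y))$ grows in $y$. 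The bound $\sqrt{1+|y|^2}$ on the right-hand side is presumably obtained by invoking the Lipschitz property of $\sigma_Y$ (hence linear growth of $\sigma$, hence at-most-quadratic growth of $\Sigma$, giving $|\Sigma(y)| \lesssim 1 + |y|^2$) — wait, that would give $1+|y|^2$, not its square root; more likely the relevant growth is that of $\sigma$ itself being linear so $|\Sigma(y)|^{1/2} \lesssim 1+|y|$, and the maximal solution of the ARE scales like the square root of the state cost in the control-cost-dominated directions. I would make this precise by a scaling/monotonicity argument: $A(y)$ is monotone in $\Psi(y)$, and for $\Psi = s\Psi_0$ with $s$ large the maximal Riccati solution grows like $\sqrt{s}$ (balancing the quadratic term $A B\Lambda^{-1}B^\top A$ against $\Psi$), which delivers the $\sqrt{1+|y|^2}$ rate. \textbf{The main obstacle} I anticipate is precisely this upper bound: getting the \emph{sharp} $\sqrt{1+|y|^2}$ growth (rather than a crude $1+|y|^2$) requires exploiting the quadratic structure of the ARE carefully — a linear (Lyapunov) comparison is too lossy — and handling the block where $\Gamma = 0$, where stabilization must come entirely from the control through $B$, so one must check that $B$'s second block $I_n$ indeed renders those directions controllable with cost of the right order. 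Continuity of $A(y)$ in $y$, needed implicitly for later sections, follows from continuity of $\Psi(y)$ together with the local uniqueness of the maximal ARE solution (via the implicit function theorem applied to the Riccati map, whose linearization is invertible at a stabilizing solution).
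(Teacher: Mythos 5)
Your plan for existence and positivity — treat \eqref{eq:matrixriccati} as an algebraic Riccati equation, verify stabilizability of $(\Gamma,\hat C)$ and positivity of the state cost $\Psi(y)$, then invoke classical ARE theory — is essentially what the paper does: it notes $\Gamma-\hat C\hat C^\top<0$ to get stabilizability and cites Ran and Vreugdenhil (1988) for existence and positivity of a maximal solution. One flaw in how you argue stabilizability: ``$\Gamma$ restricted to the unreachable directions is $0$, hence not anti-stable'' is \emph{not} a sufficient criterion, because stabilizability requires uncontrollable modes to lie strictly in the open left half-plane, and $0$ sits on the imaginary axis. You are rescued by the fact that $(\Gamma,\hat C)$ is actually fully controllable here (the controllability matrix already contains $(C,I_n)^\top$ and $\Gamma\hat C=(-RC,0)^\top$, which span $\R^{2n}$ since $C$ is invertible), so there are no uncontrollable modes; but the reasoning as written would be false in a genuinely degenerate setting.

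For the lower bound, the two proofs diverge. You use the LQ value-function interpretation and a coercivity/Gr\"onwall sketch. The paper instead uses the comparison theorem of Ran and Vreugdenhil for Popov matrices: it shows the block matrix $\bigl(\begin{smallmatrix}\Psi & \Gamma\\ \Gamma & -\hat C\Lambda^{-1}\hat C^\top\end{smallmatrix}\bigr)$ dominates $\bigl(\begin{smallmatrix}\delta I & 0\\ 0 & -\lambda I\end{smallmatrix}\bigr)$ for $\delta$ small and $\lambda$ large (uniformly in $y$, thanks to $\Psi(y)\geq 2\underline m I_{2n}$ from Assumption~\ref{assume-expansion}(i)), and concludes $A(y)\geq\sqrt{\delta/\lambda}\,I_{2n}$, the maximal solution of the simple scalar Riccati equation $-\delta I+\lambda A^2=0$. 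This gives the uniform $\delta_0$ directly and with no analysis of trajectories. Your LQ route can be made rigorous via a completing-the-square identity $\int(\sqrt{\psi}z+\sqrt{\lambda}\dot z)^2\,dt-2\sqrt{\psi\lambda}\int z\dot z\,dt\geq\sqrt{\psi\lambda}|\xi|^2$ on each one-dimensional block, but your ``constant plus $L^2$ perturbation'' sketch as written does not yet close, since finiteness of the control cost does not by itself prevent $z$ from being driven rapidly toward $0$.

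For the upper bound $\varpi(\xi,y)\leq \delta_0^{-1}\sqrt{1+|y|^2}|\xi|^2$, you correctly flag this as the hard part, correctly identify $|\Sigma(y)|\lesssim(1+|y|)^2$ from the Lipschitz hypothesis on $\sigma_Y$ as the source of the growth, and correctly suspect a square-root scaling of $A$ in $\Psi$. Your worry about the rank degeneracy of $\hat C\Lambda^{-1}\hat C^\top$ — that a naive $\sqrt{s}$ scaling cannot balance the quadratic term against $\Psi$ uniformly — is legitimate, but it can be resolved by reading off the blocks of \eqref{eq:matrixriccati}: the $(1,1)$ block gives $(A_1C+A_{12})\Lambda^{-1}(A_1C+A_{12})^\top=2R(C^{-1}-A_1)\geq 0$, hence $A_1\leq C^{-1}$ is bounded; the $(2,2)$ block gives $(A_{12}^\top C+A_2)\Lambda^{-1}(A_{12}^\top C+A_2)^\top=\gamma\Sigma(y)$, hence $|A_{12}^\top C+A_2|\lesssim\sqrt{|\Sigma(y)|}$; and positive definiteness of $A$ then yields $|A_{12}|\lesssim|A_2|^{1/2}$, closing a self-improving bound $|A_2|\lesssim 1+\sqrt{|\Sigma(y)|}\lesssim 1+|y|$. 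It is worth noting that the paper's own printed proof stops after establishing the lower bound and never addresses the upper bound in \eqref{lower-bound-varpi} at all; so the difficulty you ran into is real and is a genuine gap you should not feel bad about failing to fill blind.
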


 \begin{proof}
See Appendix~\ref{sec:app}.
 \end{proof}

 The last assumption for our value expansion in Theorem~\ref{thm:expansion} is a comparison principle for a linear PDE.
 
\begin{asm}\label{assume-comparison}
Comparison holds for the \emph{second corrector equation}
\begin{align}\label{corrector2} 
\rho u(y) &=\cL^Y u(y)  +a(y).
\end{align}
among viscosity semisolutions $\phi$ satisfying,
\begin{align}\label{eq:comparison-class}
c(1+|\Sigma (y)|^2+M^*_\Sigma(y)+M^*_\cM(y) )\geq \phi(y) \geq  0, \quad \mbox{for some $c>0$.}
\end{align}
Here, the source term is:\footnote{The linear PDE~\eqref{corrector2} corresponds to the ``second corrector equation'' of Soner and Touzi~\cite{soner.touzi.13}; accordingly, the source term $a$ is the principal component of the value expansion \eqref{eq:exp}.}
 \begin{align}\label{eq:a}
 a(y):=\frac{1}{2}\mathrm{Tr}\left(c_\cM(y) A_2(y)\right), \quad \mbox{where } c_\cM(Y_t):=\frac{d\langle \cM(Y_t)\rangle}{dt} \in \Sb_{n}
 \end{align}
is the (infinitesimal) quadratic variation of the Merton portfolio. 
\end{asm}

In view of the positivity of $A_2$ (cf.~Lemma~\ref{lem:ricatti}), the following probabilistic representation of the function $u$ immediately shows that it is nonnegative.

\begin{lem}\label{lem:feynmann} 
Suppose Assumptions \ref{assume-expansion} and \ref{assume-comparison} hold and the function
\begin{align}\label{eq:fku}
u(y)=\E\left[\int_0^\infty e^{-\rho t}a(Y^y_t) dt\right]. 
\end{align}
satisfies \eqref{eq:comparison-class}. Then $u$ is the unique continuous viscosity solution of \eqref{corrector2} for $a$ defined in \eqref{eq:a}. 
 \end{lem}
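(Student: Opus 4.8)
The plan is to verify that the function $u$ defined by the Feynman–Kac formula \eqref{eq:fku} is well-defined, solves \eqref{corrector2} in the viscosity sense, and lies in the comparison class, so that Assumption~\ref{assume-comparison} pins it down as the unique such solution.

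\textbf{Step 1: Well-posedness and bounds.} First I would check that the source term $a$ defined in \eqref{eq:a} is controlled by the functions appearing in \eqref{eq:comparison-class}. Since $a(y)=\tfrac12\mathrm{Tr}(c_\cM(y)A_2(y))$ and $c_\cM$ is the infinitesimal quadratic variation $(\sigma_Y^\top\partial_y\cM)(\sigma_Y^\top\partial_y\cM)^\top$ of the Merton portfolio, one has $|a(y)|\le c\,|A_2(y)|\,|(\sigma_Y^\top\partial_y\cM)(y)|^2$. Using the upper bound $\varpi(\xi,y)\le \d_0^{-1}\sqrt{1+|y|^2}|\xi|^2$ from Lemma~\ref{lem:ricatti}, $|A_2(y)|\le |A(y)|\le c\sqrt{1+|y|^2}$. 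Combining this with the definition of $M_\cM$ in Assumption~\ref{assume-expansion}(ii) and the Markov/flow estimates on $Y^y$, the expectation $\E[a(Y^y_t)]$ is integrable against $e^{-\rho t}$, so $u$ is finite and locally bounded; moreover $u\ge 0$ because $c_\cM\ge 0$ and $A_2\ge 0$ (again Lemma~\ref{lem:ricatti}). Here the hypothesis of the lemma already assumes $u$ satisfies \eqref{eq:comparison-class}, so I only need to confirm nonnegativity and finiteness, which is immediate from the representation.

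\textbf{Step 2: $u$ is a viscosity solution of \eqref{corrector2}.} I would use the flow property of the diffusion $Y^y$: for any $h>0$, conditioning on $\cF_h$ and using the Markov property gives the dynamic programming identity
\begin{align}\label{eq:dpp-u}
u(y)=\E\left[\int_0^h e^{-\rho t}a(Y^y_t)\,dt + e^{-\rho h}u(Y^y_h)\right].
\end{align}
Continuity of $u$ follows from continuity of $y\mapsto Y^y_t$ in $L^2$, dominated convergence (using the integrable bounds from Step 1), and the local boundedness already granted. Then, by the standard argument (e.g.\ as in Fleming–Soner), \eqref{eq:dpp-u} together with Itô's formula applied to a smooth test function $\phi$ touching $u$ from above/below at $y$ yields the viscosity subsolution/supersolution inequalities for $\rho u = \cL^Y u + a$. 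Since $\cL^Y$ is a linear operator and $Y$ is a genuine diffusion, no extra care beyond the classical verification is needed; the twice-differentiability and Lipschitz hypotheses on $\mu_Y,\sigma_Y$ from Section~\ref{sec:model} ensure the required moment estimates.

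\textbf{Step 3: Uniqueness.} Since $u$ has been shown to be a continuous viscosity solution of \eqref{corrector2} lying in the class \eqref{eq:comparison-class} (continuous, nonnegative, dominated by $c(1+|\Sigma|^2+M^*_\Sigma+M^*_\cM)$), Assumption~\ref{assume-comparison} applies: any other viscosity semisolution in the same class must coincide with $u$. In particular $u$ is \emph{the} unique continuous viscosity solution, which is the claim.

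The main obstacle is Step 2 — specifically establishing enough regularity/integrability to legitimize the dynamic programming identity \eqref{eq:dpp-u} and the passage to the viscosity inequalities, since the source term $a$ grows (the bound in \eqref{eq:comparison-class} is only polynomial-type via $M_\Sigma, M_\cM$) rather than being bounded. The integrability hypotheses in Assumption~\ref{assume-expansion}(ii), together with the assumed membership of $u$ in \eqref{eq:comparison-class}, are exactly what is needed to push this through via dominated convergence; once that is in place, the viscosity solution property and uniqueness are routine.
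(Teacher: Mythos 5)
Your overall structure (show the Feynman--Kac representation is a viscosity solution lying in the class \eqref{eq:comparison-class}, then invoke Assumption~\ref{assume-comparison} for uniqueness) is the same as the paper's, but the route through Step 2 differs in a way that matters. The paper never proves continuity of $u$ directly: it notes that the generator of \eqref{corrector2} is continuous, establishes that the upper envelope $u^*$ is a viscosity subsolution and the lower envelope $u_*$ a viscosity supersolution (by the same weak dynamic programming argument as in the proof of Proposition~\ref{asm:frictional.pde}, which only needs local boundedness of $u$), and then obtains continuity \emph{as a consequence} of comparison, since $u^*\leq u_*\leq u^*$. You instead try to prove continuity of $u$ up front, via $L^2$-continuity of the flow and ``dominated convergence (using the integrable bounds from Step 1)'', and then run the classical DPP/It\^o argument for continuous viscosity solutions.

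That continuity claim is the gap. To pass to the limit in $\E[\int_0^\infty e^{-\rho t}a(Y^{y_n}_t)dt]$ along $y_n\to y$ you need a dominating (or uniformly integrable) bound for the family $a(Y^{y_n}_t)$ that is \emph{locally uniform in the initial condition}, and the quantities in Assumption~\ref{assume-expansion}(ii) do not hand you one: $M_\Sigma$ and $M_\cM$ control moments of $\Sigma$, $\cL^Y\cM$ and $\sigma_Y^\top\pa_y\cM$ along the path, while your bound $|a(y)|\leq c\,|A_2(y)|\,|(\sigma_Y^\top\pa_y\cM)(y)|^2$ brings in $|A_2(Y_t)|$, which via Lemma~\ref{lem:ricatti} you only control by $c\sqrt{1+|Y_t|^2}$ --- and moments of $|Y_t|$ itself (let alone mixed moments with $|\sigma_Y^\top\pa_y\cM|^2$) are nowhere assumed. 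So ``dominated convergence'' is invoked without an actual dominant, and this is precisely the delicate point the paper's envelope-plus-comparison argument is designed to avoid. The fix is either to supply the missing locally uniform integrability estimate (which would require strengthening the hypotheses or a finer bound on $A_2$ in terms of $\Sigma$), or, more simply, to drop the direct continuity claim and argue as the paper does: work with the semicontinuous envelopes $u^*,u_*$, note they inherit the bound \eqref{eq:comparison-class}, establish their sub-/supersolution properties via the weak DPP, and let Assumption~\ref{assume-comparison} deliver both continuity and uniqueness in one stroke. Your Steps 1 and 3 are fine as stated.
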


 \begin{proof}
See Appendix~\ref{sec:app}.
 \end{proof}
 
 \begin{example}\label{ex:riccati}
 For the model with mean-reverting returns from Example~\ref{ex:OU}, Assumptions~\ref{assume-expansion} and \ref{assume-comparison} are satisfied, compare Appendix~\ref{s:app3}.
 \end{example}

 \subsection{Value Expansion}
 For all $\e>0$, denote by $\cJ^\e$ and $V^\e$ the mean-variance criterion and the value function corresponding to the asymptotic regime~\eqref{eq:scaling}. We are now ready to state our first main result, the large-liquidity expansion of the frictional value function.

 \begin{thm}\label{thm:expansion}
 Suppose Assumptions \ref{finite-V0}, \ref{assume-expansion}, and \ref{assume-comparison} are satisfied and the mapping $y\to A(y)$ from Lemma \ref{lem:ricatti} has locally bounded second-order derivatives. Then the frictional value function has the following expansion as $\e \to 0$, locally uniformly in $(d,h,y)\in \R^n\times\R^n\times\cD$,
 \begin{align}\label{eq:exp}
  &V^\e(\e d,h,y)\notag\\
   &=V^0(y)-\e\left(u(y)+h^\top d-\frac{d^\top C^{-1}d}{2}\right)-\e^2 \varpi\left(\frac{d}{\e^{1/2}},\frac{h-\cM(y)}{\e^{1/2}},y\right)+o(\e),\\
  &:=\hat{V}^\e(\e d,  h, y)+o(\e),\label{eq:defhatV}
 \end{align}
 with the functions $u$ and $\varpi$ from Lemmas~\ref{lem:feynmann} and~\ref{lem:ricatti}.
 \end{thm}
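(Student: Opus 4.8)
The plan is to prove the expansion \eqref{eq:exp} by the standard viscosity-solution machinery of \cite{soner.touzi.13,altarovici.al.15,moreau.al.15}, but carried out on a renormalized version of the value function. First I would introduce the remainder
\[
 u^\e(d,h,y):=\frac{\hat V^\e(\e d,h,y)-V^\e(\e d,h,y)}{\e},
\]
where $\hat V^\e$ is the candidate expansion from \eqref{eq:defhatV}; the claim is equivalent to $u^\e\to 0$ locally uniformly. The key preliminary input is a locally uniform bound $0\le u^\e\le C(1+|\Sigma(y)|^2+M^*_\Sigma(y)+M^*_\cM(y))$, which is exactly what the estimates of Section~\ref{s.estimates} are designed to deliver (this is the step where the ``renormalization'' and the new methods for the scaled deviation are needed, since, unlike in the purely-temporary-cost case, $V^\e$ is no longer dominated by $V^0$). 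Granting such a bound, one defines the upper and lower relaxed semilimits
\[
 \bar u(d,h,y):=\limsup_{\e\to 0,\ (d',h',y')\to(d,h,y)}u^\e(d',h',y'),\qquad
 \underline u(d,h,y):=\liminf_{\e\to 0,\ (d',h',y')\to(d,h,y)}u^\e(d',h',y'),
\]
which are finite and lie in the comparison class \eqref{eq:comparison-class}.

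Next I would plug a smooth test function into the frictional dynamic programming equation \eqref{PDE}, expand in powers of $\e^{1/2}$ around $V^0$ using \eqref{eq:V0}, and collect terms. The order-$\e^{1/2}$ and order-$\e$ terms force, respectively, the first-corrector identity — which is precisely the matrix Riccati equation \eqref{eq:matrixriccati} satisfied by $A(y)$, so that the supremum over $\dot h$ in \eqref{PDE} is resolved by the quadratic form $\varpi$ — and, after this cancellation, the linear ``second corrector equation'' \eqref{corrector2} with source $a(y)=\tfrac12\mathrm{Tr}(c_\cM(y)A_2(y))$ emerging from the It\^o correction term $\tfrac12\mathrm{Tr}(c_\cM\,\pa^2_{\xi\xi}\varpi)$ along the Merton dynamics. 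The rescaling $\xi=((h-\cM(y))/\e^{1/2},\,d/\e^{1/2})$ is exactly what makes the quadratic penalty and the risk term appear at the same order, which is why the nondegeneracy bound \eqref{eq:degenere-source} in Assumption~\ref{assume-expansion}(i) is invoked. This viscosity-solution argument (carried out rigorously in Sections~\ref{s.semilimits}--\ref{s.corrector.eq}, using that $y\mapsto A(y)$ has locally bounded second derivatives to control the error terms) shows that $\bar u$ is a viscosity subsolution and $\underline u$ a viscosity supersolution of \eqref{corrector2}.

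Then I would invoke Assumption~\ref{assume-comparison}: comparison for \eqref{corrector2} within the class \eqref{eq:comparison-class} gives $\bar u\le\underline u$, while $\bar u\ge\underline u$ holds by construction, so $\bar u=\underline u=:u$ is continuous and, by Lemma~\ref{lem:feynmann}, equals the Feynman–Kac expression \eqref{eq:fku}. Equality of the two semilimits is equivalent to the locally uniform convergence $u^\e\to u$, which is \eqref{eq:exp}. The dependence of $u$ on the auxiliary variables $(d,h)$ must also be pinned down — one shows $u$ is in fact independent of $(d,h)$; this is the content of Section~\ref{s.first}, where one differentiates (in the viscosity sense) in $d$ and $h$ and uses the structure of the equation to conclude the $(d,h)$-gradient vanishes, leaving the stated $-\e(u(y)+h^\top d-\tfrac12 d^\top C^{-1}d)$ as the full first-order term.

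I expect the main obstacle to be establishing the locally uniform a priori bounds on $u^\e$ (Section~\ref{s.estimates}). In the purely-temporary-cost setting one exploits $V^\e\le V^0$ and a nondegeneracy of the relevant PDE to trap the rescaled deviation; here neither is available, because the price-distortion state $d$ can be used to extract value, so $V^\e$ may exceed $V^0$, and the limiting control problem is genuinely higher-dimensional. One must instead build explicit super- and subsolutions of \eqref{PDE} — the super-solution essentially $\hat V^\e$ plus a controlled error, the sub-solution obtained by feeding a near-optimal feedback control (tracking $\cM$ at the Riccati-prescribed rate while exploiting $d$) into $\cJ^\e$ and estimating the resulting expectations via the moment functions $M_\Sigma,M_\cM$ of Assumption~\ref{assume-expansion}(ii) together with the transversality class \eqref{eq:trans}. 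Getting these bounds to be \emph{locally uniform in} $(d,h,y)$, with the right $\e$-powers, is the delicate part; everything downstream is then a fairly mechanical, if lengthy, application of the relaxed-semilimits method.
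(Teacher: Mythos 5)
Your outline reproduces the paper's strategy faithfully: renormalize the value function (the shift by $h^\top d-\tfrac12 d^\top C^{-1}d$ is exactly what makes the estimates of Section~\ref{s.estimates} work), establish locally uniform a priori bounds, form relaxed semilimits, derive the first and second corrector equations as viscosity (sub/super)solution properties, invoke comparison from Assumption~\ref{assume-comparison}, and then pin down the $(d,h)$-dependence separately. Two details are stated imprecisely, though. First, the a priori bound must allow growth in $(d,h)$: the actual estimate (Proposition~\ref{local-bound-ratios}) is $0\leq u^\e\leq c\bigl(1+|d|^4+|h-\cM(y)|^4+|\Sigma|^2+M^*_\Sigma+M^*_\cM\bigr)$, and it is obtained probabilistically (via Lemma~\ref{lem:frictional-expansion} for the lower bound and by plugging the explicit suboptimal feedback \eqref{eq:constcontrol} into $\cJ^\e$ and invoking Proposition~\ref{prop:expansion} for the upper bound), not by constructing smooth super/subsolutions of \eqref{PDE}. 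Second, Section~\ref{s.first} does not ``show the $(d,h)$-gradient vanishes''; rather, it shows that $u^*$ and $u_*$ are viscosity semisolutions of the \emph{nonlinear} first-order PDE \eqref{eq:eikhat} in $(d,h)$, which has the smooth solution $u(y)+\varpi(d,h-\cM(y),y)$, and then establishes a partial comparison principle for that equation (Lemma~\ref{lem:compeik}, after a transformation restoring properness) -- that is how the $(d,h)$-profile is identified, not by a direct gradient argument. Also, in the generator expansion (Proposition~\ref{thm:remainder}) both corrector identities appear at order $\e$ (the order-$\e^{1/2}$ term is a cross-term that must be shown to vanish), rather than the first corrector coming at order $\e^{1/2}$. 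None of these affect the validity of the overall plan, but they are exactly the places where a naive implementation would get stuck.
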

 
 \begin{proof}
 The main steps for the proof of this result are outlined in Section~\ref{ss.method} and subsequently carried out in the remainder of Section~\ref{s.estimates} as well as Sections~\ref{s.semilimits}, \ref{s.corrector.eq}, and \ref{s.first}.
 \end{proof}
 
The value expansion~\eqref{eq:exp} has two components, one stemming from dynamic trading over time and the other one from the initial conditions of the system. 

The ``dynamic component'' described by the function $u$ is similar to the corresponding expansions for models with only temporary trading costs. Indeed, the probabilistic representation \eqref{eq:fku} and \eqref{eq:a} show that the frictionless target strategy only enters through its (infinitesimal) quadratic variation here, just as for models with quadratic, proportional, or fixed costs, cf.~\cite{moreau.al.15} and the references therein. Whence, this ``portfolio gamma'' or ``activity rate'' is the crucial sensitivity of trading strategies with respect to small frictions also in the present setting where part of their effect only wears off gradually. The quadratic variation of the Merton portfolio is multiplied by the positive-definite matrix $A_2$ determined from the Riccati equation~\eqref{eq:matrixriccati}. If the resilience $R$ becomes large compared to the price impact parameters $\Lambda$ and $C$, one readily verifies that $A_2$ converges to the solution of the matrix equation $\gamma\Sigma=A_2 \Lambda^{-1} A_2$, that is 
\begin{equation}\label{eq:A2}
A_2=\Lambda^{1/2}(\Lambda^{-1/2}\gamma\Sigma\Lambda^{-1/2})^{1/2}\Lambda^{1/2}+o(1), \quad \mbox{as $R \to \infty$.}
\end{equation}
Whence, as resilience grows, temporary trading costs become the dominant friction and $A_2$ recovers the factor for purely temporary quadratic costs \cite[Remarks 4.5 and 4.6]{moreau.al.15}.\footnote{The same quantity also appears in a model with constant relative risk aversion, see \cite[Theorem 5]{guasoni.weber.13}.} The other comparative statics of this term are discussed in more detail in Section~\ref{ss:1d} for the one-dimensional case, where the Riccati equation~\eqref{eq:matrixriccati} can be solved explicitly.  

In addition to the dynamic component discussed so far, the value expansion~\eqref{eq:exp} also includes several terms that depend on the initial conditions. The quadratic form $\varpi$ is similar to its counterpart for purely temporary costs \cite[Theorem 4.3]{moreau.al.15}, in that it penalizes squared deviations of the initial portfolio from the frictionless optimum. Here, however, the initial distortion of the prices relative to their unaffected values also comes into play. In particular, the terms depending on the initial positions and displacements may have either a positive or a negative sign, unlike for purely temporary costs. The intuition is that a very large initial risky position may become favorable if the initial displacement is negative enough. Indeed, the mean reversion of the affected price to its unaffected value then leads to substantial extra positive returns, that may dominate the performance of the frictionless optimizer. However, these anomalies disappear if the initial price distortion is small enough.\footnote{The initial conditions would also disappear in the long-run limit if the discounted infinite-horizon criterion \eqref{eq:control-e} would be replaced by an ergodic goal functional as in \cite{dumas.luciano.91,gerhold.al.14,guasoni.mayerhofer.16}.}

 \subsection{Almost-Optimal Policy}
 
As our second main result, we now provide a family of ``almost-optimal'' policies $(\dot{H}^\e)_{\e>0}$ that achieves the leading-order optimal performance in the value expansion~\eqref{eq:exp}. To guarantee the admissibility of these policies, the following additional assumption is required.
 
   \begin{asm}\label{assume-admis}
There exists $\d_1>0$ such that the solution of the matrix Riccati equation \eqref{eq:matrixriccati} satisfies
$$\cL^YA(y)\leq (\rho-\d_1)A(y), \quad  \mbox{ for all $y\in \cD$}.$$
\end{asm}

\begin{example}
For the model with mean-reverting returns from Example~\ref{ex:OU}, the solution $A$ of \eqref{eq:matrixriccati} does not depend on the frictionless state variable $y$. Whence, Assumption~\ref{assume-admis} is satisfied for any strictly positive discount rate $\rho$ in this case. 
\end{example}

Using the solution to the Riccati equation \ref{eq:matrixriccati}, we can now identify the asymptotically optimal trading speeds that track the frictionless Merton portfolio and exploit the distortion of the asset prices relative to their unaffected values. 

 \begin{thm}\label{thm:policy} 
  Suppose the prerequisites of Theorem~\ref{thm:expansion} and Assumption~\ref{assume-admis} are satisfied and the function $u$ from Lemma \ref{lem:feynmann} is twice continuously differentiable. Define
   \begin{align}\label{eq:def-Q}
  Q_h(y):=(CA_{12}(y)+A_2(y))^\top, \qquad Q_d(y):=(CA_1(y)+A_{12}(y)^\top)^\top
  \end{align}
 and the feedback controls
\begin{align}\label{eq:asy-H}
\dot H^\e(\e d,h,y)=-{\Lambda^{-1} }\left(Q_d^\top(y)\frac{ d}{\e}+Q_h^\top(y) \frac{h-\cM(y)}{\e}\right).
\end{align}
Assume that for some $\d>0$,
 \begin{gather}\label{eq:assume:}
 \E\left[\int_0^\infty e^{-(\rho-\d)t }\Big(\mathrm{Tr}(A_2(Y_t) c_{\cM}(Y_t))+\sum_{i,j}\left|\tfrac{d\langle A^{i,j}, \cM^i\rangle_t}{dt}\right|^2\Big)dt\right]<\infty,\\
\label{eq:controlstate}\int_0^\infty e^{-\rho t}\E\left[\frac{1}{\e^2}| D_t^{\e d,h,y,\dot H^\e}|^2+|H_t^{\e d,h,y,\dot H^\e}-\cM(Y_t)|^2\right]dt=o(1) \quad \mbox{as $\e \to 0$,}\\
 \lim_{T\to \infty}\E\left[ e^{-\rho T}\hat V^\e(\e D^{\e d,h,y,\dot H^\e}_T,H_T^{\e d,h,y,\dot H^\e},Y_T)\right]=0, \quad \mbox{for all $\e>0$,\label{eq:cvhatV}}
\end{gather}
\mbox{and the local martingale in the It\^o decomposition of the approximate value function}\notag\\ 
\begin{gather} 
\hat V^\e(\e D^{\e d,h,y,\dot H^\e}, H^{\e d,h,y,\dot H^\e},Y) \label{eq:truemart} 
\end{gather}
 is a true martingale for all $\e>0$.
 Then the controls $\dot H^\e$ are admissible and asymptotically optimal in that, locally uniformly in $(d,h,y)$,
 $$\cJ^\e(d\e,h,y;\dot H^\e)=\hat{V}^\e(d\e,h,y)+o(\e)=V^\e(d\e,h,y)+o(\e), \quad \mbox{as $\e \to 0$.} $$
 \end{thm}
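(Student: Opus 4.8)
The plan is a verification argument built on the explicit function $\hat V^\e$ of~\eqref{eq:defhatV}. The structural observation that makes everything work is that, by differentiating~\eqref{eq:exp} in $h$ and $d$, the feedback $\dot H^\e$ of~\eqref{eq:asy-H} is \emph{exactly} the maximizer of the inner problem $\dot h\mapsto-\tfrac{\e^2}{2}\dot h^\top\Lambda\dot h+\dot h^\top\pa_h\hat V^\e+\dot h^\top\e C(h+\pa_d\hat V^\e)$ in the dynamic programming equation~\eqref{PDE} with the rescaled parameters~\eqref{eq:scaling}. Write $(D^\e,H^\e)$ for the state process started at $(\e d,h,y)$ and controlled by $\dot H^\e$, let $r^\e$ denote the running reward in $\cJ^\e$, and let $\cG^\e[\hat V^\e]$ be the HJB residual, i.e.\ $-\rho\hat V^\e$ plus the right-hand side of~\eqref{PDE} (with $\Lambda,C,R$ rescaled as in~\eqref{eq:scaling}) applied to $\hat V^\e$. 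Since $\hat V^\e$ is $C^2$ (as $V^0$ is $C^2$ by Assumption~\ref{finite-V0} and $u$, $A(\cdot)$ are $C^2$ by hypothesis), Itô's formula applied to $t\mapsto e^{-\rho t}\hat V^\e(\e D^\e_t,H^\e_t,Y_t)$, together with the maximality of $\dot H^\e$, yields
$$d\bigl(e^{-\rho t}\hat V^\e(\e D^\e_t,H^\e_t,Y_t)\bigr)=e^{-\rho t}\Bigl(\cG^\e[\hat V^\e]-r^\e(\,\cdot\,;\dot H^\e_t)\Bigr)(\e D^\e_t,H^\e_t,Y_t)\,dt+dM_t,$$
where $M$ is a true martingale by~\eqref{eq:truemart}.

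The crux is the estimate $\E\int_0^\infty e^{-\rho t}\bigl|\cG^\e[\hat V^\e](\e D^\e_t,H^\e_t,Y_t)\bigr|\,dt=o(\e)$. Expanding $\cG^\e[\hat V^\e]$ in powers of $\e$, the $O(1)$ term equals $\bigl(-\rho V^0+\cL^Y V^0+\tfrac{\mu^\top\Sigma^{-1}\mu}{2\gamma}\bigr)$ — which vanishes by~\eqref{eq:V0} — plus a quadratic form in $(d,h-\cM(y))$ that vanishes \emph{identically} after substituting the block identities $Q_h\Lambda^{-1}Q_h^\top=\gamma\Sigma$, $Q_d\Lambda^{-1}Q_h^\top=-RA_{12}$, $Q_d\Lambda^{-1}Q_d^\top=2R(C^{-1}-A_1)$ read off from the Riccati equation~\eqref{eq:matrixriccati} (using $A\bigl(\begin{smallmatrix}C\\I_n\end{smallmatrix}\bigr)=\bigl(\begin{smallmatrix}Q_d\\Q_h\end{smallmatrix}\bigr)$). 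The $O(\e)$ term, using the second corrector equation~\eqref{corrector2} to absorb $\rho u-\cL^Y u-a$, reduces to expressions bounded by $\e\bigl(1+|d|^2+|h-\cM(y)|^2\bigr)$ plus bilinear couplings of $(d,h-\cM(y))$ with $\cM$, $A$ and their derivatives; these are not pointwise small, but integrated against $e^{-\rho t}$ along $(D^\e,H^\e)$ they are $o(1)$ — the quadratic part because $|D^\e_t/\e|^2+|H^\e_t-\cM(Y_t)|^2$ is exactly the integrand controlled in~\eqref{eq:controlstate}, and the bilinear part by Cauchy--Schwarz together with~\eqref{eq:assume:} and the moment bounds $M_\Sigma$, $M_\cM$ of Assumption~\ref{assume-expansion}(ii). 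Multiplying by $\e$ gives the claim.

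Admissibility of $\dot H^\e$ is verified separately. The pair $(D^\e,H^\e)$ solves a linear SDE with bounded feedback gains $\Lambda^{-1}Q_d^\top$, $\Lambda^{-1}Q_h^\top$ (bounded by Lemma~\ref{lem:ricatti} and the assumed regularity of $A$) forced by $\cM(Y_t)$, whose second moments grow sub-exponentially by Assumption~\ref{assume-expansion}(ii); the square-integrability requirement in~\eqref{eq:trans} then follows from~\eqref{eq:controlstate} together with $\E\int_0^\infty e^{-\rho t}|\cM(Y_t)|^2\,dt<\infty$, while the transversality $e^{-\rho t}(|H^\e_t|^2+|D^\e_t|^2)\to0$ follows because Assumption~\ref{assume-admis} ($\cL^Y A\le(\rho-\d_1)A$) makes $t\mapsto e^{-\rho t}\,\xi_t^\top A(Y_t)\xi_t$ with $\xi_t=(D^\e_t/\e,\,H^\e_t-\cM(Y_t))$ a nonnegative supermartingale, hence a.s.\ convergent, whose limit must be $0$ by the lower bound~\eqref{lower-bound-varpi} and the integrability just obtained.

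Finally, integrating the Itô identity on $[0,T]$, taking expectations (the martingale drops by~\eqref{eq:truemart}), and letting $T\to\infty$ — the boundary term $\E\bigl[e^{-\rho T}\hat V^\e(\e D^\e_T,H^\e_T,Y_T)\bigr]$ vanishes by~\eqref{eq:cvhatV}, and $\E\int_0^T e^{-\rho t}r^\e\,dt\to\cJ^\e(\e d,h,y;\dot H^\e)$ by~\eqref{eq:controlstate} and the moment bounds — gives $\hat V^\e(\e d,h,y)=\cJ^\e(\e d,h,y;\dot H^\e)+o(\e)$, locally uniformly. Combining this with $\cJ^\e(\e d,h,y;\dot H^\e)\le V^\e(\e d,h,y)$ (valid since $\dot H^\e$ is admissible) and $V^\e=\hat V^\e+o(\e)$ from Theorem~\ref{thm:expansion} shows that all three quantities coincide up to $o(\e)$, which is the assertion. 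The main obstacle is the residual estimate of the second paragraph: the algebraic cancellation of the $O(1)$ term via the Riccati block structure and, especially, the locally uniform control of the genuinely unbounded $O(\e)$ remainder using only the weak state bound~\eqref{eq:controlstate} against possibly unbounded market coefficients; the Lyapunov/supermartingale argument underlying admissibility is a secondary technical point.
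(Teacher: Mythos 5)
Your overall architecture matches the paper's proof in Section 7: a verification argument via Itô applied to $e^{-\rho t}\hat V^\e$, using that $\dot H^\e$ realizes the maximum in the inner HJB problem, the corrector PDEs and the Riccati block identities to cancel the $O(1)$ and part of the $O(\e)$ residual, the moment bounds \eqref{eq:controlstate}, \eqref{eq:assume:}, and then the sandwich $\hat V^\e\le\cJ^\e+o(\e)\le V^\e+o(\e)=\hat V^\e+o(\e)$ using Theorem~\ref{thm:expansion}. Your block identities $Q_h\Lambda^{-1}Q_h^\top=\gamma\Sigma$, $Q_d\Lambda^{-1}Q_h^\top=-RA_{12}$, $Q_d\Lambda^{-1}Q_d^\top=2R(C^{-1}-A_1)$ (read off from $A\hat C=\left(\begin{smallmatrix}Q_d\\ Q_h\end{smallmatrix}\right)$ and \eqref{eq:matrixriccati}) are correct and indeed make the $O(1)$ terms cancel exactly.

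Two points are slightly off-target. First, you aim for a \emph{two-sided} estimate $\E\int_0^\infty e^{-\rho t}|\cG^\e[\hat V^\e]|\,dt=o(\e)$. The paper only establishes the one-sided inequality $\hat V^\e\le\cJ^\e+\e\beta_\e$, using $\e\rho\varpi\ge0$ and Assumption~\ref{assume-admis} to drop nonnegative terms $\frac{\e}{2}(X^\e)^\top(\rho A-\cL^Y A)X^\e\ge0$ rather than bound them in absolute value. Since your final sandwich supplies the opposite inequality for free, the two-sided bound is unnecessary — and it would require extra hypotheses (you have no upper bound on $\cL^Y A$, only the lower bound $\cL^Y A\le(\rho-\d_1)A$). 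Second, your admissibility argument asserts that $t\mapsto e^{-\rho t}\,\xi_t^\top A(Y_t)\xi_t$ is a nonnegative supermartingale by Assumption~\ref{assume-admis}. This is not quite true: its Itô drift also contains the \emph{positive} contribution $\mathrm{Tr}(A_2 c_{\cM})$ from the quadratic variation of $\cM(Y)$, plus cross terms $X^\top\chi$ of indeterminate sign. The paper's Proposition~\ref{prop:addmisibility} is more careful: it shows $e^{-(\rho-\d/2)t}X_t^\top M_tX_t-\int_0^t e^{-(\rho-\d/2)s}\big(\mathrm{Tr}(M_s c_{\bar\cM})+\tfrac1{2\d^2}|\chi_s|^2\big)\,ds$ is a supermartingale, with the integral part finite by \eqref{cont-crossterms}; convergence of the process and hence the transversality condition then follow by combining those two facts. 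You should fold these forcing terms in before invoking convergence of a supermartingale.

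Minor: the paper's proof also confirms — via the explicit Itô expansion rather than via the abstract "maximizer" observation — that substituting $\dot H^\e$ reproduces the running reward plus a controlled residual. Your remark that $\dot H^\e$ is the \emph{exact} HJB maximizer for $\hat V^\e$ is a nice structural shortcut and is correct, but it does not substitute for carrying out the algebraic cancellations with the corrector equations, which is where the real work lies.
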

 
 \begin{proof}
See Section \ref{s.optimal.policy}.
 \end{proof}
 
  \begin{example}\label{ex:policy}
The regularity conditions of Theorem~\ref{thm:policy} are satisfied, in particular, for the model with mean-reverting returns from Example~\ref{ex:OU}. See Appendix~\ref{s:app3} for more details.
\end{example}

 As already observed in \cite{garleanu.pedersen.16}, the asymptotically optimal trading rates with temporary and transient price impact strike a balance between the following two objectives. On the one hand, they track the Merton portfolio, so as to remain near the optimal risk-return tradeoff in the frictionless model. On the other hand, they exploit the distortion of the asset prices relative to their unaffected values as an additional trading signal. 

The asymptotic formulas from Theorem~\ref{thm:policy} indentify the respective trading speeds through the matrix Riccati equation~\eqref{eq:matrixriccati}. As the resilience parameter $R$ becomes large, one readily verifies that $A_{12}$ becomes negligible, $A_1=C^{-1}+o(R^{-1})$, and $A_2$ is given by \eqref{eq:A2}. As a consequence, the asymptotically optimal (relative) trading speeds in this ``high-resilience regime'' are 
$$ 
\Lambda^{-1} Q_d^\top = \Lambda^{-1}+o(R^{-1}), \quad \mbox{and} \quad \Lambda^{-1}Q_h^\top=\Lambda^{-1/2}(\Lambda^{-1/2}\gamma\Sigma\Lambda^{-1/2})^{1/2}\Lambda^{1/2}+o(1),
$$
as $R \to \infty$. The second formula shows that as the resilience grows, we recover the asymptotically optimal trading rate for the model with purely temporary trading costs \cite[Theorem 4.7]{moreau.al.15}. In particular, this tracking speed only depends on the market, preference, and cost parameters, but not the optimal trading strategy at hand.

The corresponding coefficient of the price distortion has an even simpler form in the large-resilience limit. Indeed, it is independent of permanent impact, risk aversion, and price volatility. Instead, the exploitation of the displacement only trades off its size against the temporary trading cost.\footnote{Even though this coefficient does not vanish for $R \to \infty$, the effect of the distortion disappears in the high-resilience limit, because price distortions then disappear almost immediately.}
 
\subsection{Explicit Formulas for One Risky Asset}\label{ss:1d}

For a single risky asset ($n=1)$, the Riccati equation \eqref{eq:matrixriccati} can be solved explicitly. With the notation from \eqref{eq:def-Q}, we obtain
 \begin{align*}
 \frac{Q_d^2}{2R\Lambda}=-A_1 +\frac{1}{C}, \qquad  \frac{Q_hQ_d}{R\Lambda}=-A_{12}, \qquad \frac{Q_h^2}{\Lambda}=\gamma\Sigma.
 \end{align*}
As a consequence,
\begin{align*}
Q_h=\sqrt{\Lambda \gamma \Sigma} \qquad \mbox{and} \qquad \frac{Q_d^2}{2R\Lambda}+\frac{Q_d}{C}\left(\sqrt{\frac{\gamma\Sigma}{R^2\Lambda}}+1\right)-\frac{1}{C}=0.
\end{align*}
The last equation has one positive and one negative solution; the correct one is the positive one,
\begin{equation}
Q_d= \frac{\sqrt{\Lambda(\gamma\Sigma+R(2C+R\Lambda+2\sqrt{\gamma\Lambda\Sigma}))}-R\Lambda-\sqrt{\gamma\Sigma\Lambda}}{C}.
\end{equation}
 Indeed, the corresponding matrix
$$A=\left( \begin{array}{cc}
\frac{Q_d}{RC}\left(\sqrt{\frac{\gamma\Sigma}{\Lambda}}+R\right)& \frac{-Q_d}{R}\sqrt{\frac{ \gamma \Sigma}{\Lambda}} \\
   \frac{-Q_d}{R}\sqrt{\frac{ \gamma \Sigma}{\Lambda}} & \sqrt{\Lambda \gamma \Sigma}(1+\frac{CQ_d}{R\Lambda})\end{array} \right)$$
 then is positive as required for Lemma~\ref{lem:ricatti}, because both its trace and determinant are positive.
 
  Let us first discuss the lower-right entry $A_2=\sqrt{\Lambda \gamma \Sigma}(1+\frac{CQ_d}{R\Lambda})$ of this matrix, which multiplies the quadratic variation of the Merton portfolio in the leading-order term \eqref{eq:fku} of the value expansion~\eqref{eq:exp}. Its first summand is the corresponding term for the model with only temporary costs \cite[Formula (1.2)]{moreau.al.15}; whence, the second summand accounts for the additional effects of the transient price impact. Differentiation shows that this term is increasing in the permanent price impact parameter. For large resilience $R$, the first-order expansion is 
$$A_2=\sqrt{\gamma\Sigma\Lambda}\left(1+\frac{C/R}{\Lambda}\right)+O(R^{-2}).$$
This shows that the relative adjustment compared to the model with only temporary trading costs is substantial if the ratio $C/R$ of permanent price impact and resilence is large compared to the temporary price impact parameter $\Lambda$.
   
Next, note that the tracking speed for the frictionless Merton portfolio is always the same \emph{universal} quantity 
$$\frac{Q_h}{\Lambda}=\sqrt{\frac{\gamma\Sigma}{\Lambda}},$$
that already appears in the work of \cite{almgren.chriss.01} on optimal liquidation and is also asymptotically optimal in the model without transient price impact \cite{moreau.al.15}.\footnote{The same parameter also appears in more general liquidation problems \cite{schied.schoeneborn.09} and in linear-quadratic models with small information asymmetries, cf.~\cite{muhlekarbe.webster.16}.} Whence, this tracking speed is not only independent of the specific application for which the frictionless target strategy is designed, but is also the same with and without transient price impact. Note that for a single risky asset, this quantity obtains for any size of the transient impact parameters.

With transient price impact, the distortion of the price relative to its unaffected value is used as an additional trading signal. The corresponding weight is $Q_d/\Lambda$, which has the second-order expansion 
$$\frac{Q_d}{\Lambda} =  \frac{1}{\Lambda}-\frac{C+2\sqrt{\gamma\Sigma\Lambda}}{2\Lambda^2R}+O(R^{-2}).$$
The first term in this expansion is the universal high-resilience limit already identified in the multi-asset case above. The second-order term in turn shows it becomes more difficult to exploit price distortions if i) permanent price impact is large (so that initial distortions are offset quickly), ii) market risk is high relative to temporary trading costs (so that the Merton portfolio is tracked closely), or iii) resilience is low (so that the distortion only decays slowly and therefore can be exploited gradually).

\section{Outline of the Proof and Initial Estimates}\label{s.estimates}
\subsection{Outline of the Proof}\label{ss.method}
The proof of Theorem~\ref{thm:expansion} is based on stability results for viscosity solutions as in~\cite{soner.touzi.13,moreau.al.15}. However, due to the presence of the price distortion, these arguments cannot be applied directly to the value function at hand here. Instead, we first study a ``rescaled'' version $\tilde V^\e$ of the value function, defined in Section~\ref{ss.incorp}. We then establish an expansion for $\tilde V^\e$ and in turn use it to derive the expansion of the actual value function $V^\e$. 

To obtain the expansion of the rescaled value function, we first establish locally uniform bounds for $(\tilde V^\e-V^0)/\e$ in Section \ref{s.estimates}. This allows us to show in Section \ref{s.semilimits} that $u^\e$ -- the deviation of $\tilde V^\e$ from the frictionless value $V^0$ scaled with an appropriate power of $\e$ -- admits locally bounded upper and lower semilimits $u^*$ and $u_*$ which are upper and lower semicontinuous, respectively. 

In Section~\ref{s.corrector.eq} we then establish that $y\to u^*(0,\cM(y),y)$ and $y\to u_*(0,\cM(y),y)$ (i.e, the semilimits evaluated along the frictionless versions of their state variables) are viscosity sub- and supersolutions, respectively, of the second corrector equation~\eqref{corrector2}. Together with our estimates on $u^\e$, the comparison principle for \eqref{corrector2} from Assumption~\ref{assume-comparison} in turn yields $u^*(0,\cM(y),y)=u_*(0,\cM(y),y)$ for all $y\in \cD$.

Finally, in Section~\ref{s.first}, we use the viscosity properties of $u^*$ and $u_*$ to identify the dependence of these semilimits on the frictional state variables $(d,h)$. As the same result obtains for both functions, it follows that the shifted value function indeed has the expansion posited in Theorem~\ref{thm:expansion}.

\subsection{The Rescaled Value Function}\label{ss.incorp}

As liquidity becomes large in our critical regime \eqref{eq:scaling}, we expect that the price distortion $D$ tends to zero. To charaterize its limit behavior, $D$ therefore needs to be rescaled appropriately. In Lemma \ref{lem:frictional-expansion}, we show that the natural scaling of $D$ is of order $\e$. In order to expand the value function for small $\e$, it is therefore natural to study its rescaling $V^\e(\e d,h,y)$. However, it turns out that the asymptotic analysis of this function is severely complicated by the fact that it is \emph{not} uniformly bounded from above by the frictionless value function for all arguments, e.g., if the agent starts with a large positive risky position and the initial price distortion is sufficiently negative. As a way out, we study the asymptotic expansion of the following shifted version of $V^\e(\e d,h,y)$  instead,
\begin{align}\label{dfn:tildeV}
\tilde V^\e(d,h,y)=V^\e(d\e,h,y)+\e\left(h^\top d-\frac{d^\top   C^{-1}d}{2}\right).
\end{align}
In Lemma \ref{lem:frictional-expansion}, we are then able to show that $\tilde V^\e$ is bounded from above by the frictionless value function for small $\e>0$. After deriving the limiting results for this function, we can then in turn deduce the expansion of the actual value function and a corresponding asymptotically optimal policy. 

The viscosity property of the value function (cf.~Proposition~\ref{asm:frictional.pde}) and a direct calculation show that the rescaled value function $\tilde V^\e$ is a (possibly discontinuous, compare \cite[Definition 4.2]{fleming.soner.06}) viscosity solution of  
\begin{align}\label{eq:pdetilde}
\rho \tilde V^\e +G^\e(\cdot,{\pa_d \tilde V^\e},{\pa_h \tilde V^\e},\pa_y \tilde V^\e,\pa_{yy}\tilde V^\e)=0,
\end{align}
where
\begin{align}\label{dfn:G}
&G^\e(d,h,y,p_1,p_2,p_3,X):=\\
&\qquad-\mu_Y^\top  (y)p_3-\frac{1}{2}\mathrm{Tr}\left(\sigma_Y(y)\sigma_Y^\top  (y)X\right)-\frac{\mu^\top(y)(\gamma\Sigma(y))^{-1}\mu(y)}{2}\\
&\qquad+f(d,h-\cM(y),y)-\cH\left(d,h-\cM(y),y,\frac{p_1}{\e},\frac{p_2}{\e}\right)-\e\rho \left(h^\top d-\frac{d^\top   C^{-1}d}{2}\right). \notag
\end{align}
Here, for all $(\xi,y)=(\xi_1,\xi_2,y)\in R^{2n}\times \cD$ and $p=(p_1,p_2)\in \R^{2n}$, the source term and the convex Hamiltonian are
\begin{align}\label{eq:source}
 f(\xi,y)&:=\frac{\gamma}{2}(\xi_2)^\top \Sigma(y) \xi_2+R\xi_1^\top C^{-1}\xi_1\\
 \cH(\xi,y,p_1,p_2)&:=-R\xi_1^\top p_1+\frac{1}{2}(C p_1+p_2)^\top  \Lambda^{-1}(C p_1+p_2)\notag\\
 &=-R\xi_1^\top p_1+\frac{1}{2}p^\top  \hat C \Lambda^{-1}\hat C^\top p \label{eq:source2}
\end{align}
with $\hat{C}=(C, Id)^\top$.

 \begin{rem}\label{rem:ham}
Note that as a symmetric matrix of dimension $2n$, $\hat C \Lambda^{-1}\hat C^\top$ is degenerate. One of the main technical challenges of this paper is the consequence that the Hamiltonian $\cH$ is degenerate in $p$. This is not the case in \cite{moreau.al.15} where the problem remains $n$ dimensional and the non-degeneracy of $\Lambda$ is a sufficient assumption to establish the viscosity property of the semilimits defined below. Here, we show  that the non-degeneracy assumption of the Hamiltonian $\cH$ can be replaced by the existence of positive solutions for a matrix-valued Riccati equation, cf.~\eqref{eq:matrixriccati}.
\end{rem}

\subsection{A Uniform bound}

In this section, we derive some elementary moment estimates for Ornstein-Uhlenbeck-type processes. These will be used to derive bounds for the semilimits introduced in Section \ref{s.semilimits} below.

\begin{lem}\label{expansion}
Let $\tilde \phi,\phi^i:\cD\mapsto\R$, $i=1,2$ be twice continuously differentiable functions. Define the semimartingales
$$M^i_t=\phi^i(Y_t) \quad \mbox{and} \quad \tilde M_t=\tilde \phi(Y_t),$$ 
and the corresponding Ornstein-Uhlenbeck-type processes
 $$dX_t^i=-\frac{\lambda_i}{\e}X_t^i dt +dM^i_t, \quad \lambda_i>0 \mbox{ for } i=1,2,$$
and set $\lambda=\lambda_1\wedge\lambda_2\wedge 2>0$. Then, there exists a (sufficiently large) constant $c>0$ such that for all $t>0$ and $\e\in(0,\frac{\lambda}{2})$,
\begin{align}\label{cont-1term}
\left|\E\left[X_t^i \tilde M_t\right]\right|&\leq c e^{-\lambda_i t/\e}|X_0^i \tilde M_0|+\frac{c\e}{\lambda_i}(X_0^i)^{2}\\
&+\frac{c\e}{\lambda_i}\sup_{0\leq s\leq t}\E\left[|\tilde M_s |^2+|\cL^Y\tilde \phi_s|^{2}+|(\sigma_Y^\top \pa_y \tilde\phi)_s|^{2}+|\cL^Y\phi^i_s|^{2}+|(\sigma_Y^\top \pa_y \phi^i)_s|^{2}\right]
\notag
\end{align}
and 
\begin{align}
\left|\E\left[X_t^iX_t^j \tilde M_t\right]\right| &\leq c e^{-\lambda t/\e}\left(|X_0^iX_0^j \tilde M_0|+2 +|X_0^i|^4+|X_0^j|^4\right)\label{control-quadratics}\\
&+\frac{c\e}{\lambda}\sup_{0\leq s\leq t}\E\Big[1+|\tilde M_s|^4+|\cL^Y \tilde \phi_s|^2+ |\cL^Y \phi^j_s|^4+|\cL^Y\phi^i_s|^{4}\notag\\
&\qquad \qquad \qquad+|(\sigma_Y^\top \pa_y \phi^i)_s|^{4}+|(\sigma_Y^\top \pa_y \phi^j)_s|^{4}+|(\sigma_Y^\top \pa_y \tilde \phi)_s|^{4} \Big].\notag
\end{align}
\end{lem}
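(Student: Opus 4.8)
The plan is to apply It\^o's formula to the products $X^i_t\tilde M_t$ and $X^i_tX^j_t\tilde M_t$, take expectations, and exploit the fact that the resulting deterministic functions of $t$ obey linear ODEs whose drift coefficient, $-\lambda_i/\e$ respectively $-(\lambda_i+\lambda_j)/\e$, diverges as $\e\to 0$. Writing $dM^i_t=\cL^Y\phi^i(Y_t)\,dt+(\sigma_Y^\top\pa_y\phi^i)_t\,dW_t$ and likewise for $\tilde M$, It\^o's formula shows that the drift of $X^i_t\tilde M_t$ equals $-\tfrac{\lambda_i}{\e}X^i_t\tilde M_t$ plus a sum of terms bilinear in $(X^i_t,\tilde M_t)$ and the drift/diffusion coefficients of $M^i,\tilde M$, together with the bracket $d\langle M^i,\tilde M\rangle_t$. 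After stopping at $\tau_n:=\inf\{t:|Y_t|\ge n\}$, which increases to $\infty$ by non-explosion, all stochastic integrals are genuine martingales, so one derives the estimates for the stopped processes, obtains bounds uniform in $n$, and lets $n\to\infty$; there is nothing to prove when the right-hand sides are infinite. Consequently $g(t):=\E[X^i_t\tilde M_t]$ satisfies $g'=-\tfrac{\lambda_i}{\e}g+r$, and variation of constants gives $g(t)=e^{-\lambda_i t/\e}g(0)+\int_0^t e^{-\lambda_i(t-s)/\e}r(s)\,ds$, the first summand producing the term $c\,e^{-\lambda_i t/\e}|X^i_0\tilde M_0|$.

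The missing ingredient is an a priori control of $X^i$ itself. Applying It\^o to $(X^i_t)^2$ and using $2X^i_t\cL^Y\phi^i_t\le\tfrac{\lambda_i}{\e}(X^i_t)^2+\tfrac{\e}{\lambda_i}(\cL^Y\phi^i_t)^2$ to absorb the cross term into the mean reversion, Gronwall's inequality together with $\int_0^t e^{-\lambda_i(t-s)/\e}\,ds\le\e/\lambda_i$ and $\e/\lambda_i<\tfrac12$ gives $\E[(X^i_t)^2]\le e^{-\lambda_i t/\e}(X^i_0)^2+\tfrac{c\e}{\lambda_i}\sup_{0\le s\le t}\E[(\cL^Y\phi^i_s)^2+|(\sigma_Y^\top\pa_y\phi^i)_s|^2]$; an analogous argument applied to $(X^i_t)^4$, absorbing the cubic cross term and the quadratic-variation term while retaining a fixed fraction of the mean reversion, yields $\E[(X^i_t)^4]\le e^{-2\lambda_i t/\e}(X^i_0)^4+\tfrac{c\e}{\lambda_i}\sup_{0\le s\le t}\E[(\cL^Y\phi^i_s)^4+|(\sigma_Y^\top\pa_y\phi^i)_s|^4]$. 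Feeding the quadratic bound into $r(s)$ now proves \eqref{cont-1term}: the terms $\E[\tilde M_s\cL^Y\phi^i_s]$ and $\E[(\sigma_Y^\top\pa_y\phi^i)_s\cdot(\sigma_Y^\top\pa_y\tilde\phi)_s]$ go directly into the supremum there, while $\E[X^i_s\cL^Y\tilde\phi_s]$ is split by Cauchy--Schwarz, the piece $e^{-\lambda_i s/(2\e)}|X^i_0|$ carried by $\E[(X^i_s)^2]^{1/2}$ contributing, after integration, a $\tfrac{c\e}{\lambda_i}(X^i_0)^2$ term and a term absorbed into the supremum.

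The estimate \eqref{control-quadratics} follows the same scheme one moment higher: It\^o applied to $X^i_tX^j_t\tilde M_t$ produces the drift $-\tfrac{\lambda_i+\lambda_j}{\e}X^i_tX^j_t\tilde M_t$ plus terms cubic in the state (such as $X^i_tX^j_t\cL^Y\tilde\phi_t$, $X^j_t\tilde M_t\cL^Y\phi^i_t$, $\tilde M_t(\sigma_Y^\top\pa_y\phi^i)_t\cdot(\sigma_Y^\top\pa_y\phi^j)_t$ and the corresponding bracket terms), which are estimated via Cauchy--Schwarz, the quadratic and quartic a priori bounds above, the elementary inequality $|abc|\le\tfrac14(1+a^4+b^4+c^4)$ (the source of the additive constants in \eqref{control-quadratics}), and Young's inequality; variation of constants, $\int_0^t e^{-(\lambda_i+\lambda_j)(t-s)/\e}\,ds\le\e/(\lambda_i+\lambda_j)$, and $\lambda_i+\lambda_j\ge 2\lambda$ then close the estimate.

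I expect the exponential bookkeeping in this last step to be the main obstacle, rather than any conceptual point. Every contribution involving the initial data $X^i_0,X^j_0,\tilde M_0$ must come out multiplied by $e^{-\lambda t/\e}$, while every contribution involving the $\phi$-data must come out multiplied by $\tfrac{c\e}{\lambda}$. When one meets a product of an initial-data factor carrying a decaying exponential $e^{-\kappa s/\e}$ (every $\kappa$ that occurs being $\ge\lambda/2$) and a $\phi$-factor such as $B^{1/2}$, one must first integrate so that the exponential survives as $e^{-\kappa' t/\e}$ with $2\kappa'\ge\lambda$, and only then apply Young's inequality, assigning the higher power to the exponential initial-data factor (so that $e^{-2\kappa' t/\e}\le e^{-\lambda t/\e}$) and a power $\le 1$ to the $\phi$-factor (which is then dominated by the supremum in \eqref{control-quadratics}, since after adding the constant $1$ all relevant $\phi$-quantities are $\ge 1$); keeping the exponential at this point rather than discarding it is exactly what is needed. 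The normalization $\lambda=\lambda_1\wedge\lambda_2\wedge 2$ keeps the decay rate bounded and all these absorption constants uniform on the range $\e\in(0,\lambda/2)$.
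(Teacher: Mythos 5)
Your proposal follows essentially the same route as the paper's proof: It\^o's formula applied to $(X^i_t)^{2k}$, to $X^i_t\tilde M_t$, and to $X^i_tX^j_t\tilde M_t$ yields linear ODEs for the expectations, which are then solved by variation of constants and combined with $\epsilon$-Young inequalities, exactly as in the paper's Steps 1--3. A minor point worth noting: your a priori bound keeps the factor $e^{-k\lambda_i t/\epsilon}$ on the initial datum $(X_0^i)^{2k}$, which is slightly sharper than the paper's Step 1 as stated and is in fact what is needed for the $|X_0^i|^4+|X_0^j|^4$ terms to land in the $e^{-\lambda t/\epsilon}$-prefactored group of \eqref{control-quadratics} rather than in the $c\epsilon/\lambda$-prefactored one (though for the subsequent applications, which integrate against $e^{-\rho t}\,dt$, either form suffices).
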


\begin{proof}
{\it Step 1:} It\^o's formula applied to $(X_t^i)^{2k}$ and the $\epsilon$-Young inequality show that there exists $c>0$ such that 
$$
\E[(X_t^i)^{2k}]\leq c\left((X_0^i)^{2k}+\sup_{0\leq s\leq t}\E\left[|(\cL^Y\phi^i)_s|^{2k}+|(\sigma_Y^\top \pa_y \phi^i)_s|^{2k}\right]\right),$$
for $k=1,2$ and $\e\in(0,\frac{\lambda}{2})$.

{\it Step 2: } Apply It\^o's formula to $X_t^i \tilde M_t$ and solve the ODE for $\E[X_t^i \tilde M_t]$, obtaining 
\begin{align*}
\E[X_t^i \tilde M_t] = &e^{-\lambda_i t/\e}X_0^i \tilde M_0\\
&\quad+\int_0^t \e^{-\lambda_i(t-s)/\e}\E\left[\tilde M_s \cL^Y\phi^i_s+ (\sigma_Y^\top\pa_y \phi^i)_s(\sigma_Y^\top\pa_y \tilde \phi)_s+X_s^i \cL^Y \tilde \phi_s\right]ds.
\end{align*}
Together with the inequality from Step 1, it follows that, for some $c>0$,
\begin{align*}
\left|\E\left[X_t^i \tilde M_t\right]\right|&\leq c e^{-\lambda_i t/\e}|X_0^i \tilde M_0|+\frac{c\e}{\lambda_i}(X_0^i)^{2}\\
&+\frac{c\e}{\lambda_i}\sup_{0\leq s\leq t}\E\left[|\tilde M_s |^2+|\cL^Y\tilde \phi_s|^{2}+|(\sigma_Y^\top \pa_y \tilde\phi)_s|^{2}+|\cL^Y\phi^i_s|^{2}+|(\sigma_Y^\top \pa_y \phi^i)_s|^{2}\right].
\end{align*}
This establishes the first part of the assertion.

{\it Step 3: } As in Step 2, It\^o's formula applied to $X_t^iX_t^j \tilde M_t$ gives
\begin{align*}
&\E\left[X_t^iX_t^j \tilde M_t\right]\\
&=e^{-(\lambda_i+\lambda_j)t/\e}X_0^iX_0^j \tilde M_0\\
&\quad+\int_0^t \e^{-(\lambda_i+\lambda_j)(t-s)/\e} \E\left[\tilde M_s \left((\sigma_Y^\top\pa_y \phi^j)_s(\sigma_Y^\top\pa_y  \phi^i)_s+ X_s^i \cL^Y \phi^j_s+X_s^j\cL^Y \phi^i_s\right)\right] ds\\
&\quad+\int_0^t \e^{-(\lambda_i+\lambda_j)(t-s)/\e} \E\left[(\sigma_Y^\top\pa_y\tilde \phi)_s (X_s^i (\sigma_Y^\top\pa_y \phi^j)_s+X_s^j(\sigma_Y^\top\pa_y \phi^i)_s)+ X_s^i X_s^j\cL^Y \tilde \phi_s\right] ds.
\end{align*}
Combined with the estimate from Step 1, this identity shows that, for some $c>0$, 
\begin{align*}
&\left|\E\left[X_t^iX_t^j \tilde M_t\right]\right|\leq e^{-(\lambda_i+\lambda_j)t/\e}|X_0^iX_0^j \tilde M_0|\\
&+c\int_0^t e^{-(\lambda_i+\lambda_j)(t-s)/\e} \E\left[|\tilde M_s|^2 + |(\sigma_Y^\top\pa_y \phi^j)_s(\sigma_Y^\top\pa_y  \phi^i)_s|^2+ |X_s^i|^2\right.\\
&\left.\quad\quad\quad\quad\quad\quad\quad\quad\quad\quad+ |\tilde M_s\cL^Y \phi^j_s|^2+|X_s^j|^2+|\tilde M_s \cL^Y \phi^i_s|^2 \right] ds\\
&+c\int_0^t e^{-(\lambda_i+\lambda_j)(t-s)/\e} \E\left[|(\sigma_Y^\top\pa_y\tilde \phi)_s  (\sigma_Y^\top\pa_y \phi^j)_s|^2+|(\sigma_Y^\top\pa_y\tilde \phi)_s(\sigma_Y^\top\pa_y \phi^i)_s|^2\right.\\
&\left.\quad\quad\quad\quad\quad\quad\quad\quad\quad\quad+ |X_s^i |^4+|X_s^j|^4+|\cL^Y \tilde \phi_s|^2\right] ds\\
&\leq e^{-\lambda t/\e}\left(|X_0^iX_0^j \tilde M_0|+2 +|X_0^i|^4+|X_0^j|^4\right)+\frac{c\e}{\lambda}\sup_{0\leq s\leq t}\E\left[1+|\tilde M_s|^4+|\cL^Y \tilde \phi_s|^2 \right.\\
&\quad\left.+ |\cL^Y \phi^j_s|^4+| \cL^Y \phi^i_s|^4+|\cL^Y\phi^i_s|^{4}+|(\sigma_Y^\top \pa_y \phi^i)_s|^{4}+|(\sigma_Y^\top \pa_y \phi^j)_s|^{4}+|(\sigma_Y^\top \pa_y \tilde \phi)_s|^{4} \right],
\end{align*}
as claimed. 
\end{proof}

\subsection{Expansion Along a Class of Policies}
\label{ss.ansatz}
Note that the value function $V^\e(\e d,h,y)$ uses the state variable $D^{(\e d,h,y),\dot H_t}$ which, for $\dot H\in \cA_\rho^\e$, satisfies
$$D^{(\e d,h,y),\dot H}_0=\e d \quad \mbox{ and } \quad dD^{(\e d,h,y),\dot H}_t= \left(-R{\e^{-1}}D^{(\e d,h,y),\dot H}_t + C\e\dot H_t\right) dt.$$
To simplify notation, we pass to $\tilde D^{(\e d,h,y),\dot H_t}=\e^{-1}D^{(\e d,h,y),\dot H_t}$, which satisfies
\begin{align}\label{def:tildeD}
\tilde D^{(\e d,h,y),\dot H}_0=d \quad \mbox{ and } \quad d\tilde D^{(\e d,h,y),\dot H}_t= \left(-R{\e^{-1}}\tilde D^{(\e d,h,y),\dot H_t}_t + C\dot H_t\right) dt.
\end{align}

We now apply Lemma~\ref{lem:conc} in the present large-liquidity context to derive the following uniform upper bound, valid for any admissible strategy.

\begin{lem}\label{lem:frictional-expansion}
Suppose Assumptions \ref{finite-V0} and \ref{assume-expansion} are satisfied. Then there exists $\e_0>0$ such that, for all $\e\in(0,\e_0)$ and for all $\dot H\in \cA_{\rho}^\e$, $\theta=(\e d,h,y)$,
\begin{align*}
\notag & \frac{\cJ^\e(\e d,h,y; \dot{H})-V^0(y)}{\e}=-h^\top d+\frac{d^\top   C^{-1}d}{2}+\rho\E\left[\int_0^\infty e^{-\rho t}(H_t^{\theta,\dot H} )^\top  \tilde D_t^{\theta,\dot H} dt\right]\notag\\
&\quad\quad\quad\quad-\E\left[\int_0^\infty e^{-\rho t}(\tilde D_t^{\theta,\dot H})^\top\left(\frac{(2R\e^{-1}+\rho) C^{-1} }{2}\right)\tilde D_t^{\theta,\dot H}dt\right]\\
&\quad\quad\quad\quad-\E\left[\int_0^\infty \frac{e^{-\rho t}}{2}\left((H_t^{\theta,\dot H}-\cM_t)^\top\e^{-1}\gamma\Sigma_t (H_t^{\theta,\dot H}-\cM_t)+\e\dot{H}_t^\top\Lambda_t\dot{H}_t \right)dt\right].\notag
\end{align*}
\end{lem}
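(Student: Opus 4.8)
The plan is to establish the identity by a direct rewriting of the objective \eqref{eq:control-e} under the scaling \eqref{eq:scaling}, combining three ingredients: the probabilistic representation \eqref{eq:vf} of the frictionless value, a completion of squares in the risky position, and two integrations by parts exploiting the linear dynamics \eqref{eq:distortion} of the price distortion.

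First, since $\mu_t=\gamma\Sigma_t\cM_t$ by \eqref{eq:markowitz}, completing the square gives $H_t^\top\mu_t-\tfrac\gamma2 H_t^\top\Sigma_t H_t=\tfrac{\mu_t^\top\Sigma_t^{-1}\mu_t}{2\gamma}-\tfrac\gamma2(H_t-\cM_t)^\top\Sigma_t(H_t-\cM_t)$. Inserting this into \eqref{eq:control-e} with $\Lambda\to\e^2\Lambda$, $C\to\e C$, $R\to\e^{-1}R$, and using \eqref{eq:vf} to recognize $\E[\int_0^\infty e^{-\rho t}\tfrac{\mu_t^\top\Sigma_t^{-1}\mu_t}{2\gamma}dt]=V^0(y)$ (finite by Assumption~\ref{finite-V0}), one is left with
\[
\cJ^\e(\e d,h,y;\dot H)-V^0(y)=\E\Big[\int_0^\infty e^{-\rho t}\Big(H_t^\top\dot D_t-\tfrac\gamma2(H_t-\cM_t)^\top\Sigma_t(H_t-\cM_t)-\tfrac{\e^2}2\dot H_t^\top\Lambda\dot H_t\Big)dt\Big],
\]
where the key simplification $\e H_t^\top C\dot H_t-R\e^{-1}H_t^\top D_t=H_t^\top\dot D_t$ uses $\dot D_t:=\tfrac{dD_t}{dt}=-R\e^{-1}D_t+\e C\dot H_t$. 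The moment bounds of Assumption~\ref{assume-expansion} together with $\dot H\in\cA_\rho^\e$ make every term except the two non-positive penalties integrable.

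Next, integrate by parts on $[0,T]$, using $H_0=h$, $D_0=\e d$ and, after inverting the distortion dynamics as $\dot H_t=\e^{-1}C^{-1}\dot D_t+R\e^{-2}C^{-1}D_t$, the identity $\dot H_t^\top D_t=\tfrac1{2\e}\tfrac{d}{dt}(D_t^\top C^{-1}D_t)+R\e^{-2}D_t^\top C^{-1}D_t$ in a second integration by parts:
\[
\int_0^T e^{-\rho t}H_t^\top\dot D_t\,dt=e^{-\rho T}H_T^\top D_T-\e h^\top d+\rho\!\int_0^T\! e^{-\rho t}H_t^\top D_t\,dt-\tfrac1{2\e}\big(e^{-\rho T}D_T^\top C^{-1}D_T-\e^2 d^\top C^{-1}d\big)-\big(\tfrac\rho{2\e}+R\e^{-2}\big)\!\int_0^T\! e^{-\rho t}D_t^\top C^{-1}D_t\,dt.
\]
Letting $T\to\infty$, the two boundary terms are dominated by $e^{-\rho T}(|H_T|^2+|D_T|^2)$, which vanishes by the transversality condition in $\cA_\rho^\e$; the remaining integrals converge by the integrability part of $\cA_\rho^\e$, while the penalty integrals converge monotonically, possibly to $+\infty$, in which case the asserted identity holds in $[-\infty,\infty)$. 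To interchange limit and expectation for the non-sign-definite integrands one localizes by the exit times of $(H,D,Y)$ from large balls and uses the local-boundedness estimate of Lemma~\ref{lem:conc}, valid for $\e\in(0,\e_0)$ because the scaled parameters $C=\e C$, $R=\e^{-1}R$ keep $V^\e$ locally bounded (cf.~Remark~\ref{rem:para}). Finally, substituting $D_t=\e\tilde D_t$ (so $H_t^\top D_t=\e H_t^\top\tilde D_t$ and $D_t^\top C^{-1}D_t=\e^2\tilde D_t^\top C^{-1}\tilde D_t$), dividing $\cJ^\e-V^0$ by $\e$, and collecting terms with $\tfrac\rho2+\tfrac R\e=\tfrac{2R\e^{-1}+\rho}2$ yields exactly the claimed formula.

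The routine part is the algebra; the genuine work is the passage to the limit $T\to\infty$ under the expectation — controlling the two boundary terms at infinity simultaneously and legitimizing the interchange of limit and integral for integrands of indefinite sign. This is precisely what the transversality requirements built into $\cA_\rho^\e$ and the local-boundedness bound of Lemma~\ref{lem:conc} are designed to supply, and it is what forces the restriction to $\e\in(0,\e_0)$.
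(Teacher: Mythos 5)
Your algebraic derivation is correct and leads to the right formula, but it essentially re-derives, inline and in the rescaled variables, the identity the paper has already packaged as Lemma~\ref{lem:conc}: the completion-of-squares $H_t^\top\mu_t-\tfrac{\gamma}{2}H_t^\top\Sigma_t H_t=\tfrac{\mu_t^\top\Sigma_t^{-1}\mu_t}{2\gamma}-\tfrac{\gamma}{2}(H_t-\cM_t)^\top\Sigma_t(H_t-\cM_t)$, the two pathwise integrations by parts on $e^{-\rho t}H_t^\top D_t$ and $e^{-\rho t}D_t^\top C^{-1}D_t$, and the use of the transversality conditions in $\cA_\rho$ to kill the boundary terms at infinity all appear verbatim in the proof of that lemma. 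The paper's proof of the statement is therefore far shorter: it simply verifies that, under the scaling $\Lambda\to\e^2\Lambda$, $C\to\e C$, $R\to\e^{-1}R$, the matrix family in Lemma~\ref{lem:conc} (here $\bigl(\begin{smallmatrix}-(2R\e^{-1}+\rho)C^{-1}&\rho I_n\\ \rho I_n & -\e^{-1}\gamma\Sigma(y)\end{smallmatrix}\bigr)$) is uniformly negative for $\e<\e_0$, using the uniform coercivity \eqref{eq:degenere-source}; the identity then follows from Lemma~\ref{lem:conc} after the substitution $D=\e\tilde D$ and division by $\e$. So yours is a correct but redundant route: it buys nothing that Lemma~\ref{lem:conc} hasn't already supplied.

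Two smaller points. First, the step where you invoke ``the local-boundedness estimate of Lemma~\ref{lem:conc}'' and localization by exit times to ``interchange limit and expectation'' is both off-target and unnecessary: the paper's argument is entirely pathwise -- the $T\to\infty$ limit is taken $\omega$-by-$\omega$ using the transversality condition, and then a single expectation is taken at the end, where the only indefinite-sign term $\rho\int_0^\infty e^{-\rho t}H_t^\top\tilde D_t\,dt$ is integrable directly from the $L^2$-integrability built into $\cA_\rho^\e$, and all other variable terms are of definite sign. No dynamic-programming localization or finiteness of $V^\e$ is needed for the identity. Second, and relatedly, you misidentify where the restriction $\e<\e_0$ comes from: it is not imposed to control boundary terms or legitimize limit interchange (those hold for every $\e>0$ within $\cA_\rho^\e$), but to verify the negativity hypothesis of Lemma~\ref{lem:conc} for the rescaled parameter matrix, which is exactly the computation the paper carries out and which is absent from your argument.
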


\begin{proof}
It suffices to verify that the prerequisites of Lemma \ref{lem:conc} are satisfied. Whence, we need to show that the family 
$$\left\{ \left( {\begin{array}{cc}
   -(2R\e^{-1}+\rho)C^{-1} & \rho I_n \\       \rho I_n & -\e^{-1}\gamma\Sigma(y) \      \end{array} }\right)
:y\in \cD\right\}$$
is bounded from below by a symmetric negative matrix for sufficiently small $\e>0$. Let $M$ be a matrix in this family and $\xi=(\xi_1^\top,\xi_2^\top)^\top\in \R^{2n}$. Then,
\begin{align*}\xi^\top M\xi&=-\e^{-1}\xi_1^\top(2R\e^{-1}+\rho)C^{-1}\xi+2\rho \xi_1^\top \xi_2-\gamma\xi_2^\top \Sigma(y)\xi_2\\
&\leq -2 \e^{-2}\xi_1^\top RC^{-1}\xi_1-\gamma\xi_2^\top \Sigma(y)\xi_2+2\rho \xi_1^\top \xi_2\\
&\leq -\e^{-2}\xi_1^\top RC^{-1}\xi_1-\xi_2^\top(\gamma \Sigma(y)-\e^{2}\rho^2R^{-1}C)\xi_2.
\end{align*}
Note that $\rho^2R^{-1}C$ is constant and $\gamma \Sigma(y)\geq \underline m I_n$ by Assumption~\ref{assume-expansion}(i). Hence there exists $\e_0>0$ such that for all $\e \in(0,\e_0)$ we have $\gamma \Sigma(y)-\e^{2}\rho^2R^{-1}C\geq \gamma \Sigma(y)/2$ and 
$$ -\e^{-2}\xi_1^\top RC^{-1}\xi_1-\xi_2^\top(\gamma \Sigma(y)-\e^{2}\rho^2R^{-1}C)\xi_2  \leq -\underline m |\xi|^2,$$
establishing the required uniform lower bound for the above family.
\end{proof}

To control the semilimits studied in Section~\ref{s.semilimits} below, we define the following class of suboptimal but simple feedback controls,
\begin{align}\label{eq:constcontrol}
\dot H^{\e,\a I_n,\a C}(\e d,h,y)=-\frac{\a}{\e} (h-\cM(y))-\frac{\a C}{\e}{d}.
\end{align}

For this parametric class, we have the following estimate.

\begin{prop}\label{prop:expansion}
Suppose Assumption~\ref{assume-expansion} is satisfied and fix $\a>0$. Then there exists constants $c, \e_\a>0$ such that, for all $\e \in(0,\e_\a)$ and $(d,h,y)\in\R^n\times\R^n\times \cD$,
\begin{align}\label{eq:expansion}
\notag&\left|\frac{\cJ^\e(d\e,h,y;\dot H^{\e,\a I_n,\a C})-V^0(y)}{\e} +h^\top d-\frac{d^\top   C^{-1}d}{2}\right|\\
&\qquad\leq c \left(1+|d|^4+|h-\cM(y)|^4+|\Sigma (y)|^2+M_\Sigma(y)+M_\cM(y)\right).
\end{align}
\end{prop}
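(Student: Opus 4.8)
The plan is to insert the feedback control $\dot H^{\e,\a I_n,\a C}$ from~\eqref{eq:constcontrol} into the representation of $\cJ^\e$ provided by Lemma~\ref{lem:frictional-expansion}, and to bound the three resulting expectations uniformly in $\e$ using the moment estimates of Lemma~\ref{expansion}. Concretely, with this choice of control the state $\tilde D^{\theta,\dot H}$ satisfies $d\tilde D_t = (-R\e^{-1}\tilde D_t + C\dot H_t)dt = (-R\e^{-1}\tilde D_t - \a C\e^{-1}(H_t-\cM_t) - \a C^2\e^{-1}\tilde D_t)dt$, while $H_t-\cM_t$ satisfies $d(H_t-\cM_t) = \dot H_t\,dt - d\cM_t = -\a\e^{-1}(H_t-\cM_t)dt - \a C\e^{-1}\tilde D_t\,dt - d\cM(Y_t)$. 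After a linear change of variables (or by treating the coupled $2n$-dimensional system directly, diagonalizing the drift matrix $-\a\e^{-1}\hat C\hat C^\top$-type block whose eigenvalues are bounded below by a positive multiple of $\e^{-1}$), both components are Ornstein--Uhlenbeck-type processes of exactly the form covered by Lemma~\ref{expansion}, driven by the semimartingale $M_t = \cM(Y_t)$ with $\cL^Y$-drift and $\sigma_Y^\top\pa_y$-diffusion coefficients, started from $d$ and $h-\cM(y)$ respectively.

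First I would make this change of variables explicit, writing the vector $Z_t := (\tilde D_t, H_t - \cM_t)^\top \in \R^{2n}$ and noting $dZ_t = -\e^{-1}B\,Z_t\,dt - (0,\,d\cM(Y_t))^\top$ for a fixed matrix $B$ (depending on $\a, C$ only) whose symmetric part is positive definite, so that $B = P\Delta P^{-1}$ can be arranged to bring the system into the scalar template of Lemma~\ref{expansion} with rates $\la_i$ bounded away from $0$. Then the three integrands in Lemma~\ref{lem:frictional-expansion} are, up to fixed constants, of the form $e^{-\rho t}$ times $\E[Z_t^i \cM^j_t]$, $\E[Z_t^i Z_t^j]$ (the $\rho$-order and $R\e^{-1}$-order quadratic terms), and $\E[Z_t^i Z_t^j]$ again together with $\E[\dot H_t^\top \Lambda \dot H_t]$, which is also quadratic in $Z_t$ after multiplying by $\e$. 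Crucially, the $R\e^{-1}$-weighted quadratic term in $\tilde D$ is \emph{nonpositive} and hence may simply be dropped when bounding from above; when bounding from below one uses that $\e^{-1}\E[(\tilde D_t)^\top C^{-1}\tilde D_t]$ is itself $O(1)$ after time integration because, by~\eqref{control-quadratics} applied with $\la_i$ of order one and the extra $\e/\la$ prefactor, $\int_0^\infty e^{-\rho t}\e^{-1}\E[|\tilde D_t|^2]dt$ is controlled — the $\e^{-1}$ is killed by the $\e$ gained from the OU averaging, after noting $\tilde D$ has mean reversion rate of order $\e^{-1}$ so that its stationary scale is itself $O(\e^{1/2})$ from the martingale increments and $O(1)$ from the initial datum $d$.

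Applying Lemma~\ref{expansion} termwise and integrating $\int_0^\infty e^{-\rho t}(\cdot)\,dt$, each contribution splits into (a) a transient piece of the form $\int_0^\infty e^{-\rho t}\cdot c\,e^{-\la t/\e}(|Z_0|^2 + |Z_0|^4 + |Z_0 \cM_0|)dt \le c(|d|^4 + |h-\cM(y)|^4 + |\cM(y)|^2 + \dots)$ after bounding $\int_0^\infty e^{-\rho t}e^{-\la t/\e}dt \le \rho^{-1}$, and (b) a ``bulk'' piece $\int_0^\infty e^{-\rho t}\cdot\frac{c\e}{\la}\sup_{s\le t}\E[\,1 + |\cM_s|^4 + |\cL^Y\cM_s|^4 + |(\sigma_Y^\top\pa_y\cM)_s|^4\,]dt$, which — in the first-moment case — is $\le c\, M_\cM(y)$ by definition of $M_\cM$ in Assumption~\ref{assume-expansion}(ii), while the $|\cM_s|^4$ terms are absorbed similarly. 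The terms $|\Sigma(y)|^2$ and $M_\Sigma(y)$ enter only through the boundedness statement and the bound on $V^0$; since $V^0$ appears on the left already subtracted, and $\cM = \Sigma^{-1}\mu/\gamma$, these appear in the constant-order part. Summing the finitely many terms yields the claimed bound, and the constants $\e_\a, c$ depend only on $\a$, $\Lambda$, $C$, $R$, $\gamma$, $\underline m$.

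The main obstacle I anticipate is the bookkeeping around the $R\e^{-1}$-weighted quadratic term in Lemma~\ref{lem:frictional-expansion}: one must verify that $\e^{-1}\int_0^\infty e^{-\rho t}\E[|\tilde D_t|^2]dt = O(1)$ uniformly in small $\e$, rather than blowing up. This requires using the second estimate~\eqref{control-quadratics} of Lemma~\ref{expansion} with the rate $\la_i \sim \e^{-1}$ \emph{before} diagonalization is helpful here — more precisely, applying It\^o directly to $\e^{-1}|\tilde D_t|^2$-type quantities one sees the $\e^{-1}$ mean-reversion produces an $\e$ factor that exactly cancels. The cleanest route is: establish an a priori bound $\sup_t \E[|\tilde D_t|^2] \le c(\e + |d|^2)$ and, separately, $\int_0^\infty e^{-\rho t}\E[|\tilde D_t|^2]dt \le c\e(1 + |d|^2 + \text{[moment terms]})$ by a direct Gr\"onwall/It\^o argument, which then makes the $\e^{-1}$ prefactor harmless. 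Everything else is a routine, if lengthy, application of Young's inequality and the two displays of Lemma~\ref{expansion}.
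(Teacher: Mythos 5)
Your proposal is correct and follows essentially the same route as the paper's proof: plug the feedback control $\dot H^{\e,\a I_n,\a C}$ into Lemma~\ref{lem:frictional-expansion}, diagonalize the $2n\times 2n$ constant drift matrix to reduce the coupled system $(\tilde D, H-\cM)$ to scalar OU processes in the form of Lemma~\ref{expansion}, and use the $\e/\lambda$ prefactor in \eqref{cont-1term} and \eqref{control-quadratics} to neutralize the $\e^{-1}$ weight on the quadratic terms. Your extra deliberation about dropping the nonpositive term and the alternative Gr\"onwall argument is unnecessary -- Lemma~\ref{expansion} already gives a \emph{two-sided} bound on $\E[X_t^i X_t^j \tilde M_t]$ with the $\e/\lambda$ gain, which handles the $R\e^{-1}$-weighted term on both sides directly, exactly as the paper does.
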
 

\begin{proof}
Fix a control $\dot H^{\e,\a I_n,\a C}$ as in \eqref{eq:constcontrol} and set 
$$
N=\left( \begin{array}{cc}
RI_n+\a C^2 &  \a C  \\
\a C & \a I_n   \end{array} \right).
$$
Then, in matrix-vector notation, the corresponding state dynamics are
 $$dX_t:=d\left( \begin{array}{c}
\tilde D_t \\
 H_t-\cM_t \end{array} \right)=-\frac{N}{\e} X_t dt-d\bar\cM_t, \quad \mbox{where } \bar \cM:=\left( \begin{array}{c}
0 \\
 \cM\end{array} \right).$$

In view of \cite[Theorem 3]{silvester2000determinants}, the matrix $N$ is symmetric positive definite. Thus, it is diagonalisable and can be written as 
$$N=BMB^{-1},$$
where $M=\mathrm{diag}[\mu_1,\ldots,\mu_{2n}]$ is the diagonal matrix with entries $\mu_i>0$ and $\mu_1=\min_{i \in \{1,\ldots,2n\}} \mu_i$, $\mu_{2n}=\max_{i \in \{1,\ldots,2n\}} \mu_i$  
The rescaled state variable 
$$\tilde X_t=B^{-1}X_t$$ 
in turn has dynamics $d\tilde X_t=-\e^{-1}M\tilde X_t dt-d(B^{-1}\bar\cM_t)$ or, equivalently,
$$d\tilde X_t^i:=-\frac{\mu_i}{\e} \tilde X_t^i dt-d(B^{-1}\bar\cM)^i_t, \quad i=1\ldots 2n.$$
Whence, after these transformations the components of $\tilde X_t$ satisfy the assumptions of Lemma~\ref{expansion}, so that the estimates provided there can be used to bound moments involving $\tilde{X}_t$. To bring this to bear, we now use Lemma~\ref{lem:frictional-expansion} to express the quantity we want to bound in terms of $\tilde X_t$ as follows,
\begin{align}\label{eq:withtildeX}
&\frac{\cJ^\e(d\e,h,y;\dot H^{\e,\a I_n,\a C})-V^0(y)}{\e} +h^\top d-\frac{d^\top   C^{-1}d}{2}=\rho\E\left[\int_0^\infty e^{-\rho t}\cM_t^\top \tilde D_t dt\right]\notag\\
&\quad\quad-\frac{1}{2}\E\left[\int_0^\infty e^{-\rho t}X_t^\top\left(\left( \begin{array}{cc}
{(2R\e^{-1}+\rho) C^{-1} } &  -\rho I_n \notag \\
-\rho I_n   & \e^{-1}\gamma\Sigma_t   \end{array} \right)+\a^2\hat C\Lambda^{-1}\hat C^\top \right)X_tdt\right]\\
&=\rho\E\left[\int_0^\infty e^{-\rho t}(\cM_t^\top,0)B\tilde X_t dt\right]-\frac{1}{2}\E\left[\int_0^\infty e^{-\rho t}\tilde X_t^\top B^\top\left(\begin{array}{cc}
{\rho C^{-1} } &  -\rho I_n  \\
-\rho I_n   &0  \end{array} \right) B\tilde X_tdt\right]\notag\\
&\quad\quad-\frac{\e^{-1}}{2}\E\left[\int_0^\infty e^{-\rho t}\tilde X_t^\top B^\top\left(\left( \begin{array}{cc}
{2RC^{-1}} & 0  \\
0   & \gamma\Sigma_t   \end{array} \right)+\a^2\hat C\Lambda^{-1}\hat C^\top \right)B\tilde X_tdt\right]\notag\\
\end{align}
We first use \eqref{cont-1term} to obtain the following bound (for a generic constant $c>0$ depending on $\mu_1, \mu_{2n}$),
\begin{align*}
&\rho\left|\E\left[\int_0^\infty e^{-\rho t}(\cM_t^\top,0)B\tilde X_t dt\right]\right|\leq c \e \left(|d^2|+|h-\cM(y)|^2\right)\\
&\quad\quad+ c \e \int_0^\infty e^{-\rho t} \sup_{0\leq s \leq t}\E\left[|\cM(Y_s)|^2+|\cL^Y\cM(Y_s)|^2+|\sigma_Y^\top \pa_y \cM(Y_s)|^2\right]dt\\
&\leq c \e \left(1+|d|^2+|h-\cM(y)|^2+M_\cM(y)\right).
\end{align*}
Similarly, using \eqref{control-quadratics}, we can control the absolute value of the last two terms of the right-hand side of \eqref{eq:withtildeX} as follows ($c>0$ is again a generic constant depending on $\mu_1, \mu_{2n}$):
\begin{align*}
 & \left(1+|d|^4+|h-\cM(y)|^4+|\Sigma (y)|^2\right)+\int_0^\infty e^{-\rho t}\sup_{0\leq s\leq t}\E\left[1+|\Sigma_s|^4+|\cL^Y \Sigma_s|^2 \right.\\
 &\qquad\qquad\qquad\qquad\qquad\left.+ |\cL^Y\cM_s|^{4}+|(\sigma_Y^\top \pa_y \cM)_s|^{4}+|(\sigma_Y^\top \pa_y \Sigma)_s|^{4} \right]dt.\\
 &\qquad \leq c \left(1+|d|^4+|h-\cM(y)|^4+|\Sigma (y)|^2+M_\Sigma(y)+M_\cM(y)\right).
\end{align*}
Together, these two estimates show that there exists a constant $c>0$ such that, for all sufficiently small $\e>0$:  
\begin{align*}
\notag&\left|\frac{\cJ^\e(d\e,h,y;\dot H^{\e,\a I_n,\a C})-V^0(y)}{\e} +h^\top d-\frac{d^\top   C^{-1}d}{2}\right|\\
&\qquad\leq c \left(1+|d|^4+|h-\cM(y)|^4+|\Sigma (y)|^2+M_\Sigma(y)+M_\cM(y)\right).
\end{align*}
Whence, the bound \eqref{eq:expansion} is indeed satisfied.
\end{proof}

 \section{Semilimits}\label{s.semilimits}
 As discussed in Section~\ref{ss.incorp}, to establish the value expansion in Theorem~\ref{thm:expansion} we study the rescaled function 
 \begin{equation}\label{eq:def-ue}
  u^\e(d,h,y):=\frac{V^0(y)-\tilde V^\e(d,h,y)}{\e}=\frac{V^0(y)- V^\e(\e d,h,y)}{\e}-h^\top d+\frac{d^\top   C^{-1}d}{2}.
  \end{equation}
 As this object has no a-priori regularity, we follow \cite{soner.touzi.13,moreau.al.15} and consider its upper and lower semicontinuous envelopes,
\begin{equation*}
 u^{*,\e}(d,h,y):=\frac{V^0(y)-\tilde V^{\e}_*(d,h,y)}{\e},\qquad u^\e_*(d,h,y):=\frac{V^0(y)-\tilde V^{*,\e}(d,h,y)}{\e},
 \end{equation*}
and the corresponding upper and lower ``semilimits'',
\begin{align}\label{eq:semi-limits-bar}
 u^*(\theta):=\limsup_{\substack{\e\to 0,\\ \theta'\to \theta }} u^{*,\e}(\theta'), \qquad u_*(\theta):=\liminf_{\substack{\e\to 0,\\ \theta'\to \theta }} u^\e_*(\theta'), \qquad \theta=(d,h,y).
\end{align}
By definition, $ u^*(\theta)\geq u_*(\theta)$. In the following sections, we use viscosity techniques to establish the converse inequality. Then, the two semilimits coincide and -- again by definition -- also equal the actual limit \eqref{eq:def-ue} we are interested in. 

The first step to carry out this program is to show that the ratio \eqref{eq:def-ue} is locally bounded, so that its upper and lower envelopes are indeed finite, upper and lower semicontinuous functions.

\begin{prop}\label{local-bound-ratios}
Suppose Assumption~\ref{assume-expansion} is satisfied. Then there exists $\e_0>0$ such that $(d,h,y) \mapsto \sup_{\e\in(0,\e_0)}|u^\e(d,h,y)|$
is locally bounded on $\R^n\times\R^n\times \cD$. Hence $u^*$ and $u_*$ are upper and lower semicontinuous functions, respectively, and there exists a constant $c>0$ such that 
\begin{align*}c \left(1+|d|^4+|h-\cM(y)|^4+|\Sigma (y)|^2+M^*_\Sigma(y)+M_\cM^*(y)\right)& \geq u^*(d,h,y)\\
&\geq u_* (d,h,y)\geq  0.
\end{align*}
\end{prop}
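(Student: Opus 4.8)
The plan is to sandwich $u^\e$ between $0$ and the locally bounded majorant on the right-hand side, uniformly for small $\e$, and then pass to the relaxed semilimits; both bounds are consequences of results already in hand. For the \emph{lower bound}, recall that by Lemma~\ref{lem:frictional-expansion} — more precisely, by the matrix estimate established in its proof, which shows that the relevant block matrix family is bounded below by $-\underline m I_{2n}$ for $\e$ small, combined with the nonnegativity of the temporary-cost term — one has $\cJ^\e(\e d,h,y;\dot H)\le V^0(y)-\e\big(h^\top d-\frac{d^\top C^{-1}d}{2}\big)$ for every $\dot H\in\cA_\rho^\e$ and every $\e\in(0,\e_0)$. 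Taking the supremum over admissible $\dot H$ gives $\tilde V^\e(d,h,y)\le V^0(y)$, hence $u^\e\ge 0$; since $V^0$ is continuous, the upper-semicontinuous envelope also satisfies $\tilde V^{*,\e}\le V^0$, so $u^\e_*\ge 0$ and therefore $u_*\ge 0$.

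For the \emph{upper bound}, fix $\a=1$ and consider the feedback policy $\dot H^{\e,I_n,C}$ of \eqref{eq:constcontrol}. One checks, from the Ornstein--Uhlenbeck-type dynamics \eqref{def:tildeD} of the rescaled state and the moment bounds of Assumption~\ref{assume-expansion}(ii), that this policy is admissible, so that $V^\e(\e d,h,y)\ge\cJ^\e(\e d,h,y;\dot H^{\e,I_n,C})$. Hence, for $\e\in(0,\e_\a)$ with $\e_\a$ and $c$ from Proposition~\ref{prop:expansion},
\begin{align*}
u^\e(d,h,y)
&=\frac{V^0(y)-V^\e(\e d,h,y)}{\e}-h^\top d+\frac{d^\top C^{-1}d}{2}\\
&\le -\left(\frac{\cJ^\e(\e d,h,y;\dot H^{\e,I_n,C})-V^0(y)}{\e}+h^\top d-\frac{d^\top C^{-1}d}{2}\right)\\
&\le c\big(1+|d|^4+|h-\cM(y)|^4+|\Sigma(y)|^2+M_\Sigma(y)+M_\cM(y)\big).
\end{align*}
Letting $\e_0$ be the minimum of the thresholds from Lemma~\ref{lem:frictional-expansion} and Proposition~\ref{prop:expansion}, we get $0\le u^\e\le c\big(1+|d|^4+|h-\cM(y)|^4+|\Sigma(y)|^2+M_\Sigma(y)+M_\cM(y)\big)$ for all $\e\in(0,\e_0)$. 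The right-hand side is locally bounded on $\R^n\times\R^n\times\cD$ because $\cM$ and $\Sigma$ are continuous and $M_\Sigma,M_\cM$ are locally bounded by Assumption~\ref{assume-expansion}(ii); this is the asserted local boundedness of $(d,h,y)\mapsto\sup_{\e\in(0,\e_0)}|u^\e(d,h,y)|$.

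To \emph{pass to the semilimits}, note that $\tilde V^\e_*\le\tilde V^\e\le V^0$, while the upper bound on $u^\e$ gives $\tilde V^\e(\theta')\ge V^0(y')-\e c\big(1+|d'|^4+|h'-\cM(y')|^4+|\Sigma(y')|^2+M_\Sigma(y')+M_\cM(y')\big)$; taking lower-semicontinuous envelopes in the latter, and using continuity of $V^0$, $|\cdot|^4$ and $\Sigma$ together with the definition of the envelopes $M^*_\Sigma,M^*_\cM$, yields $0\le u^{*,\e}(d,h,y)\le c\big(1+|d|^4+|h-\cM(y)|^4+|\Sigma(y)|^2+M^*_\Sigma(y)+M^*_\cM(y)\big)$, uniformly in $\e\in(0,\e_0)$. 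The relaxed upper limit of a locally uniformly bounded family is automatically upper semicontinuous, and the relaxed lower limit lower semicontinuous (cf.~\cite{fleming.soner.06}); passing to $\limsup_{(\e,\theta')\to(0,\theta)}$ in the last display — and noting that $M^*_\Sigma,M^*_\cM$ are already upper semicontinuous — gives $c\big(1+|d|^4+|h-\cM(y)|^4+|\Sigma(y)|^2+M^*_\Sigma(y)+M^*_\cM(y)\big)\ge u^*(d,h,y)\ge u_*(d,h,y)\ge 0$, as claimed.

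The only point requiring genuine care is the admissibility of the feedback policy $\dot H^{\e,I_n,C}$: the transversality condition \eqref{eq:trans} must be verified from the explicit Ornstein--Uhlenbeck representation of the rescaled state, using the moment estimates underlying Lemma~\ref{expansion} and Assumption~\ref{assume-expansion}(ii) to dominate the discount factor $e^{-\rho t}$. Everything else is bookkeeping — keeping track of which semicontinuous envelope ($\tilde V^\e_*$, $\tilde V^{*,\e}$, $M^*_\Sigma$, $M^*_\cM$) appears at each step, and invoking the standard semicontinuity of relaxed semilimits.
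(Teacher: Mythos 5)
Your upper bound is exactly the paper's: the suboptimal feedback family \eqref{eq:constcontrol}, its admissibility (in the paper via Proposition~\ref{prop:addmisibility}), and the estimate of Proposition~\ref{prop:expansion}, followed by the envelope bookkeeping. That part is fine. The genuine gap is in your lower bound. You claim that the matrix estimate from the proof of Lemma~\ref{lem:frictional-expansion}, plus nonnegativity of the temporary-cost term, yields $\cJ^\e(\e d,h,y;\dot H)\le V^0(y)-\e\bigl(h^\top d-\tfrac{d^\top C^{-1}d}{2}\bigr)$ for \emph{every} admissible $\dot H$, i.e.\ $u^\e\ge 0$ pointwise for each small $\e$. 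But the first term in the identity of Lemma~\ref{lem:frictional-expansion} is $\rho\,\E\bigl[\int_0^\infty e^{-\rho t}(H_t^{\theta,\dot H})^\top \tilde D_t^{\theta,\dot H}\,dt\bigr]$ with the \emph{full} position $H_t$, whereas the positive-definiteness of the block matrix only controls the quadratic form in $(\tilde D_t,\,H_t-\cM_t)$. Writing $H_t^\top\tilde D_t=(H_t-\cM_t)^\top\tilde D_t+\cM_t^\top\tilde D_t$, the leftover cross term $\rho\,\cM_t^\top\tilde D_t$ is sign-indefinite and is not absorbed by your argument. In fact the asserted pointwise inequality is false in general: with constant coefficients, $h=\cM$, $d=\delta C\cM$ and $\dot H\equiv 0$ (so $H_t\equiv\cM$, $\tilde D_t=de^{-Rt/\e}$), the identity of Lemma~\ref{lem:frictional-expansion} gives $\tfrac{\cJ^\e-V^0}{\e}+h^\top d-\tfrac{d^\top C^{-1}d}{2}=\tfrac{\e\rho}{\e\rho+R}\,\cM^\top d-\tfrac12 d^\top C^{-1}d>0$ once $0<\delta<\tfrac{2\e\rho}{\e\rho+R}$, so $\tilde V^\e>V^0$ and $u^\e<0$ there. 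This is precisely why the proposition only asserts $u_*\ge 0$ for the \emph{semilimit}, not $u^\e\ge 0$.

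The paper's Step 1 handles this differently: it keeps the cross term and obtains, uniformly over admissible controls, $u^\e\ge -\rho\,\E\bigl[\int_0^\infty e^{-\rho t}\cM_t^\top\tilde D_t^{\theta,\dot H}dt\bigr]$, a quantity which is of order $\e$ (locally bounded in $(d,h,y)$, e.g.\ after a Young-type absorption of $\rho\cM_t^\top\tilde D_t$ into the retained coercive term $R\e^{-1}\tilde D_t^\top C^{-1}\tilde D_t$, leaving an error $\le c\e(1+M_\cM(y))$) and vanishes as $\e\to0$; nonnegativity then holds only in the limit, $u_*\ge 0$. Your argument needs the same repair: replace the claimed pointwise inequality $\tilde V^\e\le V^0$ by a locally bounded lower bound for $u^\e$ of the above type (this also fixes the local boundedness of $\sup_{\e\in(0,\e_0)}|u^\e|$, which in your write-up rests on the false bound $u^\e\ge 0$), and deduce $u_*\ge 0$ only after taking the lower semilimit. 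With that correction, the rest of your proof goes through and coincides with the paper's.
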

\begin{proof}{\it Step 1: lower bound.} By Lemma \ref{lem:frictional-expansion}, for $\theta=(\e d, h, y)$,
\begin{align*}
&\frac{V^0(y)-\cJ^\e(\e d,h,y; \dot{H})}{\e}-h^\top d+\frac{d^\top   C^{-1}d}{2}\\
&=-\rho\E\left[\int_0^\infty e^{-\rho t}(H_t^{\theta,\dot H} )^\top \tilde D_t^{\theta,\dot H} dt\right]+\E\left[\int_0^\infty e^{-\rho t}(\tilde D_t^{\theta,\dot H})^\top\frac{(2R\e^{-1}+\rho) C^{-1} }{2}\tilde D_t^{\theta,\dot H}dt\right]\\
&\quad+\E\left[\int_0^\infty \frac{e^{-\rho r}}{2\e}\left((H_t^{\theta,\dot H}-\cM_t)^\top\gamma\Sigma_t (H_t^{\theta,\dot H}-\cM_t)+\e\dot{H}_t^\top\Lambda_t\dot{H}_t \right)dt\right].\notag
\end{align*}
As shown in the proof of Lemma~\ref{lem:frictional-expansion} the following matrix is positive for sufficiently small $\e$,
 $$\left( {\begin{array}{cc}
  {(2R\e^{-1}+\rho) C^{-1}} & -\rho I_n \\      - \rho I_n & \e^{-1}\gamma\Sigma(y) \      \end{array} } \right).$$
  Whence, the following lower bound is valid for all sufficiently small $\e>0$,
$$\frac{1}{\e}(V^0(y)-\cJ^\e(\e d,h,y,\dot H))- h^\top d+\frac{d^\top   C^{-1}d}{2}\geq-\rho\E\left[\int_0^\infty e^{-\rho t}(\cM_t)^\top \tilde D_t^{\theta,\dot H} dt\right].$$

By Lemma \ref{expansion} the following convergence holds locally uniformly in $(d,h,y)$ as $\e\to 0$, which shows that \eqref{eq:def-ue} is indeed bounded from below:
$\E\left[\int_0^\infty e^{-\rho t}(\cM_t)^\top \tilde D_t^{\theta,\dot H} dt\right]\to 0.$

{\it Step 2: upper bound.} We now derive an upper bound for $u^\e$. By definition of $V^\e$ it is sufficient to find a family of control $\dot H^\e\in \cA^\e_\rho$ and $\e_0>0$ such that  $(V^0(y)-\cJ^\e(d\e,h,y;\dot H^\e))/\e$
is bounded from above by an appropriate function for all $\e\in(0,\e_0)$. 

For fixed $\a>0$, we use the strategy $\dot H^{\e,\a I_n,\a C}$ from \eqref{eq:constcontrol}. Note that, as mentioned in the proof of Proposition \ref{prop:expansion},
the matrix 
$$I_{2n}\left( \begin{array}{cc}
RI_n+\a C^2 &  \a C  \\
\a C & \a I_n   \end{array} \right)+I_{2n} \left( \begin{array}{cc}
RI_n+\a C^2 &  \a C  \\
\a C & \a I_n   \end{array} \right)=2\left( \begin{array}{cc}
RI_n+\a C^2 &  \a C  \\
\a C & \a I_n   \end{array} \right)$$
has only positive eigenvalues. Thus, Proposition \ref{prop:addmisibility} shows that these trading rates are admissible. For sufficiently small $\e$, Proposition~\ref{prop:expansion} yields
\begin{align*}
&\frac{V^0(y)-\cJ^\e(d\e,h,y;\dot H^\e)}{\e}   -h^\top d+\frac{d^\top   C^{-1}d}{2}\\
&\quad\leq  c \left(1+|d|^4+|h-\cM(y)|^4+|\Sigma (y)|^2+M_\Sigma(y)+M_\cM(y)\right),
\end{align*}
which in turn gives the result. 
\end{proof}

\section{Corrector Equations}\label{s.corrector.eq}
In this and the subsequent section, we show that for all $(d,h,y)\in \R^n\times\R^n\times \cD$ we have 
\begin{equation}\label{eq:bla}
u^*(d,h,y)=u_*(d,h,y)=u(y)+\varpi(d,h-\cM(y),y),
\end{equation}
where $u(y)$ is the solution of the second corrector equation from Lemma~\ref{lem:feynmann}. As a consequence, the limit of \eqref{eq:def-ue} indeed exists and is given by this expression in line with the expansion from Theorem~\ref{thm:expansion}. To this end, we first establish that the semilimits, evaluated along the frictionless state variables,
\begin{align}\label{eq:barup}
   u^* (y):=   u^*(0,\cM(y),y)\\
   u_* (y):=   u_*(0,\cM(y),y)\label{eq:bardown}
\end{align}
are viscosity sub- and supersolutions, respectively, of the second corrector equation~\eqref{corrector2}. The comparison principle for the second corrector equation (cf.~Assumption~\ref{assume-comparison}) in turn implies 
$$u^* (y)=  u_* (y).$$

We then show in Section \ref{s.first} that, as a function of $(d,h)$, $u^*(d,h,y)$ and $u_*(d,h,y)$ are viscosity sub- and supersolutions of a first-order PDE. The equality $u^*(0,\cM(y),y)=u_*(0,\cM(y),y)$
and a comparison result for this first-order equation in turn imply that the dependence of both functions in $(d,h)$ is given by \eqref{eq:bla}.

\subsection{Notations for the Proof of the Viscosity Property}
Set
\begin{align}\label{def:E1}
E_1(\xi_1,\xi_2,y,p_1,p_2)&:=\frac{\gamma \xi_2^\top\Sigma\xi_2}{2}+R\xi_1^\top C^{-1}\xi_1-Rp_1^\top\xi_1-\frac{1}{2}(Cp_1 +p_2)^\top  \Lambda^{-1} (Cp_1 +p_2)\\
&= f(\xi,y)- \cH(\xi,y,-p_1,-p_2).\notag
\end{align}
and  define the following differential operator acting on smooth functions,
\begin{align}\label{def:quadM}
\pa_{\cM\cM}w(\xi_1,\xi_2,y)=\mathrm{Tr}\left(c_\cM(y) \pa_{2,2}w(\xi_1,\xi_2,y)\right)
\end{align}
where $\pa_{2,2}w $ is the $n\times n$ dimensional Hessian matrix of $w$ corresponding to $\xi_2\in \R^n$.
Similarly to \cite[Definition 3.6]{moreau.al.15} and \cite[Definition 3.1]{soner.touzi.13}, we now define the ``first corrector equation'' for our expansion.

\begin{dfn}
Fix $y\in \cD$. The \emph{first corrector equation} is the differential equation 
\begin{align}\label{eq:fst}
-\frac{1}{2}\pa_{\cM\cM}w(\xi,y)-E_1(\xi,y,w(\xi,y),\pa_\xi w(\xi,y))+\tilde a(y)=0,\mbox{ for all }\xi\in\R^{2n},
\end{align}
where the unknown quantity is the couple $w:\R^{2n}\times \cD\to \R$ and $\tilde a:\cD\to \R$.
\end{dfn}
A direct computation shows that
$$E_1(\xi,y,\varpi(\xi,y),\pa_\xi\varpi(\xi,y))=0, \quad \mbox{for all $(\xi,y)\in\R^{2n}\times \cD$}$$
for the quadratic form $\varpi$ from \eqref{def:hatvarpi}. By definition of $a$ in \eqref{eq:a}, it in turn follows that the couple $(\varpi,a)$ is a solution of the first corrector equation~\eqref{eq:fst}.

Also note that due to our regularity assumptions on $\mu_Y$ and $\sigma_Y$, the following function is locally bounded on $\cD$,
\begin{align}\label{eq:barM}
\bar M(y):=\max\{1,|\mu_Y|,|\sigma_Y|,|\cM|,|\pa_y \cM|,|\pa_{yy}\cM|\}(y).
\end{align}

 \subsection{Expansion of the Generator}
 We now expand the generator of the PDE \eqref{eq:pdetilde}. To simplify notation, we set 
\begin{align}
 ^\e \xi:=\left( \begin{array}{c}
^\e \xi_1 \\
 ^\e \xi_2\end{array} \right)={ ^\e \xi}(d,h,y):=\frac{(d^\top,h^\top-\cM^\top(y))^\top}{\e^{1/2}}.
\end{align}
(Generally, for vectors in $\R^{2n}$ or $\R^{n}$, the left superscript as in $^\e \xi$ refers to the scaling factor of $^\e \xi$, whereas right superscripts indicate initial conditions $(d,h,y)$.) We also define the function 
\begin{align}\label{eq:R}
&R^\e(d,h,y,r,p,p',z,q):=-\rho(h^\top d-\frac{d^\top C^{-1}d}{2})-\e\rho r\notag\\
&\qquad\qquad\qquad+\e\Big(-\mu_Y^\top(y)p'-\frac{1}{2}\mathrm{Tr}\left(\sigma_Y(y)\sigma_Y^\top(y)q\right)-\e^{-1/2}\mu_Y^\top  (y) (J_{\bar \cM})^\top  p\\
&\qquad\qquad\qquad\qquad-\e^{-1/2} \sum_{i,j}(\sigma\sigma^\top)_{i,j}(J_\cM z)_{i,j}-\e^{-1/2}\frac{1}{2}\sum_{k=1}^{2d}p_k \mathrm{Tr} (\sigma_Y\sigma_Y^\top   \pa_{yy}{\bar \cM^k})\Big)\notag
\end{align}
where $J$ denote the Jacobian matrix.  

\medskip

 The first step of our proof is to derive an expansion of the action of the operator $G^\e$ from \eqref{dfn:G} on a class of smooth functions.
 
 \begin{prop}\label{thm:remainder} 
Suppose Assumption~\ref{assume-expansion} is satisfied and define
$$\psi^\e(d,h,y):=v(y)-\e\phi(d,h,y)-\e^2\omega\left(^\e \xi(d,h,y),y\right),$$
for smooth functions $v,\phi$ and $\omega$. Then,
  \begin{align}\label{eq:remainder}
   \rho\psi^\e &+ G^\e(d,h,y,\pa_d \psi^\e,\pa_h\psi^\e,\pa_y\psi^\e,\pa_{yy}\psi^\e)\\
 \notag &=\rho v-\cL^Y v -\frac{\mu^\top  \Sigma^{-1}\mu}{2\gamma}-\cH(d,y,-\pa_d\phi,-\pa_h\phi)\\
 \notag  &\quad-\e^{1/2}(\pa_h \phi+C\pa_d\phi)\Lambda^{-1}(\pa_2 w+C\pa_1w)\\
   \notag&\quad+\e R^\e(d,h,y,w(^\e \xi,y),\pa_\xi w(^\e \xi,y),\pa_y w(^\e \xi,y),\pa_{2,y}w(^\e \xi, y),\pa_{yy}w(^\e \xi,y))\\
   \notag& \quad+\e\left(E_1(^\e \xi_1,^\e \xi_2,y,\pa_1 w,\pa_2 w)+\frac{1}{2}\pa_{ \cM\cM}w(^\e \xi, y)-\rho \phi+\cL^Y \phi \right).
  \end{align}
  (Here, all functions of $\xi$ are evaluated at $^\e \xi$ and $\pa_i w$ is the gradient of $w$ in $\xi_i$.) Moreover, there exists a (sufficiently large) constant $c>0$ such that if the function $w$ satisfies
\begin{align}\label{eq:boundew}
|w|+|\pa_y w|+|\pa_{yy} w|+\sqrt{1+|\xi|^2} (|\pa_\xi w|+|\pa_{y\xi}w| )\leq \kappa(y)(1+|\xi|^2)
\end{align}
 for some function $\kappa$, then
 \begin{align}\label{eq:remainderquantitative}
 |R^\e(d,h,y,w(^\e \xi,y),\pa_\xi w(^\e \xi,y),\pa_y w(^\e \xi,y),\pa_{2,y}w(^\e \xi, y),\pa_{yy}w(^\e \xi,y))|\notag\\
 \quad\quad\leq c \kappa(y) \bar M^3(y) \e^{1/2}\sqrt{1+ {^\e \xi}^2}\left(1+\e^{1/2}\sqrt{1+ {^\e \xi}^2}\right).
 \end{align}
  \end{prop}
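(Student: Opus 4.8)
The plan is to prove the identity \eqref{eq:remainder} by a direct Taylor expansion of $G^\e$ along the smooth ansatz $\psi^\e$, and then to read off the bound \eqref{eq:remainderquantitative} term by term. Throughout write $w$ for the function denoted $\omega$ in $\psi^\e$, and abbreviate $\xi={}^\e\xi(d,h,y)=\e^{-1/2}(d^\top,h^\top-\cM^\top(y))^\top$. The only delicate point in the computation is the chain rule, since $y$ enters $w(\xi,y)$ through two channels — explicitly, and through ${}^\e\xi_2=\e^{-1/2}(h-\cM(y))$. From $\pa_d\xi=\e^{-1/2}(I_n,0)^\top$, $\pa_h\xi=\e^{-1/2}(0,I_n)^\top$ and $\pa_y\xi=-\e^{-1/2}J_{\bar\cM}$ with $\bar\cM=(0,\cM)^\top$, one finds $\pa_d\psi^\e=-\e\pa_d\phi-\e^{1/2}\pa_1w$ and $\pa_h\psi^\e=-\e\pa_h\phi-\e^{1/2}\pa_2w$; in particular $\pa_d\psi^\e/\e$ and $\pa_h\psi^\e/\e$ stay bounded, so the Hamiltonian term in $G^\e$ makes sense. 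Likewise $\pa_y\psi^\e=\pa_yv-\e\pa_y\phi+\e^{3/2}J_{\bar\cM}^\top\pa_\xi w-\e^2\pa_y w$ and $\pa_{yy}\psi^\e=\pa_{yy}v-\e\pa_{yy}\phi-\e^2\pa_{yy}[w(\xi,y)]$, where differentiating $w(\xi,y)$ twice in $y$ produces three scales: an $\e^{-1}$-piece, the $\xi$-Hessian $\pa_{2,2}w$ contracted with $J_{\bar\cM}(\cdot)J_{\bar\cM}^\top$; several $\e^{-1/2}$-pieces, involving the mixed derivative $\pa_{2,y}w$ and $\pa_\xi w$ contracted with $\pa_{yy}\bar\cM$; and the explicit Hessian $\pa_{yy}w$ at order $1$.

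Substituting these derivatives into $\rho\psi^\e+G^\e$, one reassembles the claimed decomposition. Since $f$ is $2$-homogeneous in its first two slots, $f(d,h-\cM(y),y)=\e f(\xi,y)$; expanding the convex Hamiltonian $\cH(d,\cdot,y,\pa_d\psi^\e/\e,\pa_h\psi^\e/\e)$ — quadratic in the momentum and affine in the first slot — around the ``$\phi$-momentum'' $(-\pa_d\phi,-\pa_h\phi)$ produces $\cH(d,y,-\pa_d\phi,-\pa_h\phi)$, a cross term $\e^{1/2}(\pa_h\phi+C\pa_d\phi)^\top\Lambda^{-1}(\pa_2w+C\pa_1w)$, and an order-$\e$ remainder which, together with $\e f(\xi,y)$ and the $-R\xi_1^\top p_1$ contribution, recombines — by the very definition $E_1=f-\cH(\cdot,-p_1,-p_2)$ — into $\e E_1({}^\e\xi_1,{}^\e\xi_2,y,\pa_1w,\pa_2w)$. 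For the second-order term one uses that $d\cM(Y_t)$ has diffusion coefficient $(\pa_y\cM\,\sigma_Y)(Y_t)$, hence $c_\cM=(\pa_y\cM)\sigma_Y\sigma_Y^\top(\pa_y\cM)^\top$, so $J_{\bar\cM}\sigma_Y\sigma_Y^\top J_{\bar\cM}^\top$ equals $c_\cM$ in its lower-right block and vanishes elsewhere; consequently $-\tfrac12\mathrm{Tr}(\sigma_Y\sigma_Y^\top\,\pa_{yy}\psi^\e)$ extracts exactly $\tfrac\e2\mathrm{Tr}(c_\cM\pa_{2,2}w)=\tfrac\e2\pa_{\cM\cM}w$ from the $\e^{-1}$-piece. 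The surviving order-$1$ contributions form the first line $\rho v-\cL^Yv-\tfrac{\mu^\top\Sigma^{-1}\mu}{2\gamma}-\cH(d,y,-\pa_d\phi,-\pa_h\phi)$ of \eqref{eq:remainder}, and the surviving order-$\e$ contributions form the last line $\e(E_1+\tfrac12\pa_{\cM\cM}w-\rho\phi+\cL^Y\phi)$. Everything else — the $\e^{3/2}$ mixed/$\pa_{yy}\bar\cM$ pieces of $\pa_{yy}\psi^\e$ and $\pa_y\psi^\e$, the $\e^2$ explicit $y$-derivatives of $w$, the term $-\e\rho(h^\top d-\tfrac{d^\top C^{-1}d}{2})$ from $G^\e$, and $-\e^2\rho w$ from $\rho\psi^\e$ — is, by inspection, precisely $\e R^\e$ with $R^\e$ as in \eqref{eq:R}, evaluated at $r=w(\xi,y)$, $p=\pa_\xi w$, $p'=\pa_y w$, $z=\pa_{2,y}w$, $q=\pa_{yy}w$. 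This establishes \eqref{eq:remainder}.

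For the quantitative estimate \eqref{eq:remainderquantitative} I would substitute $d=\e^{1/2}\xi_1$, $h=\cM(y)+\e^{1/2}\xi_2$ with $\xi={}^\e\xi$ into each summand of $R^\e$, bound the coefficient functions $\mu_Y,\sigma_Y,\pa_y\cM,\pa_{yy}\cM$ — which assemble into $J_{\bar\cM}$, $\pa_{yy}\bar\cM$ and $\sigma_Y\sigma_Y^\top$ — by $\bar M(y)$, noting that each term carries at most a cubic power of $\bar M$, and bound the derivatives of $w$ via \eqref{eq:boundew}, which gives $|w|,|\pa_y w|,|\pa_{yy}w|\le\kappa(y)(1+|\xi|^2)$ and $|\pa_\xi w|,|\pa_{2,y}w|\le\kappa(y)\sqrt{1+|\xi|^2}$. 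The summands carrying the explicit $\e^{-1/2}$ then contribute $\le c\,\kappa(y)\bar M^3(y)\e^{1/2}\sqrt{1+|\xi|^2}$ after the global factor $\e$ in $R^\e$; the genuine order-$\e^2$ summands contribute $\le c\,\kappa(y)\bar M^3(y)\e(1+|\xi|^2)$; and $-\rho(h^\top d-\tfrac{d^\top C^{-1}d}{2})=-\rho(\e^{1/2}\cM(y)^\top\xi_1+\e\xi_2^\top\xi_1-\tfrac\e2\xi_1^\top C^{-1}\xi_1)$ is $\le c\,\bar M(y)(\e^{1/2}\sqrt{1+|\xi|^2}+\e(1+|\xi|^2))$. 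Since $\bar M\ge1$, summing these yields $|R^\e|\le c\,\kappa(y)\bar M^3(y)\e^{1/2}\sqrt{1+|{}^\e\xi|^2}\,(1+\e^{1/2}\sqrt{1+|{}^\e\xi|^2})$, as claimed.

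The real work, and hence the main obstacle, is the bookkeeping in the first two steps: keeping track in the second-order term of the three scalings $\e^{-1},\e^{-1/2},\e^0$ generated by the double $y$-differentiation of $w(\xi(y),y)$, verifying that after the prefactor $-\e^2$ the $\e^{-1}$-piece collapses exactly to the Merton-variation operator $\pa_{\cM\cM}$ (via the identity $c_\cM=(\pa_y\cM)\sigma_Y\sigma_Y^\top(\pa_y\cM)^\top$), and confirming that every residual $\e^{-1/2}$-scaled term reappears in $R^\e$ with precisely the coefficients displayed in \eqref{eq:R}. By contrast, the expansion of $\cH$ is routine because $\cH$ is merely quadratic; no use of the Riccati equation \eqref{eq:matrixriccati} or of any non-degeneracy of $\cH$ is needed at this stage — those enter only later, for the viscosity properties of the semilimits.
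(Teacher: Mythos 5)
Your proposal follows essentially the same route as the paper: a direct chain‑rule computation of the derivatives of $\psi^\e$, substitution into $\rho\psi^\e+G^\e$, reassembly of the $\e^{-1}$–scaled piece of $\pa_{yy}\psi^\e$ into $\tfrac{1}{2}\pa_{\cM\cM}w$ via $c_\cM=(\pa_y\cM)\sigma_Y\sigma_Y^\top(\pa_y\cM)^\top$, and term‑by‑term domination of $R^\e$ using $\bar M$ and \eqref{eq:boundew}. One slip: you wrote $\pa_d\psi^\e=-\e\pa_d\phi-\e^{1/2}\pa_1 w$ and $\pa_h\psi^\e=-\e\pa_h\phi-\e^{1/2}\pa_2 w$, but the correct powers are $\e^{3/2}$ (since $\pa_d\bigl[\e^2\omega({}^\e\xi,y)\bigr]=\e^{3/2}\pa_1\omega$), which is also what you implicitly use when you observe that $\pa_d\psi^\e/\e$, $\pa_h\psi^\e/\e$ stay bounded and when you obtain the $\e^{1/2}$ (rather than $\e^{-1/2}$) cross term in the Hamiltonian.
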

  
  Note that the inequality \eqref{eq:remainderquantitative} is a quantitative version of the remainder estimate in \cite[Lemma 6.1, (Ri)]{moreau.al.15}. It implies that $R^\e$ from \eqref{eq:R} is bounded on bounded sets of $(d,h,y)$. Moreover, it shows that this remainder term converges to zero on sets of $(d,h,y)$ for which $({^\e \xi},y )$ is bounded. 
  
  \begin{proof}[Proof of Proposition~\ref{thm:remainder}]
  We first compute the required derivatives,
  \begin{align*}
  \pa_{y_i}\psi^\e&= \pa_{y_i} v-\e \pa_{y_i}\phi-\e^2 \pa_{y_i}w+\e^{3/2}\pa_2 w^\top \pa_{y_i} \cM,\\
    \pa_{y_i,y_j}\psi^\e&= \pa_{y_i,y_j}v-\e \pa_{y_i,y_j}\phi-\e^2 \pa_{y_i,y_j}w+\e^{3/2}\pa_{y_i,2}w^\top \pa_{y_j} \cM\\
    &\quad+\e^{3/2}\pa_{2,y_j} w^\top \pa_{y_i} \cM +\e^{3/2}\pa_2 w^\top \pa_{y_i,y_j}M-\e \pa_{y_j}M^\top\pa_{2,2} \pa_{y_i}M,\\
  \pa_{d}\psi^\e&=-\e\pa_d \phi -\e^{3/2}\pa_1 w,\\
  \pa_{h}\psi^\e&=-\e\pa_h \phi -\e^{3/2}\pa_2 w.
  \end{align*}
  As a consequence,
  \begin{align*}
   &\rho\psi^\e+ G^\e(d,h,y,\pa_d \psi^\e,\pa_h\psi^\e,\pa_y\psi^\e,\pa_{yy}\psi^\e)=\rho v-\cL^Y v-\frac{1}{2}\mu^\top (\gamma\Sigma)^{-1}\mu-Rd^\top \pa_d \phi\\
   &\qquad+\e\left(\cL^Y\phi -\rho\phi +f({^\e\xi},y)-R(^\e \xi_1)^\top \pa_1 w-\frac{1}{2}\pa_{\cM\cM}w \right)\\
      &\qquad-\frac{1}{2}\left(\pa_h \phi+C \phi_d+\e^{1/2}(\pa_2 w+C \pa_1 w)\right)^\top\Lambda^{-1}\left(\pa_h \phi+C \phi_d+\e^{1/2}(\pa_2 w+C \pa_1 w)\right)\\
  &\qquad+\e R^\e(d,h,y,w(^\e \xi,y),\pa_\xi w(^\e \xi,y),\pa_y w(^\e \xi,y),\pa_{\xi_2,y}w(^\e \xi,y),\pa_{yy}w(^\e \xi,y))\\
   &=\rho v-\cL^Y v-\frac{1}{2}\mu^\top (\gamma\Sigma)^{-1}\mu-\cH(d,y,-\pa_d\phi,-\pa_h\phi)\\
   &\qquad+\e\left(\cL^Y\phi -\rho\phi +E_1({^\e \xi},y,\pa_1 w,\pa_2 w) +\frac{1}{2}\pa_{ \cM\cM}w\right)\\
   &\qquad-\frac{\e^{1/2}}{2}\left(\pa_h \phi+C \phi_d\right)^\top\Lambda^{-1}\left(\pa_2 w+C \pa_1 w\right)\\
    &\qquad+\e R^\e(d,h,y,w,\pa_\xi w,\pa_y w,\pa_{\xi_2,y}w,\pa_{yy}w).
  \end{align*}

We now prove the inequality \eqref{eq:remainderquantitative} under the additional assumption \eqref{eq:boundew} by dominating each of the terms in the definition \eqref{eq:R} of $R^\e$. Here, the upper bound for $-\rho(h^\top d-\frac{d^\top C^{-1}d}{2})-\e\rho w({^\e \xi},y)$ follows from \eqref{eq:boundew} and the definition of $^\e\xi$. For the remaining terms, we can dominate all terms that only depend on $y$ but not on $w$ or its derivatives by a constant multiple of $\bar M^3(y)$. Whence it remains to estimate the following upper bound,
$$c\bar M^3(y)\left(\e |\pa_{y}w(^\e \xi,y)|+\e |\pa_{yy}w(^\e \xi,y)|+\e^{1/2}|\pa_{\xi_2}w(^\e \xi,y)|+\e^{1/2}|\pa_{\xi_2,y}w(^\e \xi,y)|\right).$$
Taking into account the condition on $w$, this in turn yields the desired upper bound.
 \end{proof}

\subsection{Viscosity Subsolution Property} 
The proof of viscosity properties in this section and the following requires to construct local minima or maxima. Then, we use the viscosity property of $\tilde V^\e$ at these extrema. The construction of the extrema is classical in homogenisation theory and is similar to the proofs of \cite[Proposition 6.3 and Proposition 6.4]{moreau.al.15}. We therefore only outline this construction. In contrast, we give more details on how to use the viscosity property of $\tilde V^\e$, because the quantities that need to be controlled are more involved here. For example, the linear part of the Hamiltonian $\cH$ is new here and its sign needs to be controlled separately.  To ease comparison to the corresponding arguments in  \cite[Proposition 6.3 and Proposition 6.4]{moreau.al.15}, the proofs are broken up into the same steps are there.

\begin{prop}\label{prop:sub}
Suppose the assumptions of Theorem \ref{thm:expansion} are satisfied. Then \eqref{eq:barup} is a viscosity subsolution of the corrector equation \eqref{corrector2}.
\end{prop}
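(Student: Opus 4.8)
The plan is to follow the standard viscosity-solution stability argument for homogenization, adapted to the rescaled value function $\tilde V^\e$ and its semilimit $u^*$. Let $\varphi$ be a smooth test function and suppose $y_0 \in \cD$ is a strict local maximum of $u^*(\cdot) - \varphi(\cdot)$, where $u^*(y) = u^*(0,\cM(y),y)$. I would first recall the solution $(\varpi, a)$ of the first corrector equation \eqref{eq:fst}, with $\varpi$ the quadratic form from Lemma~\ref{lem:ricatti}, which by construction satisfies $E_1(\xi,y,\pa_\xi \varpi(\xi,y)) = 0$ and $\frac12 \pa_{\cM\cM}\varpi(\xi,y) = a(y)$. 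The candidate for the limiting equation is then $\rho u(y) = \cL^Y u(y) + a(y)$, i.e.\ \eqref{corrector2}. The ansatz for the test function in the original scaled variables is
\begin{align*}
\psi^\e(d,h,y) := V^0(y) - \e\big(\varphi(y) + h^\top d - \tfrac{d^\top C^{-1}d}{2}\big) - \e^2 \varpi\big({}^\e\xi(d,h,y), y\big),
\end{align*}
so that $u^{*,\e}$ compared against $\varphi$ plus the frictional corrections matches $\tilde V^\e$ compared against $\psi^\e$ up to the $h^\top d$ shift absorbed in \eqref{dfn:tildeV}.

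Second, I would use Proposition~\ref{local-bound-ratios} to extract, from the definition of the semilimit \eqref{eq:semi-limits-bar}, a sequence $\e_k \to 0$ and points $\theta_k = (d_k, h_k, y_k) \to (0, \cM(y_0), y_0)$ realizing the limsup, i.e.\ $u^{*,\e_k}(\theta_k) \to u^*(y_0)$ with $(d_k, (h_k-\cM(y_k))/\e_k^{1/2})$ controlled. The key point, exactly as in the construction in \cite[Proposition 6.3]{moreau.al.15}, is that one perturbs $\psi^\e$ by a penalization in $({}^\e\xi, y)$ to force the argmax of $\tilde V^{\e} - \psi^\e$ (restricted to a small neighborhood) to occur at an interior point $\hat\theta_{\e}$ with $\hat\theta_\e \to (0, \cM(y_0), y_0)$ and, crucially, with $^\e\xi(\hat\theta_\e)$ remaining bounded along the sequence — this is where the bound \eqref{lower-bound-varpi} on $\varpi$ and the growth estimate in Proposition~\ref{local-bound-ratios} are used, to rule out escape of the argmax to infinity in the fast variable. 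Since $\tilde V^\e$ is a viscosity subsolution of \eqref{eq:pdetilde}, evaluating the subsolution inequality at $\hat\theta_\e$ with the smooth test function $\psi^\e$ (plus penalization) gives
\begin{align*}
\rho \tilde V^\e(\hat\theta_\e) + G^\e\big(\hat\theta_\e, \pa_d \psi^\e, \pa_h \psi^\e, \pa_y \psi^\e, \pa_{yy}\psi^\e\big) \geq 0
\end{align*}
up to penalization terms that vanish.

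Third, I would invoke the generator expansion of Proposition~\ref{thm:remainder} with $v = V^0$, $\phi(d,h,y) = \varphi(y) + h^\top d - \frac{d^\top C^{-1} d}{2}$, and $\omega = \varpi$. The leading-order terms: $\rho V^0 - \cL^Y V^0 - \frac{\mu^\top \Sigma^{-1}\mu}{2\gamma}$ vanishes by Assumption~\ref{finite-V0}; the Hamiltonian term $-\cH(d,y,-\pa_d\phi,-\pa_h\phi)$ needs to be handled carefully because of the new linear-in-$p$ part $R\xi_1^\top p_1$ — here one uses that $d = \hat d_\e \to 0$, so $^\e\xi_1 = \hat d_\e/\e^{1/2}$ bounded forces $\hat d_\e \to 0$ and the linear term is controlled; the $\e^{1/2}$-order cross term $(\pa_h\phi + C\pa_d\phi)\Lambda^{-1}(\pa_2\varpi + C\pa_1\varpi)$ vanishes as $\e \to 0$; the $\e R^\e$ remainder is $o(\e)$ by \eqref{eq:remainderquantitative} since $^\e\xi(\hat\theta_\e)$ is bounded; and the bracket $E_1(^\e\xi, y, \pa_1\varpi, \pa_2\varpi) + \frac12\pa_{\cM\cM}\varpi - \rho\varphi + \cL^Y\varphi$ collapses, using $E_1(\xi,y,\pa_\xi\varpi) = 0$ and $\frac12\pa_{\cM\cM}\varpi = a$, to $a(y) - \rho\varphi(y) + \cL^Y\varphi(y)$. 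Dividing the subsolution inequality by $\e$, passing to the limit $\e_k \to 0$ using $\tilde V^{\e_k}(\hat\theta_{\e_k})/\e_k \to$ a quantity whose $\e$-coefficient is $\varphi(y_0) - u^*(y_0)$ (after unwinding the shift), and using continuity of all the coefficient functions at $y_0$, yields $\rho u^*(y_0) - \cL^Y\varphi(y_0) - a(y_0) \leq 0$, which is precisely the viscosity subsolution inequality for \eqref{corrector2}.

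The main obstacle I anticipate is the second step: establishing that the argmax $\hat\theta_\e$ of the penalized difference stays interior and, more importantly, that $^\e\xi(\hat\theta_\e)$ remains bounded along the approximating sequence. Unlike in \cite{moreau.al.15}, here the fast variable is $2n$-dimensional and includes the price-distortion coordinate $d/\e^{1/2}$, and the Hamiltonian $\cH$ is degenerate in $p$ (Remark~\ref{rem:ham}), so the usual coercivity argument must instead rely on the quadratic lower bound $\varpi(\xi,y) \geq \d_0|\xi|^2$ from \eqref{lower-bound-varpi} together with the quartic upper bound on $u^*$ from Proposition~\ref{local-bound-ratios}; reconciling the $\e^2\varpi(\xi,y)$ scaling (which is $\e^2 \cdot |{}^\e\xi|^2 = \e |d,h-\cM|^2/\e \sim$ order $\e$) against the penalization requires a careful choice of penalization exponent. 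A secondary technical point is controlling the sign of the new linear term $-R\,d^\top \pa_d\phi$ in the generator — in the expansion this contributes $-R\hat d_\e^\top(\pa_d\varphi + \hat h_\e - \cM(\hat y_\e) - \text{stuff})$-type terms at order $\e$ — and showing it is absorbed into the $o(1)$ once $^\e\xi$ is bounded and $\hat d_\e \to 0$ at rate $\e^{1/2}$.
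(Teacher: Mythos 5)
Your overall architecture matches the paper's: localize at $y_0$ with a smooth test function, build $\psi^\e=V^0-\e(\cdot)-\e^2\varpi({}^\e\xi,y)$, invoke the viscosity property of $\tilde V^\e$ at an extremum, expand the generator via Proposition~\ref{thm:remainder}, cancel the leading terms with the first corrector equation, and pass to the limit. However, there is a genuine gap exactly at the point you yourself flag as the main obstacle: the boundedness of the fast variable ${}^\e\xi$ at the extremum. Your proposed mechanism -- the quadratic lower bound $\varpi\geq\d_0|\xi|^2$ from \eqref{lower-bound-varpi} combined with the quartic growth bound on $u^*$ from Proposition~\ref{local-bound-ratios} -- only serves to localize the \emph{slow} variables $(d,h-\cM(y),y)$ in a fixed ball (this is what the paper uses it for, in the construction of the penalization $\Phi^\e=c_0(\varpi^2+|y-y^\e|^4)$ and the interiority of the minimum). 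It cannot prevent $|{}^\e\xi|\sim r_0/\e^{1/2}\to\infty$ inside that ball, and the Hamiltonian cannot supply coercivity because it is degenerate in $p$ (Remark~\ref{rem:ham}); moreover, if you plug in the exact corrector $\varpi$, the first corrector equation cancels the coercive source term completely, so nothing is left to control $|{}^\e\xi|$. The paper's proof resolves this with an extra device you do not have: the corrector is perturbed to $(1+\eta)\varpi$ in the test function, so that after using \eqref{eq:fst} a leftover term $-\eta\bigl(\tfrac{\gamma}{2}\xi_2^\top\Sigma\xi_2+R\,\xi_1^\top C^{-1}\xi_1\bigr)\leq-\eta\,\underline m\,|{}^\e\tilde\xi|^2$ survives, i.e.\ coercivity is borrowed from the non-degeneracy of the \emph{source} $f$ (Assumption~\ref{assume-expansion}(i)), not of the Hamiltonian; in parallel, the cross terms generated by the penalization $\Phi^\e$ are shown to have a favorable sign through the identity $-R\xi_1^\top\pa_1\varpi-\tfrac12(C\pa_1\varpi+\pa_2\varpi)^\top\Lambda^{-1}(C\pa_1\varpi+\pa_2\varpi)=-f\leq0$ (see \eqref{eq:superremain}), and only at the very end is $\eta\to0$ taken. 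Without this $\eta$-perturbation (or an equivalent substitute), Steps 3--5 of the argument do not go through, and the remainder estimate \eqref{eq:remainderquantitative} cannot be exploited.

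A secondary point: your orientation is inconsistent. A maximum of $u^{*,\e}-(\text{test})$ corresponds, via $u^{*,\e}=(V^0-\tilde V^{\e}_*)/\e$, to a \emph{minimum} of $\tilde V^{\e}_*-\psi^\e$, and one must use the \emph{supersolution} property of $\tilde V^\e$ there (yielding the $\geq0$ inequality that ultimately gives $a(y_0)-\rho\phi(y_0)+\cL^Y\phi(y_0)\geq0$, i.e.\ the subsolution inequality for $u^*$); you speak of an argmax of $\tilde V^\e-\psi^\e$ and of the subsolution property of $\tilde V^\e$ while writing the supersolution inequality. This is repairable, but as written the signs do not fit together. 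Also note that if you work with $\tilde V^\e$ you should not add $h^\top d-\tfrac{d^\top C^{-1}d}{2}$ to the test function again: that shift is already absorbed in the definition \eqref{dfn:tildeV} and in the generator $G^\e$; double-counting it would produce spurious order-one Hamiltonian terms.
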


\begin{proof}
We adapt the proof of \cite[Proposition 6.3]{moreau.al.15} to the present setting and keep the same steps for simplicity of reading.  Let $y_0\in \cD$. 

{\it Step 1: Localization.} Under Assumption~\ref{assume-expansion}, Proposition \ref{local-bound-ratios} gives the existence of  $\e_0>0$ and $r_0>0$ such that 
$$b^*:=\sup\left\{ u^\e(d,h,y):(d,h,y)\in B_{r_0}(0,\cM(y_0),y_0),~ \e\in(0,\e_0) \right \}<\infty.$$

Consider a smooth function $\phi$ such that, for all $y\in\cD- \{y_0\}$,
\begin{align}\label{eq:subsol1}
  u^*(y)-\phi(y)&= u^*(0,\cM(y),y)-\phi(y)\notag\\
  &\qquad< u^* (y_0)-\phi(y_0)= u^*(0,\cM(y_0),y_0)-\phi(y_0)=0.
\end{align}
 By the continuity of $\phi$, for all $\e>0$ there exists 
$(d^\e,h^\e,y^\e)\in \R^n\times\R^n\times \cD$, such that 
\begin{gather*}
(d^\e,h^\e,y^\e)\to (0,\cM (y_0),y_0),\\
 u^{*\e}(d^\e,h^\e,y^\e)\to  u^*(0,\cM(y_0),y_0)=  u^* (y_0),\\
p^\e:= u^{*\e}(d^\e,h^\e,y^\e)-\phi (y^\e)\to 0\mbox{ as }\e\to 0. 
\end{gather*}
Recall the constant $\d_0>0$ from Lemma \ref{lem:ricatti}. Similarly to \cite[Proposition 6.3]{moreau.al.15}, taking $\e_0$ smaller if needed, due to the continuity of $\cM$ and the definition of $\varpi$, there exists $\a\in(0,r_0)$ such that with
\begin{align*}
M:=\sup\{2+b^*-\phi(y);y\in B_\a (y_0)\}<\infty, \qquad c_0:=\frac{M}{ ( \d_0^2\wedge 1)(\a/4)^4},
\end{align*}
we have 
\begin{gather*}
|\cM(y)-\cM(y_0)|\leq r_0/4,\mbox{ if }|y-y_0|\leq \a,\\
(|d^\e|+|h^\e-\cM(y_0)|)\vee |y^\e-y_0|\leq \a/4,\quad |p^\e|\leq 1, \\
\notag \varpi^2(d^\e,h^\e-\cM(y^\e),y^\e)\leq \frac{1}{3c_0},\quad  \varpi(d^\e,h^\e-\cM(y^\e),y^\e)\leq \frac{1}{3},
\end{gather*}
for all $\e \in(0,\e_0)$. (Note that the term $\varpi$, which is not present in \cite[Proof of Proposition 6.3]{moreau.al.15}, corrects a minor error in that proof.)  

{\it Step 2: Construction of a test function.} Now define, for $\eta\in (0,1)$,
\begin{align*}
&\Phi^\e:(d,h,y)\to c_0\left(\varpi^2(d,h-\cM(y),y)+|y-y^\e|^4\right),\\
&I^{\e,\eta}(d,h,y):=-u^{*\e}(d,h,y)+p^\e+\phi(y)+\phi^\e(d,h,y)+\eta \varpi(d,h-\cM(y),y),
\end{align*}
and set
\begin{align*}
B_\a:=B_{r_0}(0,\cM(y_0))\times B_{\a}(y_0), \qquad B_{0,\a}:=B_{r_0/2}(0,\cM(y_0))\times B_{\a/2}(y_0).
\end{align*}
On $B_\a- B_{0,\a}$ and for all $\e\in (0,\e_0)$ and $\eta\in (0,1)$, 
$$\Phi^\e\geq M\mbox{ and } I^{\e,\eta}\geq 1.$$
Indeed, if $|(d,h-\cM(y_0))|\geq \frac{r_0}{2}$ then the estimate in Lemma~\ref{lem:ricatti} gives
\begin{align*}
\Phi^\e(d,h,y)&\geq c_0 \varpi^2(d,h-\cM(y),y)\geq c_0 \d^2_0 |(d,h-\cM(y))|^4 \\
&\geq  c_0 \d^2_0 |(d,h-\cM(y_0))|^4-8 c_0 \d^2_0 r_0^4/4^4\geq c_0  \d^2_0 r_0^4(1/2^4 -1/2^5)\\
&\geq c_0  \d^2_0 r_0^4/2^5\geq M.
\end{align*}
 If $|y-y_0|\geq \frac{\a}{2}$, then $|y-y^\e|\geq \frac{\a}{4}$ and in turn $\Phi^\e(d,h,y)\geq c_0 |y-y^\e|^4\geq  c_0 (\a/4)^4\geq M$. Moreover, 
$$\Phi (d^\e,h^\e,y^\e)=c_0(\varpi(d^\e,h^\e-\cM(y^\e)))^2\leq 1/3 \quad \mbox{ and } \quad I^{\e,\eta}(d,^\e,h^\e,y^\e)\leq 2/3.$$
Thus there exists a local minimum $\tilde \theta^\e=(\tilde d^\e,\tilde h^\e,\tilde y^\e)$ of $I^{\e,\eta}$ on the compact set $\overline B_{0,\a}$.
This is equivalent to the fact that $\tilde \theta^\e$ is a local minimum of the function
$$\tilde V^\e_*-\psi^\e$$
where for all $\e\in (0,\e_0)$ and $\eta\in (0,1)$,
$$\psi^\e(d,h,y):=v^0(y)-\e(p^\e+\phi(y)+\Phi^\e(d,h,y)+(1+\eta)\varpi(d,h-\cM(y),y)).$$

Denoting this minimum by $^\e\tilde \xi= {^\e\tilde \xi}:=\left(\frac{\tilde d^e}{\e^{1/2}},\frac{\tilde h^\e-\cM(\tilde y^\e)}{\e^{1/2}}\right)$.

{\it Step 3: Boundedness of $^\e\tilde \xi$} and {\it Step 4: Control of signs.} Using the viscosity supersolution property of $\tilde V^\e_*$ (this is precisely the discontinuous viscosity supersolution property of $\tilde V^\e$), it follows that
  \begin{align}\label{eq:mainsub}
   \notag&\e^{-1}\left(\rho\psi^\e+ G^\e(\tilde d^\e,\tilde h^\e,\tilde y^\e,\pa_d \psi^\e,\pa_h\psi^\e,\pa_y\psi^\e,\pa_{yy}\psi^\e)\right)\\
\notag & =R^\e(\tilde d^\e,\tilde h^\e,\tilde y^\e,(1+\eta)\varpi,(1+\eta)\pa_\xi \varpi,(1+\eta)\pa_{y} \varpi,(1+\eta)\pa_{2,y} \varpi,(1+\eta)\pa_{yy}\varpi)\\
  \notag&\quad-\e^{-1}\cH(\tilde d^\e,\tilde h^\e,-\pa_d\Phi^\e,-\pa_h\Phi^\e)-\e^{-1/2}(1+\eta)(\pa_h \Phi^\e+C\pa_d\Phi^\e)\Lambda^{-1}(\pa_2 \varpi+C\pa_1\varpi)\\
  & \quad+E_1(\tilde \xi_1^\e,\tilde\xi_2^\e,y,(1+\eta)\pa_1 \varpi,(1+\eta)\pa_2 \varpi)+\frac{1+\eta}{2}\pa_{\cM \cM}\varpi-\rho \phi(\tilde y^\e)+\cL^Y \phi(\tilde y^\e) \geq 0
  \end{align}
with $|\tilde d^\e|+|\tilde h^\e-\cM(\tilde y^\e)|\vee|y_0-\tilde y^\e|\leq c'_0$ and where the function $\phi^\e$ is evaluated at $\tilde \theta^\e$ and $ \varpi $ is evaluated at ${^\e\tilde \xi}$.
Note that, due to the boundedness of $\{ \tilde \theta^\e\}_{\{\eta\in (0,1),\e\in (0,\e_0)\}}$, the fact that the second derivative of $y\to  A(y)$ is locally bounded and Proposition \ref{thm:remainder}, the term $$R^\e(\tilde d^\e,\tilde h^\e,\tilde y^\e,(1+\eta)\varpi,(1+\eta)\pa_\xi \varpi,(1+\eta)\pa_{y} \varpi,(1+\eta)\pa_{2,y} \varpi,(1+\eta)\pa_{yy}\varpi)$$ is bounded. By the first corrector equation \eqref{eq:fst}, 
\begin{align*}
&E_1({^\e \tilde \xi},y,(1+\eta)\pa_1 \varpi,(1+\eta)\pa_2 \varpi)+\frac{1+\eta}{2}\pa_{ \cM\cM}\varpi=(1+\eta) a\\
&-\eta\left(\frac{\gamma (^\e \tilde\xi_2)^\top\Sigma {^\e \tilde\xi_2}}{2}+R (^\e \tilde\xi_1)^\top C^{-1} {^\e \tilde\xi_1}\right)-\frac{\eta+\eta^2}{2}(C\pa_1 \varpi +\pa_2 \varpi)^\top  \Lambda^{-1}(C\pa_1 \varpi +\pa_2 \varpi).
\end{align*}
We also compute 
\begin{align*}
&\pa_h \Phi^\e+C\pa_d\Phi^\e=2\e^{3/2}c_0\varpi(^\e\tilde \xi)(C\pa_1 \varpi (^\e\tilde \xi)+\pa_2 \varpi(^\e\tilde \xi)),\\
&\pa_d\Phi^\e=2\e^{3/2}c_0\varpi(^\e\tilde \xi)\pa_1 \varpi ({^\e\tilde \xi}).
\end{align*}
As a consequence,
\begin{align}\label{eq:superremain}
\notag&-\e^{-1}\cH(\tilde d^\e,\tilde h^\e,-\pa_d\Phi^\e,-\pa_h\Phi^\e)-\e^{-1/2}(1+\eta)(\pa_h \Phi^\e+C\pa_d\Phi^\e)\Lambda^{-1}(\pa_2 \varpi+C\pa_1\varpi)\\
\notag&=-\e^{-1}\frac{4\e^3 c_0^2 \varpi^2}{2}(C\pa_1 \varpi +\pa_2 \varpi)^\top  \Lambda^{-1}(C\pa_1 \varpi +\pa_2 \varpi)\\
\notag&\quad+ \e^{-1}\left(-2\e^2 c_0\varpi R(\tilde \xi_1^\e)^\top\pa_1 \varpi \right)  -2 \e(1+\eta)c_0 \varpi(C\pa_1 \varpi +\pa_2 \varpi)^\top  \Lambda^{-1} (C\pa_1 \varpi +\pa_2 \varpi)\\
\notag&=-\left(2\e^2 c_0^2 \varpi ^2+2\e (3/4+\eta)c_0 \varpi\right)(C\pa_1 \varpi +\pa_2 \varpi)^\top  \Lambda^{-1} (C\pa_1 \varpi +\pa_2 \varpi)\\
 &\quad+2\e c_0\varpi\left(-R({^\e\tilde \xi_1})^\top\pa_1 \varpi-\frac{1}{2}(C\pa_1 \varpi +\pa_2 \varpi)^\top  \Lambda^{-1} (C\pa_1 \varpi +\pa_2 \varpi)\right).
\end{align}
Note that by definition of $\varpi$,
$$-R(\tilde \xi_1^\e)^\top\pa_1 \varpi ({^\e\tilde \xi})
-\frac{1}{2}(C\pa_1 \varpi +\pa_2 \varpi)^\top  \Lambda^{-1} (C\pa_1 \varpi +\pa_2 \varpi)=-f(\tilde \xi^\e,\tilde y^\e)\leq 0,$$
which implies that 
$$-\e^{-1}\cH(\tilde d^\e,\tilde h^\e,-\pa_d\Phi^\e,-\pa_h\Phi^\e)-\e^{-1/2}(1+\eta)(\pa_h \Phi^\e+C\pa_d\Phi^\e)\Lambda^{-1}(\pa_2 \varpi+C\pa_1\varpi)\\
\leq 0.$$
Thus, together with \eqref{eq:degenere-source}, it follows that
\begin{align*}
&R^\e(\tilde d^\e,\tilde h^\e,\tilde y^\e,(1+\eta)\varpi,(1+\eta)\pa_\xi \varpi,(1+\eta)\pa_{y} \varpi,(1+\eta)\pa_{2,y} \varpi,(1+\eta)\pa_{yy}\varpi)\\
& \quad+(1+\eta) a(\tilde y^\e)-\rho \phi(\tilde y^\e)+\cL^Y \phi (\tilde y^\e)\geq \eta\left(\frac{\gamma ({^\e \tilde\xi_2})^\top\Sigma {^\e \tilde\xi_2}}{2}+R ({^\e \tilde\xi_1})^\top C^{-1} {^\e \tilde\xi_1}\right)\\
&\geq\eta \inf\{m(\tilde y^\e),\e\in (0,\e_0)\}|{^\e\tilde \xi}|^2
\end{align*}
so that the family $\{({^\e \tilde\xi_1},{^\e \tilde\xi_2},y^\e),\e\in(0,\e_0)\}$ is bounded.

{\it Step 5: Conclude.} Let $(\tilde \xi_1,\tilde \xi_2,\bar y)$ be an accumulation point of this family (which might depend on $\eta\in (0,1)$). 
The strict inequality \eqref{eq:subsol1} implies $\bar y=y_0$. In addition, the boundedness of $ \{({^\e \tilde\xi},y^\e),\e\in(0,\e_0)\}$ combined with \eqref{eq:superremain} and Proposition~\ref{thm:remainder} gives
\begin{align*}
-\e^{-1}\cH(\tilde d^\e,\tilde h^\e,-\pa_d\phi^\e,-\pa_h\phi^\e)-\e^{-1/2}(1+\eta)(\pa_h \phi^\e+C\pa_d\phi^\e)\Lambda^{-1}(\pa_2 \varpi+C\pa_1\varpi)\to 0,\\
R^\e(\tilde d^\e,\tilde h^\e,\tilde y^\e,(1+\eta)\varpi,(1+\eta)\pa_\xi \varpi,(1+\eta)\pa_{y} \varpi,(1+\eta)\pa_{2,y} \varpi,(1+\eta)\pa_{yy}\varpi)\to 0,
\end{align*}
as $\e\to 0$. We finally use \eqref{eq:mainsub} to obtain that, for all $\eta\in (0,1)$,
$$
E_1(\tilde \xi_1,\tilde\xi_2,y_0,(1+\eta)\pa_1 \varpi,(1+\eta)\pa_2 \varpi)+\frac{1+\eta}{2}\pa_{ \cM\cM}\varpi(y_0)-\rho \phi(\tilde y_0)+\cL^Y \phi(\tilde y_0) \geq 0.
$$
(Here, $\tilde \xi_1$ might depend on $\eta$.) Using the first corrector equation \eqref{eq:fst} and sending $\eta$ to $0$, we obtain
\begin{align*}
& a(y_0)-\rho \phi( y_0)+\cL^Y \phi( y_0) \geq\\
&\qquad\qquad\liminf_{\eta\to 0}\left(\eta\left(\frac{\gamma (\tilde\xi_2)^\top\Sigma \tilde\xi_2}{2}+R (\tilde\xi_1)^\top C^{-1} \tilde\xi_1\right)\right.\\
&\qquad\qquad\left.+\frac{\eta(1+\eta)}{2}(C\pa_1 \varpi +\pa_2 \varpi)^\top  \Lambda^{-1} (C\pa_1 \varpi +\pa_2 \varpi)\right)\geq 0.
\end{align*}
This establishes the claimed viscosity subsolution property.
\end{proof}

\subsection{Viscosity Supersolution Property}
\begin{prop}\label{prop:super}
Suppose the assumptions of Theorem \ref{thm:expansion} are satisfied. Then \eqref{eq:bardown} is a viscosity supersolution of the corrector equation \eqref{corrector2}.
\end{prop}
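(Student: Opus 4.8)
The argument is the mirror image of the proof of Proposition~\ref{prop:sub}: one exchanges the discontinuous-viscosity sub- and supersolution properties of $\tilde V^\e$ and reverses every perturbation sign, following the same five steps. Fix $y_0\in\cD$ and a smooth $\phi$ such that $y\mapsto u_*(0,\cM(y),y)-\phi(y)$ has a strict local minimum, equal to $0$, at $y_0$. As in Steps~1--2 of Proposition~\ref{prop:sub}, Proposition~\ref{local-bound-ratios} supplies $\e_0,r_0>0$ with $u^\e$ uniformly bounded on $B_{r_0}(0,\cM(y_0),y_0)$ for $\e\in(0,\e_0)$, and points $(d^\e,h^\e,y^\e)\to(0,\cM(y_0),y_0)$ along which $u^\e_*(d^\e,h^\e,y^\e)\to u_*(y_0)$ and $p^\e:=u^\e_*(d^\e,h^\e,y^\e)-\phi(y^\e)\to 0$. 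With the same $\a\in(0,r_0)$, $M$ and $c_0$ and the penalisation $\Phi^\e(d,h,y):=c_0\bigl(\varpi^2(d,h-\cM(y),y)+|y-y^\e|^4\bigr)$, set for $\eta\in(0,1)$
\[
\psi^\e(d,h,y):=V^0(y)-\e\Bigl(p^\e+\phi(y)-\Phi^\e(d,h,y)+(1-\eta)\,\varpi\bigl(d,h-\cM(y),y\bigr)\Bigr),
\]
where now the penalisation enters with a minus sign and the coefficient of $\varpi$ is $1-\eta$. Since $-\Phi^\e$ is large and negative on $\overline B_\a\setminus\overline B_{0,\a}$ while $\tilde V^{*,\e}-\psi^\e$ is close to $0$ at $(d^\e,h^\e,y^\e)$, the barrier construction produces an interior local \emph{maximum} $\tilde\theta^\e=(\tilde d^\e,\tilde h^\e,\tilde y^\e)$ of $\tilde V^{*,\e}-\psi^\e$ on $\overline B_{0,\a}$; write ${}^\e\tilde\xi:=\bigl(\tilde d^\e/\e^{1/2},(\tilde h^\e-\cM(\tilde y^\e))/\e^{1/2}\bigr)$.

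At $\tilde\theta^\e$ the viscosity \emph{subsolution} property of $\tilde V^{*,\e}$ (equivalently, of $\tilde V^\e$, cf.\ Proposition~\ref{asm:frictional.pde} and Section~\ref{ss.incorp}) gives $\rho\psi^\e+G^\e(\tilde d^\e,\tilde h^\e,\tilde y^\e,\pa_d\psi^\e,\pa_h\psi^\e,\pa_y\psi^\e,\pa_{yy}\psi^\e)\le 0$. Feeding in the generator expansion of Proposition~\ref{thm:remainder} with the ($\e$-dependent) smooth profile $w:=\e c_0\varpi^2+(1-\eta)\varpi$ -- which satisfies~\eqref{eq:boundew} on bounded sets because $A$ has locally bounded second derivatives -- turns this into the analogue of~\eqref{eq:mainsub} with ``$\le 0$'' in place of ``$\ge 0$''. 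Since $(\varpi,a)$ solves the first corrector equation~\eqref{eq:fst}, the degree-two homogeneity of $\varpi(\cdot,y)$ and $E_1(\xi,y,\pa_\xi\varpi)=0$ give
\[
E_1\bigl({}^\e\tilde\xi,\tilde y^\e,(1-\eta)\pa_1\varpi,(1-\eta)\pa_2\varpi\bigr)+\tfrac{1-\eta}{2}\pa_{\cM\cM}\varpi=(1-\eta)a(\tilde y^\e)+\eta f({}^\e\tilde\xi,\tilde y^\e)+\tfrac{\eta(1-\eta)}{2}(C\pa_1\varpi+\pa_2\varpi)^\top\Lambda^{-1}(C\pa_1\varpi+\pa_2\varpi),
\]
so the $\eta f$ term now carries a plus sign. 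The Hamiltonian contributions of $\Phi^\e$ are controlled as in~\eqref{eq:superremain}, using $\varpi\ge0$, $f\ge0$, the a priori bound on $\e\,\varpi({}^\e\tilde\xi,\tilde y^\e)$ afforded by $\tilde\theta^\e\in\overline B_{0,\a}$, and the identity $-R\xi_1^\top\pa_1\varpi-\tfrac12(C\pa_1\varpi+\pa_2\varpi)^\top\Lambda^{-1}(C\pa_1\varpi+\pa_2\varpi)=-f(\xi,y)\le0$. As $R^\e$ is bounded (Proposition~\ref{thm:remainder}) and $a,\phi,\cL^Y\phi$ are locally bounded, the reversed inequality reduces to $\eta f({}^\e\tilde\xi,\tilde y^\e)\le c$; by~\eqref{eq:degenere-source}, $f({}^\e\tilde\xi,\tilde y^\e)\ge\underline m|{}^\e\tilde\xi|^2$, so $\{{}^\e\tilde\xi:\e\in(0,\e_0)\}$ is bounded for each fixed $\eta$.

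Along a subsequence, ${}^\e\tilde\xi\to(\tilde\xi_1,\tilde\xi_2)$ and $\tilde y^\e\to y_0$ (the latter by strict minimality of $\phi$); on this bounded family $R^\e\to0$ and the $\Phi^\e$-Hamiltonian terms vanish. Passing to the limit $\e\to0$ and using the first corrector equation as above yields $(1-\eta)a(y_0)-\rho\phi(y_0)+\cL^Y\phi(y_0)\le0$ for each $\eta\in(0,1)$; sending $\eta\to0$ gives $a(y_0)-\rho\phi(y_0)+\cL^Y\phi(y_0)\le0$, which is precisely the viscosity supersolution inequality for~\eqref{corrector2} at $y_0$. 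Hence~\eqref{eq:bardown} is a supersolution.

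The main difficulty, as in Proposition~\ref{prop:sub}, is Step~4: controlling the sign of the \emph{degenerate} Hamiltonian $\cH$, in particular its linear part $-R\xi_1^\top p_1$ (which has no a priori sign), so that after moving the bounded remainders to one side the perturbed first-corrector identity still produces the coercive bound $\eta\,\underline m\,|{}^\e\tilde\xi|^2\le c$. The genuinely new feature relative to Proposition~\ref{prop:sub} is that coercivity forces both the perturbation $(1-\eta)\varpi$ (with $\eta\in(0,1)$, so the coefficient stays positive) and the localising penalisation $\Phi^\e$ to enter $\psi^\e$ with the opposite sign; one must therefore re-examine whether, under these reversals, the $\Phi^\e$-contribution to $\cH$ remains negligible or of the harmless sign, which again rests only on $\varpi\ge0$, $f\ge0$ and the localisation bound on $\e\,\varpi({}^\e\tilde\xi,\tilde y^\e)$.
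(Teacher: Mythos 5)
There is a genuine gap, and it sits exactly where you locate the ``main difficulty'': Step~4 does not survive the mirroring. When you flip the sign of the localising penalisation $\Phi^\e=c_0(\varpi^2+|y-y^\e|^4)$, only the terms of $\cH$ that are \emph{linear} in the gradient of $\Phi^\e$ change sign; the quadratic part contributes, exactly as in \eqref{eq:superremain}, the term $-\tfrac{1}{2\e}(C\pa_d\Phi^\e+\pa_h\Phi^\e)^\top\Lambda^{-1}(C\pa_d\Phi^\e+\pa_h\Phi^\e)=-2\e^2c_0^2\varpi^2({}^\e\tilde\xi)\,Q({}^\e\tilde\xi)$ with $Q:=(C\pa_1\varpi+\pa_2\varpi)^\top\Lambda^{-1}(C\pa_1\varpi+\pa_2\varpi)$, and this term is negative regardless of the sign with which $\Phi^\e$ enters $\psi^\e$. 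In the subsolution proof this sign is harmless because the viscosity inequality reads ``$\geq 0$'' and all $\Phi^\e$-terms can be discarded; in your reversed inequality ``$\leq 0$'' it lands on the wrong side, so after moving the bounded quantities over you obtain $\eta f({}^\e\tilde\xi,\tilde y^\e)\leq c+2\e^2c_0^2\varpi^2\,Q$, not $\eta f\leq c$. Since $E_1(\xi,y,\pa_\xi\varpi)=0$ gives $Q\leq 2f$, and $\e\varpi({}^\e\tilde\xi,\tilde y^\e)=\varpi(\tilde d^\e,\tilde h^\e-\cM(\tilde y^\e),\tilde y^\e)$ is only bounded by its supremum over $\overline B_{0,\a}$ (of order $r_0^2$, \emph{not} of order $1/c_0$: recall $c_0\sim M/\a^4$ is chosen large), the right-hand side is of size $c+ c\,c_0^2 r_0^4 f$, which swamps the coercive left-hand side $\eta\,\underline m\,|{}^\e\tilde\xi|^2$ for every $\eta\in(0,1)$. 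Shrinking the localisation radius does not help, because $c_0^2 r_0^4\geq c_0^2\a^4\to\infty$ as the radius shrinks. Hence the boundedness of $\{{}^\e\tilde\xi\}$ — the very step your conclusion relies on — cannot be deduced, and the ``harmless sign'' claim in your final paragraph is false. (A secondary issue: your profile $w=\e c_0\varpi^2+(1-\eta)\varpi$ is quartic in $\xi$, so it violates \eqref{eq:boundew} and Proposition~\ref{thm:remainder} cannot be invoked for it as stated; the paper only ever feeds $\varpi$-type quadratics into the remainder estimate and treats the penalisation terms by hand.)

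This is precisely the obstruction flagged in Remark~\ref{rem:ham} and in the opening lines of the paper's proof of Proposition~\ref{prop:super}: because $\cH$ is degenerate and its quadratic part cannot be given a favourable sign in the supersolution direction, the paper abandons any $(d,h)$-penalisation altogether. Instead it penalises only in $y$, replaces $(1\pm\eta)\varpi$ by the \emph{bounded} profile $H^{\eta,\delta}\varpi$ with $H^{\eta,\delta}=(1-\delta)h^\eta(\cdot/\xi^{*,\delta})$, whose gradient contributions are of size $\eta$ and $C^*$, localises in $(d,h)$ via approximate minimisers plus the tiny bump $\e^2F$, and recovers coercivity from the source term through $\bigl(1-(1-\delta)h^\eta\bigr)f\geq\delta f$, with the threshold $\xi^{*,\delta}$ and the ordered limits in $\e$, $\eta$, $\delta$ doing the work that your single parameter $\eta$ cannot. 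If you want to repair your proof, you need to reproduce (or replace) this cutoff-and-threshold mechanism; a sign-reversed copy of Proposition~\ref{prop:sub} is not available here.
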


\begin{proof}
As mentioned in Remark \ref{rem:ham} our Hamiltonian $\cH$ is degenerate and does not satisfy the non-degeneracy condition \cite[Equation 6.27]{moreau.al.15} that plays a crucial role in the asymptotic analysis of the model with only temporary trading costs. We therefore outline how to modify the proof of the supersolution property in \cite[Proposition 6.4]{moreau.al.15} to be able to use the non-degeneracy of our source term $f$ rather than of $\cH$.

We start similarly to the subsolution property. By Proposition \ref{local-bound-ratios}, $u_*$ is lower semicontinuous, non-negative and locally bounded.  Let $y_0\in \cD$ and $\phi$ smooth such that for all $y\in\cD- \{y_0\}$ we have
\begin{align}\label{eq:supersol1}
 u_*(y)-\phi(y)= u_*(0,\cM(y),y)-\phi(y)&> u_* (y_0)-\phi(y_0)\notag\\
 &= u_*(0,\cM(y_0),y_0)-\phi(y_0)=0.
\end{align}
 By the continuity of $\phi$, for all $\e>0$ there exists $(d^\e,h^\e,y^\e)\in \R^n\times\R^n\times \cD$ such that 
\begin{gather*}
(d^\e,h^\e,y^\e)\to (0,\cM (y_0),y_0),\\
 u^\e_*(d^\e,h^\e,y^\e)\to  u_*(0,\cM(y_0),y_0)=  u_* (y_0),\\
p^\e:= u^\e_*(d^\e,h^\e,y^\e)-\phi (y^\e)\to 0\mbox{ as }\e\to 0. 
\end{gather*}
Similarly to \cite{moreau.al.15}, we can take $r_0,\e_0>0$ small enough such that, for all $\e\in (0,\e_0)$,
\begin{align}
|y^\e-y_0|\leq r_0/2,\qquad  \varpi(d^\e,h^\e-\cM(y^\e),y^\e)\leq \frac{1}{3},\qquad |p^\e|\leq 1.
\end{align}

{\it Step 1: Penalize $\phi^\e$.} Set
\begin{align*}
M:=\sup\{4+\phi(y);y\in B_{r_0} (y_0)\}, \quad c_0:=\frac{2^4 M}{r_0^4}, \quad \phi^\e(y):=\phi(y)+p^\e-c_0 |y-y^\e|^4.
\end{align*}
By the choice of $c_0$, we have
$$\phi^\e(y)\leq -3, \quad  \mbox{for all $y\notin B_{r_0}(y_0)$ and $\e\in(0,\e_0)$,}$$
and 
\begin{align}\label{eq:controlphie}
0=- u^\e_*(d^\e,h^\e,y^\e)+\phi^\e(y^\e)\geq-\frac{1}{3}.
\end{align}
Similarly as in \cite{moreau.al.15}, define 
\begin{align*}
K_0&:=\sup\{{|\rho\phi-\cL^Y \phi (y)|}:y\in B_{r_0}(y_0)\}<\infty,\\
K_2&:=\sup\{{| A(y)|}:y\in B_{r_0}(y_0)\}<\infty,\\
K_{\bar M}&:=\sup\{1+|\bar M|^4:y\in B_{r_0}(y_0)\}<\infty.
\end{align*}
 For all $\eta\in (0,1]$, pick a function $h\in C^\infty (\R^{2n};[0,1])$ and $a_\eta\in (1,\infty)$ such that 
\begin{gather*}
h^\eta(\xi)=1 \mbox{ if }|\xi|\leq 1, \quad h^\eta(\xi)=0\mbox{ if }|\xi|\geq a_\eta,\\
|\xi||\pa_\xi h^\eta(\xi)|\leq \eta, \quad \mbox{ and } \quad |\xi|^2|\pa_{\xi\xi}h^\eta(\xi)|\leq C^*\mbox{ for all }\xi\in \R^{2n},
\end{gather*}
for some $C^*>0$. We write, for all $\d>0$,
$$\xi^{*,\d}:=\sqrt{1+ \frac{2(K_0+(1-\d)K_{\bar M}K_2(C^*+6))}{\d  \underline m}}$$
and 
$$H^{\eta,\delta}(\xi):=(1-\delta)h^\eta\left(\frac{\xi}{\xi^{*,\d}}\right).$$

{\it Step 2 and 3: Construct two test functions.}
Now define 
$$\Psi^{\eta,\delta}(d,h,y):=V^0(y)-\e \phi^\e(y)-\e^2 H^{\eta,\delta}({^\e \xi})\varpi({^\e \xi })$$
and -- similarly to \cite[Proof of Proposition 6.4]{moreau.al.15} -- fix an even, smooth function $F$ satisfying 
$$F(0)=1,\qquad F(x)=0\mbox{ if }|x|\geq 1,\qquad 1\geq F\geq 0.$$
Then, for all $(\e,\eta,\d)\in (0,\e_0]\times(0,1]\times(0,1]$ there exist $(\hat d^{\e,\eta,\d},\hat h^{\e,\eta,\d},\hat y^{\e,\eta,\d})$ with $|\hat y^{\e,\eta,\d}-y_0|< r_0$ satisying the following two properties: i) there are $(\tilde d^{\e,\eta,\d},\tilde h^{\e,\eta,\d},\tilde y^{\e,\eta,\d})$ such that 
\begin{align*}
|\tilde d^{\e,\eta,\d}-\hat d^{\e,\eta,\d}|^2+|\tilde h^{\e,\eta,\d}-\hat h^{\e,\eta,\d}|^2\leq 1, \qquad |\tilde y^{\e,\eta,\d}-\hat y^{\e,\eta,\d}|\leq r_0,
\end{align*}
and ii) the function 
\begin{align*}
&\tilde V^{*,\e}(d,h,y)-\bar \psi^{\e,\eta,\d}(d,h,y)\\
&:=\tilde V^{*,\e}(d,h,y)- \psi^{\e,\eta,\d}(d,h,y)+\e^2 F\left(\sqrt{|d-\hat d^{\e,\eta,\d}|^2+|h-\hat h^{\e,\eta,\d}|^2}\right)
\end{align*}
has a local minimum at $(\tilde d^{\e,\eta,\d},\tilde h^{\e,\eta,\d},\tilde y^{\e,\eta,\d})$. 

{\it Step 4: Boundedness of $\{{^\e \tilde \xi^{\eta,\d}}\}$ for suitable $\e>0$.}
Defining $\e_{\eta,\d}:=\e_0\wedge (K_2^{1/2}a_\eta \xi^{*,\d})^{-1}$ and, similarly to \cite{moreau.al.15}, using the discontinuous viscosity subsolution property of $\tilde V^\e$, it follows that
the family $\{{^\e \tilde \xi^{\eta,\d}}:\e \in(0,\e_{\eta,\d})\}$ is bounded and, up to taking a subsequence, we have the convergence $({^\e \tilde \xi^{\eta,\d}},\tilde y^{\e,\eta,\d})\to ({\bar \xi^{\eta,\d}},\bar y^{\eta,\d})$ as $\e\to 0$. 

{\it Step 5: Boundedness of $\{\bar \xi^{\eta,\d}\}$ for suitable $\eta>0$.}
Using this boundedness and Proposition~\ref{thm:remainder} similarly as in \cite[Proof of Proposition 6.4]{moreau.al.15}, we obtain 
$$
E_1({\bar \xi^{\eta,\d}},\bar y^{\eta,\d},\pa_1 (H^{\eta,\d}\varpi),\pa_2 (H^{\eta,\d}\varpi))+\frac{1}{2}\pa_{ \cM\cM} (H^{\eta,\d}\varpi)-\rho \phi(\bar y^{\eta,\d})+\cL^Y \phi(\bar y^{\eta,\d})\leq 0.
$$
The definitions of $K_0,K_2$ and $K_{\bar M}$ imply
\begin{align*}
\left|\frac{1}{2}\pa_{ \cM\cM} (H^{\eta,\d}\varpi)-\rho \phi(\bar y^{\eta,\d})+\cL^Y \phi(\bar y^{\eta,\d})\right|\leq K_0+(1-\d)K_{\bar M}K_2(C^*+6).
\end{align*}
As a consequence,
$$E_1({\bar \xi^{\eta,\d}},\bar y^{\eta,\d},\pa_1 (H^{\eta,\d}\varpi),\pa_2 (H^{\eta,\d}\varpi))\leq K_0+(1-\d)K_{\bar M}K_2(C^*+6).$$
We can expand the left-hand side of this estimate as follows,
\begin{align*}
&\left(1-(1-\d) h^\eta   \right) f({\bar \xi^{\eta,\d}},\bar y^{\eta,\d})+(1-\d) h^\eta    E_1({\bar \xi^{\eta,\d}},\bar y^{\eta,\d},-\pa_1  \varpi ,-\pa_2  \varpi )\\
&\quad+\frac{(1-\d) h^\eta   -(1-\d)^2(h^\eta)^2}{2}(C\pa_1 \varpi +\pa_2 \varpi)^\top  \Lambda^{-1} (C\pa_1 \varpi +\pa_2 \varpi)\\
&\quad-(1-\d)\frac{R \varpi  }{\xi^{*,\d}} \left(\pa_1  h^\eta   \right)^\top \bar \xi_1^{\eta,\d}\\
&\quad-(1-\d)^2\left(\frac{ h^\eta    \varpi }{\xi^{*,\d}}\left(C\pa_1  h^\eta   +\pa_2  h^\eta   \right)^\top  \Lambda^{-1} (C\pa_1 \varpi +\pa_2 \varpi) \right)\\
&\quad-(1-\d)^2\left(\frac{ \varpi ^2}{(\xi^{*,\d})^2}(C\pa_1  h^\eta   +\pa_2  h^\eta   )^\top  \Lambda^{-1} \left(C\pa_1  h^\eta   +\pa_2  h^\eta   \right) \right),
\end{align*}
where $h^\eta$ is evaluated at $\frac{\xi}{\xi^{*,\d}}$ and $\varpi$ is evaluated at $ ({\bar \xi^{\eta,\d}},\bar y^{\eta,\d})$.
Due to the non-degeneracy of the source term $f$, this can be bounded from below by
\begin{align*}
&\d \underline m|{\bar \xi^{\eta,\d}}|^2 -(1-\d)^2\left(\frac{ h^\eta    \varpi }{\xi^{*,\d}}\left(C\pa_1  h^\eta   +\pa_2  h^\eta   \right)^\top  \Lambda^{-1} (C\pa_1 \varpi +\pa_2 \varpi) \right)\\
&-(1-\d)\frac{R \varpi  }{\xi^{*,\d}} \left(\pa_1  h^\eta   \right)^\top \bar \xi_1^{\eta,\d}-(1-\d)^2\frac{ \varpi ^2}{(\xi^{*,\d})^2}(C\pa_1  h^\eta   +\pa_2  h^\eta   )^\top  \Lambda^{-1} \left(C\pa_1  h^\eta   +\pa_2  h^\eta   \right) .
\end{align*}
The conditions on $h^\eta$ and the definition of $K_2$ in turn yield
\begin{align*}
\left|\frac{R \varpi  }{\xi^{*,\d}} \left(\pa_1  h^\eta   \right)^\top \bar \xi_1^{\eta,\d}\right| &\leq R\eta K_2  |\bar \xi^{\eta,\d}|^2, \\
\left|(1-\d)\frac{ h^\eta    \varpi }{\xi^{*,\d}}\left(C\pa_1  h^\eta   +\pa_2  h^\eta   \right)^\top  \Lambda^{-1} (C\pa_1 \varpi +\pa_2 \varpi) \right| &\leq (1-\d) | \hat C \Lambda^{-1}\hat C^\top |K_2^2 \eta  |\bar \xi^{\eta,\d}|^2,\\
\left|(1-\d)\frac{ \varpi ^2}{(\xi^{*,\d})^2}(C\pa_1  h^\eta   +\pa_2  h^\eta   )^\top  \Lambda^{-1} \left(C\pa_1  h^\eta   +\pa_2  h^\eta   \right) \right| &\leq (1-\d) | \hat C \Lambda^{-1}\hat C^\top |K_2^2 \eta^2|\bar \xi^{\eta,\d}|^2.
\end{align*}
Therefore, we obtain the following bound,
\begin{align*}
 & K_0+(1-\d)K_{\bar M}K_2(C^*+6)\\
  &\geq \left(\d \underline m-(1-\d)K_2  \eta \left(R+(1+\eta)(1-\d)K_2 | \hat C \Lambda^{-1}\hat C^\top |\right)\right)|{\bar \xi^{\eta,\d}}|^2.
  \end{align*}
Note that the last term on the right-hand side goes to $0$ as $\eta \to 0$. This implies that for all $\d\in(0,1)$ there exists $\eta_\d\in (0,1)$ such that, for all $\eta\in(0,\eta_\d)$,
$$
(\xi^{*,\d})^2\geq  \frac{2(K_0+(1-\d)K_{\bar M}K_2(C^*+6))}{\d  \underline m}\geq |{\bar \xi^{\eta,\d}}|^2.
$$
We can now proceed as in \cite[Step 6, Proof of Proposition 6.4]{moreau.al.15} to show that 
$$0\geq a(y_0)-\rho \phi(\tilde y_0)+\cL^Y \phi(\tilde y_0),$$
which is the desired supersolution property. 
\end{proof}

By combining Propositions~\ref{prop:sub} and \ref{prop:super} with the comparison principle from Assumption~\ref{assume-comparison} and the upper bound for the semilimits at Proposition {\ref{local-bound-ratios}}, we obtain the main result of this section.

\begin{thm}\label{thm:cm}
Suppose the assumptions of Theorem \ref{thm:expansion} are satisfied. Then,
\begin{align}
  u^*(0,\cM(y),y)=  u_*(0,\cM(y),y), \quad \mbox{for all $y\in \cD$}.
\end{align}
\end{thm}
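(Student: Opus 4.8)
The plan is to deduce this equality directly from the three results already in place — the subsolution property (Proposition~\ref{prop:sub}), the supersolution property (Proposition~\ref{prop:super}), and the local bound on the semilimits (Proposition~\ref{local-bound-ratios}) — combined with the comparison principle postulated in Assumption~\ref{assume-comparison}. Write $\bar u(y):=u^*(0,\cM(y),y)$ and $\underline u(y):=u_*(0,\cM(y),y)$. One inequality is free: by the very definition of the semilimits in~\eqref{eq:semi-limits-bar} we have $u^*\geq u_*$ everywhere, hence $\bar u\geq\underline u$ on $\cD$. The whole content of the theorem is therefore the reverse inequality $\bar u\leq\underline u$, which will follow from comparison once $\bar u$ and $\underline u$ are shown to be legitimate competitors.

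The first step is to verify that $\bar u$ and $\underline u$ lie in the comparison class~\eqref{eq:comparison-class}. They are nonnegative by Proposition~\ref{local-bound-ratios}. They have the right semicontinuity: $u^*$ is upper semicontinuous and $u_*$ is lower semicontinuous (Proposition~\ref{local-bound-ratios}), and since the map $y\mapsto(0,\cM(y),y)$ is continuous (indeed $\cM$ is $C^2$, cf.~\eqref{eq:barM}), the restrictions $\bar u$ and $\underline u$ inherit upper and lower semicontinuity on $\cD$, respectively. Finally, evaluating the polynomial bound of Proposition~\ref{local-bound-ratios} at $d=0$, $h=\cM(y)$ annihilates the $|d|^4$ and $|h-\cM(y)|^4$ contributions and leaves
\begin{align*}
c\bigl(1+|\Sigma(y)|^2+M^*_\Sigma(y)+M^*_\cM(y)\bigr)\ \geq\ \bar u(y)\ \geq\ \underline u(y)\ \geq\ 0,
\end{align*}
which is precisely the growth restriction in~\eqref{eq:comparison-class}.

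The second step is to invoke the viscosity properties: by Propositions~\ref{prop:sub} and~\ref{prop:super}, $\bar u$ is a viscosity subsolution and $\underline u$ a viscosity supersolution of the second corrector equation~\eqref{corrector2}. Applying the comparison principle of Assumption~\ref{assume-comparison} to this pair yields $\bar u\leq\underline u$ on $\cD$, and together with the trivial reverse inequality this gives $\bar u=\underline u$, as asserted. There is no genuine obstacle remaining at this stage — the difficulty of the section resides entirely in the two viscosity characterizations, which are already established — so the only point to handle carefully in the write-up is the bookkeeping that places the restricted semilimits inside the comparison class, in particular that their (semi)continuity and growth survive the restriction to the frictionless slice $(d,h)=(0,\cM(y))$.
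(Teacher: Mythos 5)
Your proposal is correct and is essentially the argument the paper intends: the text preceding Theorem~\ref{thm:cm} states that the result is obtained by combining Propositions~\ref{prop:sub} and~\ref{prop:super} with the comparison principle from Assumption~\ref{assume-comparison} and the bound from Proposition~\ref{local-bound-ratios}, which is precisely the chain you carry out. Your only addition is to spell out the bookkeeping — that the restrictions to $(d,h)=(0,\cM(y))$ inherit upper/lower semicontinuity by composition with the continuous map $y\mapsto(0,\cM(y),y)$, and that evaluating the bound of Proposition~\ref{local-bound-ratios} on this slice kills the $|d|^4$ and $|h-\cM(y)|^4$ terms and produces exactly the growth restriction \eqref{eq:comparison-class} — which the paper leaves implicit; this is a useful clarification rather than a deviation.
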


\section{Dependence of $ u^*$ and $ u_*$ on $(d,h)$}\label{s.first}

Recall the source term $ f$ and the Hamiltonian $\cH$ from \eqref{eq:source}, and define
\begin{align}
 u(d,h,y):=  u^*(0,\cM(y),y) + \varpi({^1\xi},y)= u_*(0,\cM(y),y) + \varpi({^1\xi},y).
\end{align}
As $\varpi$ is a second order polynomial in ${^1\xi}$, the function $u$ is a smooth solution of the following first-order PDE,
\begin{align}\label{eq:eikhat}
 f(d,h-\cM(y),y)=  \cH(d,h-\cM(y),-\pa_d  u(d,h,y),-\pa_d  u(d,h,y)).
\end{align}

The goal of this section is to prove the following result.

\begin{prop}\label{prop:conclusioneik}
The upper and lower semilimits $ u^*$ and $ u_*$ from \eqref{eq:semi-limits-bar} satisfy
\begin{align}
 u^*(\theta)\leq u(\theta)\leq  u_*(\theta), \quad \mbox{for all $\theta$}.
\end{align}
\end{prop}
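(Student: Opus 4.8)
The plan is to show that \eqref{eq:eikhat} --- a first-order equation involving only $(d,h)$-derivatives, with $y$ a frozen parameter --- is the Hamilton--Jacobi--Bellman equation of a deterministic linear-quadratic control problem whose value function is $\varpi(\cdot,y)$, that $u^*$ and $u_*$ are, respectively, a viscosity sub- and supersolution of it, and that a verification argument anchored at the point $(d,h)=(0,\cM(y))$ --- where $u^*$ and $u_*$ agree by Theorem~\ref{thm:cm} --- pins both semilimits to $u$. Combined with the trivial $u^*\ge u_*$ this forces $u^*=u=u_*$, which is \eqref{eq:bla}; undoing the shift \eqref{dfn:tildeV} and the scaling then yields Theorem~\ref{thm:expansion}.

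\emph{Step 1 (first-order viscosity properties).} I would re-run the localization of Propositions~\ref{prop:sub}--\ref{prop:super}, but \emph{without} the $\e^{1/2}$-blow-up of $(d,h)$. Given $y_0\in\cD$ and smooth $\phi$ with $u^*-\phi$ having a strict local maximum at $(d_0,h_0,y_0)$, Proposition~\ref{local-bound-ratios} provides maximizers $(d^\e,h^\e,y^\e)\to(d_0,h_0,y_0)$ of $u^{*,\e}-\phi$, i.e.\ local minimizers of $\tilde V^\e_*-\psi^\e$ with $\psi^\e(d,h,y):=V^0(y)-\e\phi(d,h,y)$. Plugging $\psi^\e$ into the frictional equation \eqref{eq:pdetilde} and using the (discontinuous) viscosity supersolution property of $\tilde V^\e$, the $V^0$-contribution cancels by Assumption~\ref{finite-V0}, leaving
\begin{equation*}
\e\big(\cL^Y\phi-\rho\phi\big)(d^\e,h^\e,y^\e)+E_1\big(d^\e,h^\e-\cM(y^\e),y^\e,\pa_d\phi,\pa_h\phi\big)-\e\rho\Big((h^\e)^\top d^\e-\tfrac{(d^\e)^\top C^{-1}d^\e}{2}\Big)\ge 0,
\end{equation*}
with $E_1$ as in \eqref{def:E1} and the derivatives of $\phi$ taken at $(d^\e,h^\e,y^\e)$. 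Sending $\e\to0$ gives the subsolution inequality $E_1(d_0,h_0-\cM(y_0),y_0,\pa_d\phi,\pa_h\phi)\ge0$; the symmetric argument, based on the subsolution property of $\tilde V^\e$, shows $u_*$ is a supersolution. Since \eqref{eq:eikhat} has no $y$-derivatives, one may freeze $y=y_0$: to test $u^*(\cdot,\cdot,y_0)$ by a smooth function of $(d,h)$ one applies the joint property to that function plus $\delta^{-1}|y-y_0|^2$ and lets $\delta\to0$.

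\emph{Step 2 (comparison).} With $\xi_1=d$, $\xi_2=h-\cM(y)$, equation \eqref{eq:eikhat} is exactly the HJB equation of the deterministic problem of minimizing $\int_0^\infty\big(\tfrac{\gamma}{2}\xi_{2,s}^\top\Sigma(y)\xi_{2,s}+R\,\xi_{1,s}^\top C^{-1}\xi_{1,s}+\tfrac12\dot h_s^\top\Lambda\dot h_s\big)\,ds$ subject to $\dot\xi_1=-R\xi_1+C\dot h$, $\dot\xi_2=\dot h$; by the Riccati equation \eqref{eq:matrixriccati} and the two-sided bound \eqref{lower-bound-varpi} (which records that the system is stabilizable and the value coercive), its value function is $\xi\mapsto\varpi(\xi,y)$, $A(y)$ being the maximal Riccati solution. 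As $u(d,h,y)=u^*(0,\cM(y),y)+\varpi(d,h-\cM(y),y)$ with $u^*(0,\cM(y),y)=u_*(0,\cM(y),y)$ by Theorem~\ref{thm:cm}, it suffices to prove: for each fixed $y$, every subsolution of \eqref{eq:eikhat} vanishing at $(0,\cM(y))$ and of at most polynomial growth is $\le\varpi(\cdot,y)$, and symmetrically for supersolutions. This I would get from the sub-/super-optimality principle for viscosity sub-/supersolutions: along any control driving $(\xi_1,\xi_2)$ to the origin with finite cost --- such controls exist thanks to the coercivity in \eqref{lower-bound-varpi} --- an upper-semicontinuous subsolution $v$ satisfies $v(\xi_0)\le v(\xi_T)+\int_0^T(\cdots)\,ds$; letting $T\to\infty$ (so $\xi_T\to0$, and $\limsup v(\xi_T)\le v(0)=0$) and infimizing over controls gives $v(\xi_0)\le\varpi(\xi_0,y)$, while the reverse inequality for lower-semicontinuous supersolutions follows along near-optimal trajectories. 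Applied to $u^*(\cdot,\cdot,y)-u^*(0,\cM(y),y)$ and $u_*(\cdot,\cdot,y)-u_*(0,\cM(y),y)$, both controlled by the quartic bound of Proposition~\ref{local-bound-ratios}, this yields $u^*\le u\le u_*$.

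\emph{Main obstacle.} The crux is Step~2: \eqref{eq:eikhat} carries no zeroth-order term and, as flagged in Remark~\ref{rem:ham}, a Hamiltonian $\cH$ that is \emph{degenerate} in $p$ (the rank-deficient $\hat C\Lambda^{-1}\hat C^\top$), so generic viscosity comparison does not apply; the control-theoretic detour avoids this but relies on the \emph{maximality} of $A(y)$ --- equivalently on the stability/coercivity estimate \eqref{lower-bound-varpi} --- which is what guarantees both that the controlled state can be steered to the origin at finite cost and that $\varpi(\cdot,y)$ really is the value function and not some non-stabilizing Riccati solution. One must also make the sub-/super-optimality principle rigorous for the merely semicontinuous, polynomially growing semilimits on the unbounded domain with the target collapsed to a single point, and check that freezing $y$ is compatible with the joint viscosity properties from Step~1. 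Finally, one records that $u^*=u=u_*$ together with $u^*(0,\cM(y),y)=u_*(0,\cM(y),y)=u(y)$ (the Feynman--Kac solution of \eqref{corrector2}) delivers \eqref{eq:bla}.
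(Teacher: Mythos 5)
Your Step 1 is essentially the paper's Lemma~\ref{lem:eik}, but your Step 2 takes a genuinely different route. The paper never passes through the control representation: it first restores properness with the transform $\bar u^*=-e^{-u^*}$, $\bar u_*=-e^{-u_*}$, $\bar u=-e^{-u}$ (which, since the semilimits are nonnegative by Proposition~\ref{local-bound-ratios}, also turns the polynomially growing functions into bounded ones with values in $[-1,0)$), and then proves a bespoke \emph{partial} comparison result (Lemma~\ref{lem:compeik}) that only applies when one of the two semisolutions is $C^1$ with bounded derivatives -- which suffices because $\bar u$ is built from the explicit quadratic $\varpi$. That lemma is proved by contradiction with a rescaled bump-function penalization, and the coercivity \eqref{eq:degenere-source} of the source $f$ forces the contradiction points to converge to the anchor $(0,\cM)$, where Theorem~\ref{thm:cm} gives the ordering; no optimality principles and no issues with the degenerate Hamiltonian arise. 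Your route instead reads \eqref{eq:eikhat} as the HJB equation of the deterministic LQ regulator whose value is $\varpi(\cdot,y)$ -- this identification is correct: \eqref{eq:matrixriccati} is the associated algebraic Riccati equation, and since $\Psi>0$ and $(\Gamma,\hat C)$ is stabilizable (cf.\ the proof of Lemma~\ref{lem:ricatti}) the maximal solution is the stabilizing one, so the LQ value is indeed $\varpi$ and the optimal feedback is the one in \eqref{eq:asy-H}. What the paper's argument buys is that it stays within elementary viscosity machinery and handles merely semicontinuous, unboundedly growing semisolutions at the price of one exponential change of variables; what yours buys is conceptual transparency (it explains where $\varpi$ and the candidate policy come from) at the price of optimality principles for semicontinuous functions.

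The thin spot is the supersolution half. The subsolution half is fine as sketched: you pick the stabilizing LQ-optimal trajectory yourself, so $\xi_T\to0$, upper semicontinuity at the anchor closes the estimate, and the unbounded control set can be handled by truncation, since restricting the infimum over $\dot h$ preserves the subsolution inequality. For $u_*$, however, ``along near-optimal trajectories'' conceals two real issues: (i) truncating the control set now destroys the supersolution inequality, so you need a trajectory-wise (Soravia-type) superoptimality principle for merely lower semicontinuous supersolutions with unbounded controls and quadratic cost, i.e.\ the existence of a single trajectory along which $u_*(\xi_t)+\int_0^t l\,ds$ is nonincreasing -- this is not off the shelf and is the analogue, in workload, of the paper's Lemma~\ref{lem:compeik}; and (ii) even granted such a trajectory, its time-$T$ endpoints need not approach $(0,\cM)$ for every $T$, and nothing is known about $u_*$ away from the anchor, so you must combine $u_*\ge 0$ with the coercivity of $f$ to conclude that the accumulated cost is finite, hence $\int_0^\infty|\xi_t|^2dt<\infty$, extract a subsequence $t_k$ with $\xi_{t_k}\to(0,\cM)$, and only then invoke lower semicontinuity at the anchor together with $\int_0^\infty l\,ds\ge\varpi(\xi_0,y)$. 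With these two points filled in, your plan does deliver $u^*\le u\le u_*$; as written, they are exactly where the paper's partial-comparison lemma is doing the work.
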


The converse inequality evidently holds by definition of $u^*$ and $ u_*$. Whence, Proposition~\ref{prop:conclusioneik} shows that all three functions are equal and depend on the initial conditions $(d,h)$ of the price distortion and the risky positions through the function $\varpi$. We will establish this result by first showing that the semilimits $ u^*$ and $ u_*$ are viscosity sub- and supersolutions, respectively, of the first-order PDE \eqref{eq:eikhat}. We then conclude by proving that, under the condition $u^*(0,\cM(y),y)\leq u_*(0,\cM(y),y)$ (which we have already verified, cf.~Theorem~\ref{thm:cm}), this PDE admits a comparison result among non-negative semisolutions.

\subsection{A First-Order Equation}

We first prove that the semilimits $ u^*$ and $ u_*$ are viscosity sub- and supersolution of \eqref{eq:eikhat}:

\begin{lem}\label{lem:eik}
For all $y\in \cD$, the function $(d,h)\to  u^* (d,h,y)\geq 0$ is a viscosity subsolution of
\begin{align}\label{eq:eik}
 f(d,h-\cM(y),y)\geq \cH(d,h-\cM(y),-\pa_d  u^*(d,h,y),-\pa_d  u^*(d,h,y)).
\end{align}
Likewise, $(d,h)\to  u_* (d,h,y)$ is a viscosity supersolution of
\begin{align}
 f(d,h-\cM(y),y)\leq  \cH(d,h-\cM(y),-\pa_d  u_*(d,h,y),-\pa_d  u_*(d,h,y)).
\end{align}
\end{lem}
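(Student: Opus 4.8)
The plan is to obtain both viscosity inequalities by the homogenisation/stability scheme of Propositions~\ref{prop:sub} and~\ref{prop:super}, but in a much simplified form. The key observation is that, unlike for the corrector equation~\eqref{corrector2}, here the frictional variables $(d,h)$ stay of order one rather than of order $\e^{1/2}$; consequently the test-function ansatz needs no ``$\e^2\omega({^\e\xi})$'' layer, and the first-order equation~\eqref{eq:eik} falls out of the $\e^0$-terms of $G^\e$ as soon as Assumption~\ref{finite-V0} is used. In particular, no analogue of the $h^{\eta,\d}$-cutoff machinery of Proposition~\ref{prop:super} is needed, because ${^\e\xi}$ is not driven to infinity in this argument. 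I would treat the subsolution statement for $u^*$ in detail; the supersolution statement for $u_*$ is entirely symmetric.

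First I would fix $y_0\in\cD$ and a smooth $\varphi$ such that $(d,h)\mapsto u^*(d,h,y_0)-\varphi(d,h)$ has a strict local maximum at $(d_0,h_0)$ with value $0$ (strictness is harmless, by adding $|(d,h)-(d_0,h_0)|^2$, which does not alter first-order terms). By Proposition~\ref{local-bound-ratios}, $u^\e$ is locally bounded and $u^{*,\e}=(V^0-\tilde V^\e_*)/\e$ is upper semicontinuous, so I pick a small closed ball $\overline B$ around $(d_0,h_0,y_0)$ on which $\sup_\e u^{*,\e}<\infty$. Then, for each $\beta>0$, I maximise $(d,h,y)\mapsto u^{*,\e}(d,h,y)-\varphi(d,h)-\beta^{-1}|y-y_0|^2$ over $\overline B$: the standard perturbed-maximum lemma yields a subsequence $\e\to0$ and maximisers $(d^\e,h^\e,y^\e)\to(d^\beta,h^\beta,y^\beta)$ with $u^{*,\e}(d^\e,h^\e,y^\e)\to u^*(d^\beta,h^\beta,y^\beta)$, and the penalisation lemma gives $(d^\beta,h^\beta,y^\beta)\to(d_0,h_0,y_0)$ as $\beta\to0$. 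Because $(d_0,h_0)$ is a \emph{strict} maximum of $u^*(\cdot,\cdot,y_0)-\varphi$ and $u^*$ is upper semicontinuous, the boundary values of the perturbed functional on $\partial\overline B$ stay strictly below its value at the centre once $\e$ and $\beta$ are small, so all these extrema lie in the interior of $\overline B$; this is exactly where the local bounds of Proposition~\ref{local-bound-ratios} enter.

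Each such maximiser is a local minimiser of $\tilde V^\e_*-\psi^\e$, where $\psi^\e(d,h,y):=V^0(y)-\e\big(\varphi(d,h)+\beta^{-1}|y-y_0|^2\big)$, so the discontinuous viscosity supersolution property of $\tilde V^\e$ (equation~\eqref{eq:pdetilde}) gives $\rho\psi^\e+G^\e(d^\e,h^\e,y^\e,\pa_d\psi^\e,\pa_h\psi^\e,\pa_y\psi^\e,\pa_{yy}\psi^\e)\ge0$. Since $\pa_d\psi^\e=-\e\pa_d\varphi$ and $\pa_h\psi^\e=-\e\pa_h\varphi$, the arguments $\pa_d\psi^\e/\e$, $\pa_h\psi^\e/\e$ fed into $\cH$ inside $G^\e$ are exactly $-\pa_d\varphi,-\pa_h\varphi$; using the definition~\eqref{dfn:G} of $G^\e$ and cancelling $\rho V^0-\cL^Y V^0-\tfrac{1}{2\gamma}\mu^\top\Sigma^{-1}\mu=0$ via Assumption~\ref{finite-V0}, the left-hand side collapses to
\[
f\big(d^\e,h^\e-\cM(y^\e),y^\e\big)-\cH\big(d^\e,h^\e-\cM(y^\e),-\pa_d\varphi(d^\e,h^\e),-\pa_h\varphi(d^\e,h^\e)\big)+\e\,\mathrm{Rem}^{\e,\beta},
\]
where $\mathrm{Rem}^{\e,\beta}$ gathers the $\rho$- and $\cL^Y$-contributions of $\varphi$, of $\beta^{-1}|\cdot-y_0|^2$ and of $h^\top d-\tfrac12 d^\top C^{-1}d$; for fixed $\beta$ it is bounded near $(d_0,h_0,y_0)$, its worst term being $\e\,\cL^Y(\beta^{-1}|\cdot-y_0|^2)=O(\e/\beta)$. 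Letting $\e\to0$ for fixed $\beta$, then $\beta\to0$, and using continuity of $f,\cH,\cM,\pa\varphi$ together with the convergences above, I get the subsolution inequality~\eqref{eq:eik} at $(d_0,h_0,y_0)$. For $u_*$ one tests from below at a strict local minimum, replaces the $y$-penalisation by $+\beta^{-1}|y-y_0|^2$ inside the minimised functional so that minima are pushed toward $y_0$, uses $u^\e_*=(V^0-\tilde V^{*,\e})/\e\ge0$ (again Proposition~\ref{local-bound-ratios}) to keep the minima interior, invokes the discontinuous viscosity subsolution property of $\tilde V^\e$, and runs the same computation to obtain the reversed inequality. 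The hard part will be the bookkeeping of this double limit and the interiority of the extrema: one cannot simply reuse the coercivity term $c_0\varpi^2(d,h-\cM(y),y)$ of Propositions~\ref{prop:sub}--\ref{prop:super}, since neither it nor its gradient vanishes at a generic base point $(d_0,h_0)$ and it would therefore pollute the first-order equation, so one must work purely inside a small ball, leaning on the strict extremum and the semicontinuity of $u^*$, $u_*$; everything after that — the collapse of $\rho\psi^\e+G^\e$ to $f-\cH$ and the vanishing of $\e\,\mathrm{Rem}^{\e,\beta}$ — is immediate from Assumption~\ref{finite-V0} and the smoothness of $\varphi$ and $V^0$ near $(d_0,h_0,y_0)$.
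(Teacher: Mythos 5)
Your argument is correct and is essentially the paper's own proof: both test the equation \eqref{eq:pdetilde} for $\tilde V^\e$ with a function of the form $V^0(y)-\e(\mbox{test}+\mbox{localisation})$ at a local minimum (resp.\ maximum) of $\tilde V^\e_*-\psi^\e$, invoke the discontinuous viscosity property of $\tilde V^\e$, cancel the order-one terms via Assumption~\ref{finite-V0} so that $f-\cH(\cdot,-\pa_d\varphi,-\pa_h\varphi)$ survives at leading order, and let the $O(\e)$ remainder vanish along the converging touching points supplied by Proposition~\ref{local-bound-ratios}. The only difference is the localisation device — your quadratic $y$-penalisation with the double limit $\e\to0$ then $\beta\to0$, instead of the paper's quartic penalty centred at points realising the semilimit as in \cite[Lemma 6.7]{moreau.al.15} — which is an immaterial technical variation.
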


\begin{proof}
We only prove the subsolution property; the supersolution property can be verified along the same lines. Consider a smooth function $\phi$ and $\theta_0=(d_0,h_0,y_0)\in\R^n\times\R^n\times \cD$ such that the following strict local maximality holds,
$$(  u^*-\phi)(\theta)<(  u^*-\phi)(\theta_0) =0, \quad \mbox{for all $\theta\neq \theta_0$.}$$
There exists a family $\{\theta_\e\}_{\e > 0}$ satisfying the following properties,
\begin{gather*}
\theta_\e\to \theta_0,\quad  u^{*,\e}(\theta_\e)\to  u^*(\theta_0),\\
\mbox{ and }p^\e:=u^{*,\e}(\theta_\e)- \phi(\theta_\e)\to 0.
\end{gather*}
Similarly to \cite[Lemma 6.7]{moreau.al.15} there are $c_0,r_0>0$ and  $\tilde \theta_\e\to\theta_0$ such that the function
\begin{align}
(v^\e-\psi^\e)(\theta)=\tilde V^\e_*(\theta)-v^0(y)-\e(p^\e+ \phi(\theta)+c_0(|d-d_\e|^4+|h-h_\e|^4+|y-y_\e|^4))
\end{align} 
has a local minimum at $\tilde \theta_\e\in B_{4r_0}(0,\cM(y_0),y_0)$.

Using the discontinuous viscosity supersolution property of $\tilde V^\e$ and Proposition \ref{thm:remainder}, it follows similarly as in \cite[Lemma 6.7]{moreau.al.15} that
\begin{align*}
 0& \leq - \cH(\tilde d^\e,\tilde h^\e,\tilde y^\e,-\pa_d\phi^\e,-\pa_h\phi^\e)-\e\rho((\tilde h^\e)^\top \tilde d^\e-\frac{\tilde (d^\e)^\top C^{-1}\tilde d^\e}{2})\\
  &\quad +\e\left( f (^\e\tilde \xi_1,^\e\tilde  \xi_2,\tilde y^\e)+\frac{1}{2}\pa_{\bar \cM\bar\cM}w(^\e \tilde \xi, \tilde y^\e)-\rho \phi^\e(\tilde y^\e)+\cL^Y \phi^\e(\tilde y^\e) \right)\\
  &= f (\tilde d^\e,\tilde  h^\e-\cM(\tilde y^\e),\tilde y^\e) -\cH(\tilde d^\e,\tilde h^\e,\tilde y^\e,-\pa_d\phi^\e,-\pa_h\phi^\e)\\
  &\quad+\e\left(-\rho((\tilde h^\e)^\top \tilde d^\e+\frac{\tilde (d^\e)^\top C^{-1}\tilde d^\e}{2})+\frac{1}{2}\pa_{ \cM\cM}w(^\e \tilde \xi, \tilde y^\e)-\rho \phi^\e(\tilde y^\e)+\cL^Y \phi^\e(\tilde y^\e) \right).
\end{align*}
Due to the boundedness of $\tilde \theta^\e$ and the definition of $\phi^\e$, the last line goes to $0$ as $\e \to 0$. Thus, for $\e \to 0$, we obtain the asserted viscosity subsolution property.
 \end{proof}
 
For the rest of the section, we fix $y\in \cD$ and omit the dependence on $y$ to ease notation. Note that the first-order equation \eqref{eq:eikhat} lacks ``properness'' in the sense of \cite[Page 2]{CrIsLi92}. To restore properness and prove a comparison result, we define the following auxiliary functions,
\begin{align*}
\bar u^*(d,h):=-e^{- u^*(d,h)}\in [-1,0), \qquad \bar u_*(d,h):=-e^{- u_*(d,h)}\in[-1,0),
\end{align*}
as well as 
\begin{align}\label{dfn:tilde-varpi}
\bar u (d,h)=-e^{- u(d,h)}\in[-1,0).
\end{align}
We also introduce the generator
\begin{align}
\bar H(d,h,r,p_1,p_2)=-r^2  f(d,h-\cM)+ \frac{1}{2}(p_2+C p_1)\top\Lambda^{-1}(p_2+C p_1)-rRd^\top p_1.
\end{align}
Using the viscosity properties from Lemma \ref{lem:eik}, direct computations show that $\bar u^*$ and $\bar u_*$ are viscosity sub- and supersolutions of the PDE corresponding to $\bar{H}$,
\begin{align}\label{eq:eik-proper}
0\geq \bar H(\cdot,\bar u^*, \pa_d \bar u^* , \pa_h \bar u^* ),\\
0\leq \bar H(\cdot,\bar u_*, \pa_d \bar u_*, \pa_h \bar u_* ),\label{eq:eik-proper-super}
\end{align}
and $\bar u$ is smooth solutions of the same PDE,
\begin{align}\label{eq:eikbar}
0= \bar H(\cdot,\bar u, \pa_d \bar u , \pa_h  \bar u ).
\end{align}
Moreover,
\begin{align*}
\bar u(0,\cM)=\bar u^*(0,\cM)=\bar u_*(0,\cM).
\end{align*}

The next step is to obtain a result enabling us to compare the semisolutions $\bar u^*$ and $\bar u_*$ of the PDE \eqref{eq:eikbar}. Note that our task here is simpler than establishing a general comparison result because the PDE admits a smooth solution $\bar u$. Therefore it is sufficient to obtain the following ``partial comparison result'' inspired by \cite[Proposition 5.3]{ETZ1}, \cite[Lemma 5.7]{EKTZ} and \cite[Proof of Lemma 6.10]{moreau.al.15}:

\begin{lem}\label{lem:compeik}
There exists a partial comparison result for bounded, nonpositive viscosity semisolutions of \eqref{eq:eikbar} in the following sense.
Let $v_1$ a bounded, lower semicontinuous, nonpositive viscosity supersolution  of \eqref{eq:eik-proper-super} and $v_2$ a bounded, upper semicontinuous, nonpositive viscosity subsolution of \eqref{eq:eik-proper}  satisfying 
\begin{equation}\label{eq:asscomp}
v_1(0,\cM)\geq v_2(0,\cM).
\end{equation}
If either one of $v_1$ or $v_2$ is continuously differentiable with bounded derivatives then
$$v_1(d,h)\geq v_2(d,h), \quad \mbox{ for all }(d,h)\in\R^n\times \R^n.$$
\end{lem}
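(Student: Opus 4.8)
The plan is to reduce to the case where $v_2$ is the one that is $C^1$ with bounded gradient (the case of $v_1$ smooth is treated the same way, interchanging ``touching from below''/``from above'' and the sub/supersolution roles), and then to run a penalization argument on the unbounded domain $\R^n\times\R^n$ at the fixed $y$. The engine of the argument is the zeroth-order term $-r^2 f(d,h-\cM)$ that the exponential change of variables introduced into $\bar H$, together with the quadratic coercivity $f(\cdot)\ge\underline m|\cdot|^2$ from Assumption~\ref{assume-expansion}(i). Suppose, towards a contradiction, that $\sup_{(d,h)}(v_2-v_1)(d,h)=:2m>0$.

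First I would set $\psi(d,h):=1+|d|^2+|h|^2$ and, for $\beta>0$, consider $w_\beta:=v_2-v_1-\beta\psi$; since $v_1,v_2$ are bounded (with values in $[-1,0)$) and $\psi$ is coercive, this upper semicontinuous function attains a global maximum at some $\theta^\beta=(d^\beta,h^\beta)$, and for $\beta$ small enough $\sup w_\beta\ge m$, so $(v_2-v_1)(\theta^\beta)\ge m+\beta\psi(\theta^\beta)$, which also forces $\beta\psi(\theta^\beta)\le 1$. The smooth function $v_2-\beta\psi$ then touches $v_1$ from below at $\theta^\beta$, so the viscosity supersolution property \eqref{eq:eik-proper-super} of $v_1$ gives, with $r_1:=v_1(\theta^\beta)$ and $(p_1,p_2):=\partial v_2(\theta^\beta)$,
\[
0\le\bar H\bigl(\theta^\beta,\,r_1,\,p_1-2\beta d^\beta,\,p_2-2\beta h^\beta\bigr),
\]
while, $v_2$ being a classical subsolution of \eqref{eq:eik-proper}, $0\ge\bar H(\theta^\beta,r_2,p_1,p_2)$ with $r_2:=v_2(\theta^\beta)$; note $r_2-r_1\ge m$ and $r_1,r_2\in[-1,0)$.

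Next I would subtract the two inequalities. Writing $\xi^\beta:=(d^\beta,h^\beta-\cM)$, the contribution of the term $-r^2 f$ equals $(r_2^2-r_1^2)f(\xi^\beta)\le-m^2 f(\xi^\beta)\le-m^2\underline m|\xi^\beta|^2$ (since $r_1^2-r_2^2=(|r_1|+|r_2|)(r_2-r_1)\ge m^2$ and $f\ge\underline m|\cdot|^2$); the two $\Lambda^{-1}$-quadratic terms differ by $O(\sqrt\beta)$, the term $2\beta r_1 R|d^\beta|^2$ is uniformly bounded since $\beta|d^\beta|^2\le 1$, and the only remaining term is $(r_2-r_1)Rd^\beta{}^\top p_1$, bounded by $c|d^\beta|\le c(1+|\xi^\beta|)$ because $p_1$ is bounded. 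Hence $m^2\underline m|\xi^\beta|^2\le c(1+|\xi^\beta|)$, so $|\xi^\beta|$ — and therefore $\theta^\beta$ — stays in a fixed compact set as $\beta\to 0$. Along a subsequence $\theta^\beta\to\bar\theta$, $r_1\to\ell$, $r_2\to\bar r_2=v_2(\bar\theta)\in[-1,0)$, $(p_1,p_2)\to(\bar p_1,\bar p_2)=\partial v_2(\bar\theta)$; from $r_2-r_1\ge m$ one gets $\ell\le\bar r_2-m<0$, so $|\ell|\ge m$, and $\ell\ge v_1(\bar\theta)$ by lower semicontinuity, so $v_1(\bar\theta)<v_2(\bar\theta)$. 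Since $\beta d^\beta\to 0$, $\beta h^\beta\to 0$ and $\bar H$ is continuous, the two inequalities pass to the limit, giving $0\le\bar H(\bar\theta,\ell,\bar p_1,\bar p_2)$ and $0\ge\bar H(\bar\theta,\bar r_2,\bar p_1,\bar p_2)$. Subtracting these and dividing by $\bar r_2-\ell>0$ yields $R\bar d^\top\bar p_1\ge(|\ell|+|\bar r_2|)f(\bar\xi)$, whereas dropping the nonnegative $\Lambda^{-1}$-quadratic term in the second inequality and dividing by $\bar r_2<0$ yields $R\bar d^\top\bar p_1\le|\bar r_2|f(\bar\xi)$; combining the two forces $|\ell|f(\bar\xi)\le 0$, hence $f(\bar\xi)=0$, hence $\bar\xi=0$, i.e.\ $\bar\theta=(0,\cM)$, by the coercivity of $f$. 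But then $v_1(\bar\theta)<v_2(\bar\theta)$ contradicts the assumption $v_1(0,\cM)\ge v_2(0,\cM)$, so $\sup(v_2-v_1)\le 0$, which is the claim.

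The step I expect to be the main obstacle is the localization of $\theta^\beta$: because $\bar H$ carries the term $-rRd^\top p_1$, linear in the unbounded variable $d$, a naive penalization does not confine the maximizer, and the argument closes only because the exponential change of variables produced the zeroth-order term $-r^2 f$ with $f$ quadratically coercive, so that once $r_2-r_1$ is bounded below the quadratic gain dominates the linear loss by a completion of squares — this is precisely where Assumption~\ref{assume-expansion}(i) is used. Everything else is the standard partial-comparison machinery; the single-point hypothesis at $(0,\cM)$ enters only to discard the lone point where $f$, hence the effective properness of $\bar H$, degenerates.
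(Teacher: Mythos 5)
Your proof is correct. It establishes the same conclusion as the paper's Lemma 6.10 and is driven by the same mechanism --- the coercive zeroth-order term $-r^2 f$ created by the exponential change of variables, together with $f(\xi)\ge\underline m|\xi|^2$ from Assumption~\ref{assume-expansion}(i), which restores effective properness of $\bar H$ away from $(0,\cM)$. The technical implementation, however, is genuinely different. The paper penalizes $v_1-v_2$ by subtracting a fixed-amplitude bump $2M\beta_n$ whose support expands with $n$, minimizes over $B_n(\bar d,\bar h)$, and exploits a monotonicity property of the bump to get a decreasing sequence $\Phi_n(d_n,h_n)$; the estimate $c|\xi_n|/n+r_n\ge\alpha^2\underline m|\xi_n|^2$ then forces $(d_n,h_n)\to(0,\cM)$ directly. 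You instead use a single coercive quadratic penalty $\beta\psi$ and maximize globally, letting the weight $\beta\to0$; confinement of the maximizers is not built in but is extracted \emph{a posteriori} from the subtracted viscosity inequalities, after which you pass to a limit point $\bar\theta$ and pin it to $(0,\cM)$ by combining the limiting inequalities to conclude $f(\bar\xi)=0$. The quadratic penalty avoids the auxiliary monotonicity condition on the cutoff that the paper needs, at the cost of a slightly more delicate limit argument at the end; both approaches use the bounded-gradient hypothesis on the smooth semisolution in the same way, to control the linear term $-rRd^\top p_1$ against the quadratic gain from $f$.

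Two small remarks you may want to incorporate when writing this up: (i) to deduce $\beta\psi(\theta^\beta)\le1$ you implicitly use $v_2-v_1\le1$, which follows from $v_1,v_2\in[-1,0)$ --- worth saying explicitly since it is the only place the range $[-1,0)$ (and not merely boundedness) is used to bound the penalty; (ii) in ``$\ell\le\bar r_2-m<0$, so $|\ell|\ge m$'' you in fact get the strict inequality $\ell<-m$ since $\bar r_2<0$, which is what you use later when dividing by $|\ell|>0$.
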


\begin{proof}
We focus on the case where $v_2$ is continuously differentiable with bounded derivatives. The other case can be treated similarly.  Assume that the comparison does not hold and 
$$-\a:=v_1(\bar d,\bar h)-v_2(\bar d,\bar h)<0\mbox{ for some }(\bar d,\bar h)\in\R^n\times \R^n.$$
Set $M=\sup\{|v_1|\}+\sup\{|v_2|\}<\infty$. We proceed similarly as in \cite[Proof of Lemma 6.10]{moreau.al.15} and fix $\beta\in C^\infty(\R^n\times \R^n)$, satisfying $0\leq \beta\leq 1$, $\beta(0,0)=1$, $\pa_{d} \beta(0,0)=\pa_h \beta(0,0)=0$, and $\beta(d,h)=0$ for $(d,h)\notin B_1(0,0)$. We can also ensure that $\beta$ satisfies the following non-degeneracy and monotonicity conditions,
$$\{\beta=1\}=\{(0,0)\} \quad \mbox{and} \quad \beta\left( d, h\right) \leq  \beta\left(\frac{(d,h)}{\eta}\right) \quad \mbox{for all $\eta\geq 1$.}$$ 
For $n>1$, define
$$\Phi_n (d,h):=(v_1-v_2-2 M\beta_n(\cdot -(\bar d,\bar h)))(d,h), \quad \mbox{where }\beta_n(d,h)=\beta\left(\frac{(d,h)}{n}\right).$$
Note that $\Phi_n (d,h)=(v_1-v_2)(d,h)\geq -2M$ and $\Phi_n (\bar d,\bar h)=-\a-2M<-2M$ for all $(d,h)\notin  B_n(\bar d,\bar h)$. Thus, there exists $ ( d_n, h_n)\in B_n(\bar d, \bar h)$ such that 
\begin{align*}
&v_1(d_n,h_n)-v_2(d_n,h_n)-2 M\beta_n( (d_n-\bar d,h_n-\bar h))\\
&=\Phi_n (d_n,h_n)= \inf_{d,h} \Phi_n (d,h)\leq \Phi_n \left(\bar d,\bar h\right)\leq -\a-2M .
\end{align*}
Note that this implies 
\begin{align}\label{sign-diff}v_1(d_n,h_n)-v_2(d_n,h_n)\leq -\a.\end{align}
As $( d_n, h_n)\in B_n(\bar d, \bar h)$, up to taking a subsequence, we have 
$$\frac{(d_{n}-\bar d,h_{n}-\bar h)}{{n}}\to (d^*,h^*)\in B_1(0), \quad\mbox{as $n\to \infty$}.$$
By monotonicity of $\beta_n$ in $n$, for all $n\geq n'\geq 1$ we have $\Phi_{n} (d,h)\leq \Phi_{n'} (d,h)\mbox{ for all }(d,h)$. Hence, 
$$\Phi_{{n+1}} (d_{{n+1}},h_{{n+1}}) \leq \Phi_{{n}} (d_{{n}},h_{{n}})\leq -\a-2M,$$
and in turn
$$ \lim_{n \to \infty} \Phi_n (d_{{n}},h_{{n}})=\lim_{n \to \infty} (v_1-v_2)(d_{{n}},h_{{n}})-2 M\beta\left(d^*,h^*\right)\downarrow \Phi^* \leq -\a-2M.$$

We now use the function 
$$
(d,h)\to \tilde \phi_{\eta}(d,h):=v_2(d,h)+2 M\beta_n(d-\bar d,h-\bar h)
$$
as a test function for $v_1$ at $( d_{n}, h_{n})$ to obtain 
$$0\leq \bar H\left( d_{n}, h_{n},v_1( d_{n}, h_{n}),\pa v_2( d_{n}, h_{n})+\frac{2M}{n}\pa \tilde \beta\left( \frac{d_{n}-\bar d, h_{n-\bar h}}{n}\right)\right).$$
As $v_2$ is a smooth subsolution of \eqref{eq:eik-proper} we have 
$$0 \leq -\bar H\left( d_{n}, h_{n},v_2( d_{n}, h_{n}),\pa v_2( d_{n}, h_{n})\right)$$ 
which implies
\begin{align*}
&0\leq  \bar H\left( d_{n}, h_{n},v_1( d_{n}, h_{n}),\pa v_2( d_{n}, h_{n})+\frac{2M}{n}\pa \tilde \beta_{n}( d_{\eta}, h_{\eta})\right)\\
&\leq \bar H\left( d_{n}, h_{n},v_1( d_{n}, h_{n}),\pa v_2( d_{n}, h_{n})+\frac{2M}{n}\pa \tilde \beta_{n}( d_{\eta}, h_{\eta})\right)\\
&\qquad-\bar H\left( d_{n}, h_{n},v_2( d_{n}, h_{n}),\pa v_2( d_{n}, h_{n})\right)\\
&\leq  (v_2^2(d_{n}, h_{n})-v_1^2(d_{n}, h_{n}))  f(d_{n}, h_{n}-\cM) -R\bar d_n^\top \pa_d  v_2(\cdot) (v_1(\cdot)-v_2(\cdot))(d_{n}, h_{n})\\
&+\frac{2M^2}{n^2}\pa \beta\left( \frac{d_{n}-\bar d, h_{n}-\bar h}{n}\right)^\top\hat C\Lambda^{-1}\hat C\pa \beta\left( \frac{d_{n}-\bar d, h_{n}-\bar h}{n}\right)\\
&-2MR \frac{d_n^\top}{n}\pa \beta\left( \frac{d_{n}-\bar d, h_{n}-\bar h}{n}\right)+\frac{2M}{n}\pa \beta\left( \frac{d_{n}-\bar d, h_{n}-\bar h}{n}\right)^\top\hat C\Lambda^{-1}\hat C\pa v_2\left( {d_{n} ,h_{n}}\right).
\end{align*}
Note that one can find $c>0$ and $r_n\to 0$ such that the last two lines are bounded by 
$$c\frac{ \left|(d_n,h_n-\cM)\right|}{n}+r_n.$$
Additionally, using again the viscosity subsolution property of $v_2$ and its sign we find that 
$${- v_2(d_{n}, h_{n}) f(d_{n}, h_{n}-\cM)}-R\bar d_n^\top \pa_d  v_2(d_{n}, h_{n})\geq  -\frac{\pa v_2(\cdot)^\top\hat C\Lambda^{-1}\hat C\pa v_2(\cdot)}{2v_2(\cdot) }\left( {d_{n}, h_{n}}\right) \geq 0.$$
Combining this with \eqref{sign-diff} we obtain 
\begin{align*}
c\frac{ \left|(d_n,h_n-\cM)\right|}{n}+r_n&\geq  (v_2(\cdot)-v_1(\cdot)) \left\{- (v_2(\cdot)+v_1(\cdot))  f(d_{n}, h_{n}-\cM)\right.\\
&\left. \qquad-R\bar d_n^\top \pa_d  v_2(\cdot) \right\}\left( {d_{n}, h_{n}}\right) \\
&\geq  (v_2(d_{n}, h_{n})-v_1(d_{n}, h_{n})) \left\{- v_1(d_{n}, h_{n})  f(d_{n}, h_{n}-\cM) \right.\\
&\left.\qquad -\frac{1}{2v_2(d_{n}, h_{n}) }\pa v_2\left( {d_{n}, h_{n}}\right)^\top\hat C\Lambda^{-1}\hat C\pa v_2\left( {d_{n} h_{n}}\right)\right\}\\
&\geq  - v_1(d_{n}, h_{n}) (v_2(d_{n}, h_{n})-v_1(d_{n}, h_{n}))  f(d_{n}, h_{n}-\cM).
\end{align*}
Together with \eqref{sign-diff}, it follows that
\begin{align*}
c\frac{ \left|(d_n,h_n-\cM)\right|}{n}+r_n\geq \a^{2} f(d_{n}, h_{n}-\cM)\geq \a^2\underline m \left|(d_n,h_n-\cM)\right|.
\end{align*}
Hence, $\left|(d_n,h_n-\cM)\right|\to 0$ as $n\to \infty$. Using one more time \eqref{sign-diff} and the lower semicontinuity of $v_1-v_2$ finally yields 
$$-\a\geq \liminf_n v_1(d_n,h_n)-v_2(d_n,h_n)\geq  v_1(0,\cM)-v_2(0,\cM)\geq 0,$$
which is a contradiction to \eqref{eq:asscomp}. Whence, comparison holds for \eqref{eq:eikbar} under this assumption as asserted.
\end{proof}

\begin{proof}[Proof of Proposition \ref{prop:conclusioneik}]
Recall the viscosity properties stated in \eqref{eq:eik-proper}, \eqref{eq:eik-proper-super} and \eqref{eq:eikbar}. Apply Lemma~\ref{lem:compeik} successively to the pairs $\bar u^*,\bar u $ and $\bar u ,\bar u_*$, which are all equal at $(0,\cM(y),y)$. This yields the following inequalities, 
$$\bar u^*(\theta)\leq \bar u(\theta)\leq \bar u_*(\theta), \quad \mbox{for all $\theta\in \R^n\times \R^n \times \cD$}.$$
As a consequence,
$$ u^*(\theta)\leq  u(\theta)\leq  u_*(\theta).$$
By definition of $u^*$ and $u_*$, all three functions are indeed equal as claimed.
\end{proof}

 \section{Asymptotically Optimal Portfolios}\label{s.optimal.policy}
 
 We now turn to the proof of our second main result, Theorem~\ref{thm:policy}, which provides a family of asymptotically optimal policies. We first prove a general sufficient criterion for the admissibility of a certain class of trading strategies. It implies admissibility of our asymptotic optimizers and also guarantees the admissibility of the constant coefficient portfolios used to establish a lower bound for the value function in Proposition~\ref{local-bound-ratios}. To cover both of these applications, we consider the following class of feedback trading rates,
\begin{align}\label{ansatz1}
\dot H^{\e,\a,\b}(\e d,h,y)=-\frac{\a(y)}{\e} (h-\cM(y))-\frac{\b(y)}{\e}{d},
\end{align}
where $\cM$ is the frictionless Merton portfolio for the problem without illiquidity and $\a(y),\b(y)$ are $\mathbb{R}^{n\times n}$-valued functions.

\begin{prop}\label{prop:addmisibility}
Suppose that there exists a mapping $y\in \cD\to M(y)$ with values in $\Sb_{2n}$ such that, for some $\d>0$,
\begin{gather}
\notag M\geq \d I_{2n},\quad \cL^YM\leq (\rho-\d)M,\\
\notag M \left( \begin{array}{cc}
RI_n+C \b &  C\a  \\
 \b& \a   \end{array} \right)^\top+\left( \begin{array}{cc}
RI_n+C \b &  C\a  \\
 \b & \a   \end{array} \right)M \geq 0,\\
\E\left[\int_0^\infty e^{-(\rho-\d)t }\left(\mathrm{Tr}(M_t c_{\bar \cM}(Y_t))+\sum_{i,j=1}^{2n}\left|\frac{d\langle M^{i,j},\bar \cM^i\rangle_t}{dt}\right|^2\right)dt\right]<\infty,\label{cont-crossterms}
\end{gather}
{where } $\bar\cM_t = \begin{pmatrix} 0 \\ \cM_t \end{pmatrix}.$
Then for all $\e>0$ the feedback control \eqref{ansatz1} is admissible. In particular the controls \eqref{eq:asy-H} are admissible under the assumptions of Theorem \ref{thm:policy}. 
\end{prop}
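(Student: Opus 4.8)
The plan is to use $M(Y_t)$ as a stochastic Lyapunov function for the closed-loop state. Writing $\tilde D_t=D_t/\e$ and $X_t:=(\tilde D_t^\top,(H_t-\cM(Y_t))^\top)^\top$, one inserts the feedback \eqref{ansatz1} into \eqref{def:tildeD}--\eqref{eq:position} to obtain the linear SDE
\[
dX_t=-\e^{-1}N(Y_t)X_t\,dt-d\bar\cM_t,\qquad N(y):=\begin{pmatrix}RI_n+C\b(y)&C\a(y)\\ \b(y)&\a(y)\end{pmatrix},
\]
with $X_0=(d^\top,(h-\cM(y))^\top)^\top$; this $N$ is exactly the matrix in the hypotheses. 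I would then apply It\^o's formula to $\Phi_t:=e^{-(\rho-\d)t}X_t^\top M(Y_t)X_t$. Its drift splits into four pieces: the purely quadratic term $-\e^{-1}e^{-(\rho-\d)t}X_t^\top\big(N^\top M+MN\big)(Y_t)X_t\le 0$ by the matrix Lyapunov inequality; the term $e^{-(\rho-\d)t}X_t^\top\big(\cL^Y M-(\rho-\d)M\big)(Y_t)X_t\le 0$ by $\cL^Y M\le(\rho-\d)M$; the linear-in-$X_t$ drift $-2e^{-(\rho-\d)t}X_t^\top M(Y_t)\cL^Y\bar\cM(Y_t)$ together with the cross-variation terms $-2e^{-(\rho-\d)t}\sum_{i,j}X_t^j\,\tfrac{d\langle M^{ij}(Y),\bar\cM^i\rangle_t}{dt}$; and the second-order term $e^{-(\rho-\d)t}\mathrm{Tr}\big(M(Y_t)c_{\bar\cM}(Y_t)\big)$.

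Next I would estimate the last two pieces. Young's inequality absorbs the linear and cross-variation terms into $\nu\,e^{-(\rho-\d)t}X_t^\top M(Y_t)X_t$ plus a source term, where $\nu\in(0,\d)$ is taken as small as convenient; using $M\ge\d I_{2n}$ the source is bounded by a constant times $e^{-(\rho-\d)t}\big(\mathrm{Tr}(M_tc_{\bar\cM}(Y_t))+\sum_{i,j}|\tfrac{d\langle M^{ij},\bar\cM^i\rangle_t}{dt}|^2+(\cL^Y\bar\cM_t)^\top M_t\,\cL^Y\bar\cM_t\big)$, whose expected integral over $[0,\infty)$ is finite by \eqref{cont-crossterms} and the moment bounds of Assumption~\ref{assume-expansion} (and, when $M=A$, the growth bound $A(y)\le c\sqrt{1+|y|^2}\,I_{2n}$ from Lemma~\ref{lem:ricatti}). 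A Gronwall estimate then gives $\E[\Phi_t]\le e^{\nu t}(\Phi_0+c)$. Since $e^{-\rho t}X_t^\top M(Y_t)X_t=e^{-\d t}\Phi_t$ and $\nu<\d$, this yields $\int_0^\infty e^{-\rho t}\E[X_t^\top M(Y_t)X_t]\,dt<\infty$ and $e^{-\rho t}\E[X_t^\top M(Y_t)X_t]\to 0$; then $M\ge\d I_{2n}$, $|D_t|^2=\e^2|\tilde D_t|^2\le\e^2|X_t|^2$ and $|H_t|^2\le 2|X_t|^2+2|\cM(Y_t)|^2$ (the $\cM(Y_t)$-moments being controlled through $M_\cM$ in Assumption~\ref{assume-expansion}) give $\E\big[\int_0^\infty e^{-\rho t}(|H_t|^2+|D_t|^2)\,dt\big]<\infty$.

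For the pathwise transversality, I would use the It\^o decomposition: after subtracting the (nonnegative, integrable) source, $e^{-\nu t}\Phi_t$ is dominated by a nonnegative local supermartingale, hence converges a.s. Consequently $e^{-\rho t}X_t^\top M(Y_t)X_t=e^{-(\d-\nu)t}\big(e^{-\nu t}\Phi_t\big)\to 0$ a.s., and the analogous statement for $e^{-\rho t}|\cM(Y_t)|^2$ follows again from Assumption~\ref{assume-expansion}; together these give $\lim_{t\to\infty}e^{-\rho t}(|H_t|^2+|D_t|^2)=0$ a.s. Hence $\dot H^{\e,\a,\b}\in\cA_\rho^\e$ for every $\e>0$. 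The same argument with $\a(y)\equiv\a I_n$, $\b(y)\equiv\a C$ covers the constant-coefficient strategies used in Proposition~\ref{local-bound-ratios}.

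Finally, for the controls \eqref{eq:asy-H} under the hypotheses of Theorem~\ref{thm:policy} I take $M=A$, the maximal solution of \eqref{eq:matrixriccati}: then $A\ge\d I_{2n}$ and $\cL^Y A\le(\rho-\d_1)A$ are the lower bound in Lemma~\ref{lem:ricatti} and Assumption~\ref{assume-admis}, and \eqref{cont-crossterms} reduces to \eqref{eq:assume:} because $\bar\cM=(0,\cM^\top)^\top$ makes $c_{\bar\cM}=\mathrm{diag}(0,c_\cM)$ and leaves only the $\cM$-cross-variations. For the matrix Lyapunov inequality, since here $\a=\Lambda^{-1}(CA_{12}+A_2)$ and $\b=\Lambda^{-1}(CA_1+A_{12}^\top)$ one has $N=-\Gamma+\hat C\Lambda^{-1}\hat C^\top A$, so
\[
N^\top A+AN=-\Gamma A-A\Gamma+2A\hat C\Lambda^{-1}\hat C^\top A=\Psi+A\hat C\Lambda^{-1}\hat C^\top A\ge 0
\]
by \eqref{eq:matrixriccati} and the positivity of $\Psi$ (from \eqref{eq:degenere-source}) and of $\hat C\Lambda^{-1}\hat C^\top$. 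I expect the main obstacle to be the bookkeeping in the It\^o expansion of $\Phi_t$ — matching the linear and cross-variation remainders to the exact form of the hypothesis \eqref{cont-crossterms} and choosing $\e,\nu,\d$ so that everything is absorbed — together with the nonnegative-local-supermartingale argument behind the a.s. limit.
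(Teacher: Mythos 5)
Your proposal follows essentially the same route as the paper: apply It\^o's formula to the discounted Lyapunov functional $e^{-(\rho-\d/2)t}X_t^\top M(Y_t)X_t$ (the paper's exponent; your $\rho-\d$ plus a small auxiliary $\nu<\d$ with Gronwall is equivalent bookkeeping), kill the quadratic drift terms with the matrix Lyapunov inequality and $\cL^YM\le(\rho-\d)M$, absorb the remaining linear and cross-variation terms by Young's inequality into an integrable source, conclude integrability and the a.s.\ transversality by a supermartingale-convergence argument, and verify the hypotheses for \eqref{eq:asy-H} with $M=A$ via $N=-\Gamma+\hat C\Lambda^{-1}\hat C^\top A$ and the Riccati identity $N^\top A+AN=\Psi+A\hat C\Lambda^{-1}\hat C^\top A\ge\Psi$, exactly as in the paper. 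The only (harmless) difference is that you keep the drift contribution $-2X_t^\top M(Y_t)\cL^Y\bar\cM(Y_t)$ explicit and bound it through Assumption~\ref{assume-expansion} and Lemma~\ref{lem:ricatti}, whereas the paper folds it into its remainder $\chi_t$ and controls it via \eqref{cont-crossterms}.
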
 

\begin{proof}
Recall the rescaled price distortion $\tilde D$ from \eqref{def:tildeD} and note that, for fixed $\e>0$, checking the transversality conditions \eqref{eq:trans} for $D$ or $\tilde D$ is equivalent. Define $X_t=(\tilde D^\top_t,H^\top_t -\cM_t)^\top\in \R^{2n}$ and 
$$N(y):=\left( \begin{array}{cc}
RI_n+C \b(y) &  C\a(y)  \\
 \b(y) & \a(y)   \end{array} \right).$$
With this notation, 
 $$dX_t:=-\frac{1}{\e}N_t X_t dt-d\bar\cM_t.$$
 It\^o's formula yields 
\begin{align}\label{contr-drift}
d\left(e^{-(\rho-\d/2)t}X_t^\top M_t X_t\right)&=e^{-(\rho-\d/2)t}\left( \mathrm{Tr}(M_t c_{\bar \cM}(Y_t))+X_t^\top \chi_t\right)dt\\
&\quad-\frac{e^{-(\rho-\d/2)t} }{\e}X_t^\top (M_tN_t+N_t^\top M_t)X_t dt\notag\\
&\quad+e^{-(\rho-\d/2)t} X_t^\top (\cL^Y M_t-(\rho-\d/2) M_t )X_t dt,\notag
\end{align}
up to local martingale, where $\chi$ can be explicitly written using $\left\{\frac{d\langle M^{i,j},\bar \cM^i\rangle_t}{dt}\right\}_{i,j}$ and whose moments can be bounded using the terms in \eqref{cont-crossterms}.  Taking into account the elementary estimate $-\frac{\d^2}{2} |X_t|^2 +\chi_t^\top X_t\leq \frac{1}{2\d^2} |\chi_t|^2$, it follows that
$$e^{-(\rho-\d/2)t}X_t^\top M_t X_t-\int_0^t e^{-(\rho-\d/2)s}\left( \mathrm{Tr}(M_s c_{\bar \cM}(Y_s)) + \frac{1}{2\d^2} |\chi_s|^2\right) ds$$
is a local supermartingale. In view of \eqref{cont-crossterms}, this proces is bounded from below by an integrable process, so that it is a true supermartingale and therefore converges to a finite limit almost surely and in $L^1$ as $t \to \infty$. As the process $\int_0^t e^{-(\rho-\d/2)s}\left( \mathrm{Tr}(M_s c_{\bar \cM}(Y_s)) + \frac{1}{2\d^2} |\chi_s|^2\right) ds$ is increasing and integrable by \eqref{cont-crossterms}, it follows that $e^{-(\rho-\d/2)t}X_t^\top M_t^\top X_t$ admits a finite limit as well. Therefore,
$$e^{-\rho t}X_t^\top M_t X_t\to 0 \quad \mbox{ and } \quad \E\left[\int_0^\infty e^{-\rho t}X_t^\top M_t X_t dt\right]<\infty,$$
so that the control is admissible as claimed.

To apply this result to the policies from Theorem~\ref{thm:policy}, let 
$$\a=\Lambda^{-1}Q_h^\top\mbox{ and }\b=\Lambda^{-1}Q_d^\top.$$
We now show that the conditions of the present proposition are satisfied for  $M=A$. As $A$ satisfies the Riccati equation \eqref{eq:matrixriccati}, we have
\begin{align}\label{eq:contractivity}
&A\left( \begin{array}{cc}
RI+C \Lambda^{-1}Q_d^\top &  C\Lambda^{-1}Q_h^\top \\
 \Lambda^{-1}Q_d^\top&\Lambda^{-1} Q_h^\top    \end{array} \right)+\left( \begin{array}{cc}
RI+C \Lambda^{-1}Q_d^\top  &  C\Lambda^{-1}Q_h^\top   \\
 \Lambda^{-1}Q_d^\top & \Lambda^{-1}Q_h^\top    \end{array} \right)^\top A \notag \\
 &=-\Gamma   A - A \Gamma +  2A  \hat C \Lambda^{-1}\hat C^\top  A \geq\Psi \geq 2\underline m I_{2n}.
\end{align}
Under Assumption~\ref{assume-admis} and the other conditions of Theorem~\ref{thm:policy} and by Lemma \ref{lem:ricatti},  the matrix $A$ is positive, and satisfies $\cL^YA\leq (\rho-\d_1)A$
as well as \eqref{cont-crossterms}. Whence, the controls \eqref{eq:asy-H} are admissible by the first part of the present proposition.
\end{proof}

We now turn to the proof of the asymptotic optimality of the policies proposed in Theorem~\ref{thm:policy}:

\begin{proof}[Proof of Theorem \ref{thm:policy}]
Recall the approximate value function $\hat V^\e$ defined in \eqref{eq:defhatV}. As in Section~\ref{ss.ansatz}, starting from $\theta=(\e d,h,y)$ we denote by $(\tilde D^\e,H^\e,Y)$ the state controlled with the feedback strategy $\dot H^\e$ from \eqref{eq:asy-H}. Define $X_t^\e=((\tilde D^\e)^\top_t,(H^\e)^\top_t -\cM_t)^\top\in \R^{2d}$ and 
$$N :=\left( \begin{array}{cc}
RI+C\Lambda^{-1} Q_d^\top  &  C\Lambda^{-1}Q_h^\top    \\
 \Lambda^{-1} Q_d^\top  &\Lambda^{-1}Q_h^\top    \end{array} \right).$$
With this notation, $dX_t^\e:=-\e^{-1} N_t X_t^\e dt-d\bar\cM_t$. It\^o's formula, applied to $ e^{-\rho T}\hat V^\e(\e \tilde D_T^\e,H_T^\e,Y_T)$, $T>0$,  and \eqref{eq:truemart} in turn yield
\begin{align*}
 &\hat V^\e(\e d,h,y)=\E\left[ e^{-\rho T}\hat V^\e(\e \tilde D^\e_T,H^\e_T,Y_T)-\int_0^T e^{-\rho t} \left(-\rho V^0_t +\cL^YV^0_t+\e\rho \varpi(X^\e_t,Y_t)\right.\right.\\
 &\quad\left.\left.-\e(-\rho u_t +\cL^Y u_t +a_t)+(H^\e_t)^\top C\Lambda^{-1}(Q_d^\top \tilde D^\e_t+Q_h^\top H^\e_t)+R(H^\e_t-C^{-1}\tilde D^\e_t)^\top \tilde D^\e_t\right.\right.\\
 &\quad\left.\left.+\frac{1}{2}(X^\e_t)^\top (\Psi_t +A_t\hat C\Lambda^{-1}\hat C^\top A_t)X^\e_t-\frac{\e}{2} (X^\e_t)^\top \chi_t-\frac{\e}{2}(X^\e_t)^\top\cL^YA_t X^\e_t\right)dt\right]\\
& \leq \E\left[ e^{-\rho T}\hat V^\e(\e \tilde D^\e_T,H^\e_T,Y_T)-\int_0^T e^{-\rho t}\left(-\frac{\e}{2} (X^\e_t)^\top \chi_t+ (H^\e_t)^\top C\Lambda^{-1}(Q_d^\top \tilde D^\e_t+Q_h^\top H^\e_t)\right.\right.\\
 &\left.\left.\quad+R(H^\e_t-C^{-1}\tilde D^\e_t)^\top \tilde D^\e_t+\frac{1}{2}(X^\e_t)^\top (\Psi_t +A_t\hat C\Lambda^{-1}\hat C^\top A_t)X^\e_t-\frac{\mu_t^\top\Sigma^{-1}_t\mu_t}{2\gamma}\right)dt\right].
\end{align*}
Here, $\chi$ is defined as in \eqref{contr-drift} and we have used the second corrector equation~\eqref{corrector2} satisfied by $u$, the Riccati equation \eqref{eq:matrixriccati} for $A$, the frictionless dynamic programming equation~\eqref{eq:V0} for $V^0$, as well as Assumption \ref{assume-admis}. Set
$$\beta_\e:=\int_0^\infty e^{-\rho t}\E\left[\frac{1}{2}| (X^\e_t)^\top \chi_t|\right] dt,$$
which satisfies
$$\beta_\e^2\leq c\int_0^\infty e^{-\rho t}\E\left[\frac{1}{2}| X^\e_t|^2\right]dt\int_0^\infty e^{-\rho t}\E\left[|\chi_t|^2\right] dt$$
for some constant $c>0$. By Assumption \eqref{eq:controlstate}, the right-hand side of this inequality tends to $0$ as $\e \to 0$, so that $\beta_\e\to 0$ as $\e\to 0$. 

As a consequence:
\begin{align*}
\hat V^\e(\e d,h,y)&\leq \E\left[ e^{-\rho T}\hat V^\e(\e \tilde D^\e_T,H^\e_T,Y_T)\right]+\e\beta_\e\\
&\quad+\E\left[\int_0^T e^{-\rho t}\left(  (H^\e_t)^\top \left(-R\tilde D^\e_t+C_\e\dot H_t^\e\right)+R (\tilde D^\e_t)^\top C^{-1} \tilde D^\e_t \right.\right.\\
&\left.\left.\quad\qquad\quad-\frac{1}{2}(X^\e_t)^\top (\Psi_t +A_t\hat C\Lambda^{-1}\hat C^\top A_t)X^\e_t+\frac{\mu_t^\top\Sigma^{-1}_t\mu_t}{2\gamma}\right)dt\right]\\
&=\E\left[ e^{-\rho T}\hat V^\e(\e \tilde D^\e_T,H^\e_T,Y_T)\right]+\e\beta_\e\\
&+\E\left[\int_0^T e^{-\rho t}\left(  (H^\e_t)^\top \left(-R \tilde D^\e_t+C\dot H^\e_t\right) -\frac{\gamma}{2}(H^\e_t-\cM_t)^\top\Sigma_t (H^\e_t-\cM_t)\right.\right.\\
&\left.\left.\quad\qquad\quad-\frac{1}{2}(X^\e_t)^\top ( A_t\hat C\Lambda^{-1}\hat C^\top A_t)X^\e_t+\frac{\mu_t^\top\Sigma^{-1}_t\mu_t}{2\gamma}\right)dt\right].
\end{align*}
Recalling that $\dot H_t^\e=-\e^{-1}\Lambda^{-1}\hat C^\top A_t X^\e_t$ and in turn
$$\frac{1}{2}(X_t^\e)^\top ( A_t\hat C\Lambda^{-1}\hat C^\top A_t)X_t^\e=\frac{1}{2}(\dot H^\e)^\top\Lambda_\e \dot H^\e,$$
we finally obtain
\begin{align*}
&\hat V^\e(\e d,h,y)\leq \lim_{T\to \infty}\E\left[ e^{-\rho T}\hat V^\e(\e \tilde D^\e_T,H^\e_T,Y_T)\right]+\e\beta_\e\\
&\quad+\E\left[\int_0^\infty e^{-\rho t}\left(  (H^\e_t)^\top \left(\mu_t-R \tilde D^\e_t+C\dot H^\e_t\right) -\frac{\gamma}{2}(H^\e_t)^\top\Sigma_t H^\e_t-\frac{1}{2}(\dot H_t^\e)^\top\Lambda_\e \dot H_t^\e\right)dt\right]\\
&\qquad\qquad\quad\leq \cJ^\e(d\e,h,y;\dot H^\e)+ \e\beta_\e= cJ^\e(d\e,h,y;\dot H^\e)+o(\e).
\end{align*}
In view of the value expansion in Theorem~\ref{thm:expansion}, the trading rates $(\dot H^\e)_{\e >0}$ therefore indeed are asymptotically optimal as claimed.
\end{proof}

\appendix
\section{Appendix: Additional Proofs}\label{sec:app}

\subsection{Additional Proofs for Section~\ref{sec:model}}

The following result shows that the comparison Assumption~\ref{assume-comparison} for the second corrector equation is satisfied for the model with mean-reverting returns from Example~\ref{ex:meanrev}.

\begin{prop}\label{prop:comparison-example}
Fix a constant $a$. Comparison holds for the PDE
\begin{align}\label{pde:example}
\rho u- \cL^Y u=\rho u -\nu (y_2) \pa_1 u +\lambda y_2 \pa_2 u -\frac{1}{2}\left(\sigma^2 \pa_{11}u+\eta^2 \pa_{22}u\right)=a,
\end{align}
among semisolutions $\phi: \R^2 \to \R$ satisfying the following growth condition,
\begin{align}\label{eq:growthappendix}c(1+|y_2|^2)^{2} \geq \phi(y) \geq  0, \quad \mbox{ for some }c>0.
\end{align}
\end{prop}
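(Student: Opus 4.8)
The plan is to bypass the usual Crandall--Ishii doubling and instead compare each semisolution against the \emph{explicit} constant classical solution $\bar a := a/\rho$ of \eqref{pde:example}, using a Perron-type penalization with a single auxiliary function $\chi$ tailored to the growth class \eqref{eq:growthappendix}. The key object is a smooth $\chi:\R^2\to[1,\infty)$ with: (a) $\rho\chi-\cL^Y\chi\geq\delta_*>0$ on all of $\R^2$; (b) $\chi$ is coercive, $\chi(y)\to\infty$ as $|y|\to\infty$; and (c) $\chi$ grows strictly faster in $y_2$ than the quartic bound $c(1+|y_2|^2)^2$. The natural candidate is
\[
\chi(y)=e^{\beta y_2^2}+\kappa\ln(2+y_1^2).
\]
A direct computation gives $\rho e^{\beta y_2^2}-\cL^Y e^{\beta y_2^2}=e^{\beta y_2^2}\big((\rho-\beta\eta^2)+2\beta(\lambda-\beta\eta^2)y_2^2\big)$, which is $\geq\rho-\beta\eta^2>0$ as soon as $0<\beta<\min(\rho,\lambda)/\eta^2$; this term also provides property (c) automatically. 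The logarithmic term is included only to force coercivity in the $y_1$-direction, since \eqref{eq:growthappendix} is silent there. Because $|\partial_1\ln(2+y_1^2)|$ and $|\partial_{11}\ln(2+y_1^2)|$ are bounded while $|\nu(y_2)|\leq C(1+|y_2|)$ (as $\nu$ has bounded first derivative), one gets $|\cL^Y\ln(2+y_1^2)|\leq C(1+|y_2|)+C'$; this at-most-linear-in-$|y_2|$ error is re-absorbed by the exponential surplus above upon choosing $\kappa$ small, still leaving $\rho\chi-\cL^Y\chi\geq\delta_*>0$ globally.

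\textbf{One-sided bounds.} Next I would show $v\leq\bar a$ for every viscosity subsolution $v$ of \eqref{pde:example} satisfying \eqref{eq:growthappendix}. Suppose not: then $\sup_{\R^2}(v-\bar a)>0$. Since $v\leq c(1+|y_2|^2)^2$ while $\chi$ grows faster in $y_2$ and is coercive in $y_1$, the USC function $v-\bar a-\e\chi$ still takes a strictly positive value for $\e>0$ small and tends to $-\infty$ at infinity, hence attains its maximum at some $\bar y$. Using $\varphi:=\bar a+\e\chi\in C^2$ as a test function at this global maximum of $v-\varphi$, the subsolution property yields $\rho v(\bar y)-\e\,\cL^Y\chi(\bar y)-a\leq0$, while $v(\bar y)>\bar a+\e\chi(\bar y)$ gives $\rho v(\bar y)>a+\rho\e\chi(\bar y)$; subtracting $a$ and dividing by $\e$ gives $\rho\chi(\bar y)-\cL^Y\chi(\bar y)<0$, contradicting (a). The mirror argument works for a supersolution $w$: using $w\geq0$ and coercivity of $\chi$, the LSC function $w-\bar a+\e\chi$ attains a minimum, which would be negative if $w\not\geq\bar a$; testing $w$ with $\psi:=\bar a-\e\chi$ at that minimum and invoking the supersolution inequality produces $\cL^Y\chi(\bar y)-\rho\chi(\bar y)>0$, again contradicting (a). Hence $w\geq\bar a$ for every admissible supersolution. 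Chaining the two bounds, $v\leq\bar a\leq w$, which is the asserted comparison.

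\textbf{Main obstacle and remarks.} The only genuinely delicate step is the construction of $\chi$: it must simultaneously beat the quartic growth permitted in $y_2$, enforce coercivity in $y_1$ even though \eqref{eq:growthappendix} gives no control there, and keep $\rho\chi-\cL^Y\chi$ strictly positive; the tension is that the natural $y_1$-penalization $\ln(2+y_1^2)$ feeds a term growing linearly in $|y_2|$ into the generator (because $\nu$ is merely Lipschitz), which must be re-absorbed by the exponential term — this is exactly why the two penalizations are coupled and $\kappa$ must be taken small relative to $\beta$. Everything else (attainment of the extrema, the two test-function inequalities, and the final chaining) is routine. Finally, note that $\bar a=a/\rho$ need not itself lie in the comparison class when $a<0$; this is immaterial, since $\bar a$ enters only through the smooth test functions $\bar a\pm\e\chi$, never as a competitor in \eqref{eq:growthappendix}. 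In particular both halves of \eqref{eq:growthappendix} are used: the upper bound gives coercivity of $v-\bar a-\e\chi$ from above, the lower bound $\phi\geq0$ gives coercivity of $w-\bar a+\e\chi$ from below.
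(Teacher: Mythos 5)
Your proposal is correct, but it proves the proposition by a genuinely different route than the paper. The paper's proof constructs the explicit strict supersolution $\hat\phi(y)=(c_2+|y_1|^2)^{1/2}+c_3(c_4+|y_2|^2)^{5/2}$ with $\rho\hat\phi-\cL^Y\hat\phi\geq\tfrac{\rho}{4}\hat\phi\geq a$, replaces a supersolution $v$ by the convex combination $(1-\e)v+\e\hat\phi$ so that the difference with the subsolution becomes negative outside a compact set, and then concludes with the classical doubling-of-variables argument and \cite[Theorem 3.2]{CrIsLi92} on that compact set. You never compare the two semisolutions with each other: you compare each of them separately against the explicit constant classical solution $a/\rho$, using the Lyapunov function $\chi=e^{\beta y_2^2}+\kappa\ln(2+y_1^2)$ (with $0<\beta<\min(\rho,\lambda)/\eta^2$ and $\kappa$ small so that $\rho\chi-\cL^Y\chi\geq\delta_*>0$ globally, the linear-in-$|y_2|$ error from $\nu(y_2)\,\pa_1\ln(2+y_1^2)$ being absorbed by the exponential surplus) to force the penalized functions $v-a/\rho-\e\chi$ and $w-a/\rho+\e\chi$ to attain global extrema; since one side of each comparison is smooth, no Ishii lemma or doubling is needed, and the two test-function computations you give are exact. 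Your observation that only the upper half of \eqref{eq:growthappendix} is used for the subsolution bound and only the lower half for the supersolution bound is also correct. What your argument buys is elementarity and a stronger conclusion, namely $v\leq a/\rho\leq w$ for every semisolution in the class; what it gives up is flexibility in the source term: it hinges on $a$ being constant so that $a/\rho$ is an exact solution, whereas the paper's argument only uses an upper bound on $a$ and therefore carries over unchanged to the non-constant source of the second corrector equation \eqref{corrector2} (proportional to $(\nu')^2(y_2)$), which is how the comparison result is actually invoked in Example~\ref{ex:riccati}. For the statement as written, with constant $a$, your proof is complete.
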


\begin{proof}
The first step is to exhibit a supersolution of the equation that dominates the semisolutions satisfying the growth condition for sufficiently large arguments. To this end, let
$$\hat \phi(y):=\phi_1(y_1)+\phi_2(y_2):=(c_2+|y_1|^2)^{1/2}+c_3(c_4+|y_2|^2)^{5/2}.$$
Note that for all function $\phi$ satisfying \eqref{eq:growthappendix}, there exists a compact set $K$ such that on the complement of $K$, $\hat \phi\geq \phi$ holds. A computation shows
\begin{align*}
\rho\hat \phi- \cL^Y \hat\phi&=\phi_1(y_1)\left(\rho -\frac{y_1\nu (y_2)}{2(c_2+|y_1|^2)}-\frac{\sigma^2}{2}\left(\frac{y_1^2}{4(c_2+|y_1|^2)^2}-\frac{1}{2(c_2+|y_1|^2)}\right)\right)\\
&\quad+\phi_2(y_2)\left(\rho +\frac{5 \lambda y_2^2}{(c_4+|y_2|^2)}-\frac{\eta^2}{2}\left(\frac{ 15 y_2^2}{4(c_4+|y_2|^2)^2}-\frac{5}{2(c_4+|y_2|^2)}\right)\right).
\end{align*}
By choosing sufficiently large $c_4,c_2$, we can control the terms coming from the second derivatives by $\rho/2$ and obtain 
\begin{align*}
\rho\hat \phi- \cL^Y\hat \phi&\geq \phi_1(y_1)\left(\frac{\rho}{2} -\frac{ y_1\nu(y_2)}{2(c_2+|y_1|^2)}\right)+\phi_2(y_2)\left(\frac{\rho}{2} +\frac{5 \lambda y_2^2}{(c_4+|y_2|^2)}\right)\\
&\geq \frac{\rho}{2}\hat\phi(y) -\nu(y_2)  \frac{  y_1}{2(c_2+|y_1|^2)^{1/2}}\geq\frac{\rho}{2}\hat\phi(y) -\frac{|\nu(y_2)|  }{2}.
\end{align*}
We now use the boundedness of the derivative of $\nu$ and take $c_3\geq 1$ large enough to obtain 
\begin{align*}
\rho\hat \phi- \cL^Y\hat \phi&\geq \frac{\rho}{2}\hat\phi(y) -\frac{|\nu(y_2)|  }{2} \geq  \frac{\rho}{4}\hat\phi(y).
\end{align*}
By choosing $c_3\geq 1$ sufficiently large we can therefore guarantee that the right-hand side dominates $a$ and therefore indeed is a supersolution of \eqref{pde:example}. 

To use this supersolution to establish comparison, argue by contradiction. If comparison does not hold, there is a subsolution $u$ and supersolution $v$ of the equation such that 
$\sup_{\R^2} \{u-v\} >0$. One can then find $\e>0$ small enough such that $v^\e:=(1-\e)v+\e\hat \phi$ (with the supersolution $\hat\phi$ constructed above) is also a supersolution of the equation satisfying $\sup_{\R^2} \{u-v^\e\} >0$.

By the growth conditions for $u,v$, and $\phi$, there exists a compact $K\subset \R^2$ such that 
$$u(y)-v^\e(y)\leq 0, \quad \mbox{for all $y\notin K$.} $$
The ``doubling-of-variable method'' and \cite[Theorem 3.2]{CrIsLi92} (applied on $K$) can now be used to obtain a contradiction.
\end{proof}

Next, we sketch how the weak-dynamic programming approach of \cite{bouchard.touzi.11} enables us to derive the viscosity property of the frictional value function.

\begin{proof}[Proof of Proposition \ref{asm:frictional.pde}]
The proof is a minor modification of \cite[Corollary 5.6]{bouchard.touzi.11}, also compare \cite[Proof of Theorem 2.1]{altarovici.al.15}. By \cite[Remark 3.11]{bouchard.touzi.11}, for all families $\{\tau^{\dot H}:\dot H\in \cA\}$ of uniformly bounded stopping times and upper semicontinuous minorants $\phi$ of $V$, the function $V$ satisfies the following weak dynamic programming principles,
\begin{align}
&V(\theta)\leq \sup_{\dot H\in\cA_\rho}\E\left[ e^{-\rho \tau^{\dot H}}V^*(D_{\tau^{\dot H}}^{\theta,\dot H},H_{\tau^{\dot H}}^{\theta,\dot H},Y^y_{\tau^{\dot H}})\right.\\
&\left.+\int_0^{\tau^{\dot H}}\e^{-\rho t} \left((H_t^{\theta,\dot H}) ^{\top}(\mu_t-R D_t^{\theta,\dot H}+C \dot{H}_t)-\frac{\gamma}{2} (H_t^{\theta,\dot H})^\top \Sigma_t H_t^{\theta,\dot H}-\frac{1}{2}\dot{H}^\top_t \Lambda \dot{H}_t\right) dt\right],\notag\\
&V(\theta)\geq \sup_{\dot H\in\cA_\rho}\E\left[ e^{-\rho \tau^{\dot H}}\phi(D_{\tau^{\dot H}}^{\theta,\dot H},H_{\tau^{\dot H}}^{\theta,\dot H},Y^y_{\tau^{\dot H}})\right.\\
&\left.+\int_0^{\tau^{\dot H}}\e^{-\rho t} \left((H_t^{\theta,\dot H}) ^{\top}(\mu_t-R D_t^{\theta,\dot H}+C \dot{H}_t)-\frac{\gamma}{2} (H_t^{\theta,\dot H})^\top \Sigma_t H_t^{\theta,\dot H}-\frac{1}{2}\dot{H}^\top_t \Lambda \dot{H}_t\right) dt\right].\notag
\end{align}
As the generator of \eqref{PDE} is continuous, one can now modify the proof of \cite[Corollary 5.6]{bouchard.touzi.11} to establish that $V^*$ (resp.~$V_*$) is a viscosity subsolution (resp.~supersolution) of the frictional dynamic programming equation~\eqref{PDE}. This is the definition of discontinuous viscosity property for $V$. 
\end{proof}

Next, we turn to the sufficient conditions for the finiteness of the frictional value function.

\begin{lem}\label{lem:conc}
Suppose Assumption \ref{finite-V0} is satisfied and there exists $\d>0$ such that for all 
$$M\in\cS:=\left\{ \left( {\begin{array}{cc}
   -(2R+\rho)C^{-1} & \rho I_n \\       \rho I_n & -\gamma\Sigma(y) \      \end{array} }\right) \in \Sb_{2n}
:y\in \cD\right\}$$
and $\xi\in \R^{2n}$ we have 
$$\xi^\top  M\xi\leq-2\d|\xi|^2.$$
Then, for all admissible trading rates $\dot{H} \in \cA_\rho$,
\begin{align}\label{eq:conc}
\notag &\cJ(d,h,y; \dot{H})-V^0(y)=-h^\top d+\frac{d^\top   C^{-1}d}{2}+\rho\E\left[\int_0^\infty e^{-\rho t}(H_t^{\theta,\dot H} )^\top D_t^{\theta,\dot H} dt\right]\notag\\
&\qquad-\E\left[\int_0^\infty e^{-\rho t}(D_t^{\theta,\dot H})^\top\left(\frac{(2R+\rho) C^{-1} }{2}\right)D_t^{\theta,\dot H}dt\right]\\
&\qquad-\E\left[\int_0^\infty \frac{e^{-\rho r}}{2}\left((H_t^{\theta,\dot H}-\cM_t)^\top\gamma\Sigma_t (H_t^{\theta,\dot H}-\cM_t)+\dot{H}_t^\top\Lambda_t\dot{H}_t \right)dt\right].\notag
\end{align}
This implies, in particular, that the expectation in \eqref{eq:control-e} is well defined for all $\dot H\in \cA_\rho$ and the frictional value function is finite, $V(\theta)<\infty$ for all $\theta=(d,h,y)\in \R^n\times \R^n\times \cD$.
\end{lem}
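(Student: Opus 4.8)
The plan is to establish the identity \eqref{eq:conc} by an It\^o-based computation and then to use the sign condition on the matrices in $\cS$ together with Assumption~\ref{finite-V0} to deduce finiteness. First I would start from the definition of $\cJ(d,h,y;\dot H)$ in \eqref{eq:control-e} and rewrite the integrand by completing the square in $H$ around the Merton portfolio $\cM$: using $H^\top\mu-\tfrac\gamma2 H^\top\Sigma H = \tfrac{1}{2\gamma}\mu^\top\Sigma^{-1}\mu-\tfrac\gamma2(H-\cM)^\top\Sigma(H-\cM)$, the frictionless value density $\tfrac{1}{2\gamma}\mu^\top\Sigma^{-1}\mu$ appears, whose discounted expectation is exactly $V^0(y)$ by the probabilistic representation \eqref{eq:vf}. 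Subtracting $V^0(y)$ then removes this term and leaves the risk-penalty term $-\tfrac\gamma2(H-\cM)^\top\Sigma(H-\cM)$, the temporary cost $-\tfrac12\dot H^\top\Lambda\dot H$, and the transient-impact terms $H^\top(-RD+C\dot H)$.

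Next I would handle the transient terms. The key observation is that $C\dot H_t\,dt = dD_t + RD_t\,dt$ from the distortion dynamics \eqref{eq:distortion}, so $H_t^\top C\dot H_t\,dt = H_t^\top dD_t + R H_t^\top D_t\,dt$, which combines with $-R H_t^\top D_t$ to give just $H_t^\top dD_t$. Then I would apply integration by parts (It\^o's product rule) to $e^{-\rho t}H_t^\top D_t$: this produces $H_t^\top dD_t = d(e^{\rho t}\,e^{-\rho t}H_t^\top D_t)\text{-type terms}$; more precisely, $d(H_t^\top D_t) = \dot H_t^\top D_t\,dt + H_t^\top dD_t$, and $H_t^\top dD_t = H_t^\top(-RD_t + C\dot H_t)\,dt$, so one can also express $\dot H_t^\top D_t\,dt = \dot H_t^\top C^{-1}(dD_t+RD_t\,dt)$. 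The cleanest route is to use the explicit representation $D_t = e^{-Rt}d + \int_0^t e^{-R(t-s)}C\dot H_s\,ds$ together with the transversality condition in \eqref{eq:trans} to justify that the boundary term at infinity vanishes, yielding $\E\!\int_0^\infty e^{-\rho t}H_t^\top C\dot H_t\,dt$ in terms of $-h^\top d$, the discounted cross term $\rho\,\E\!\int_0^\infty e^{-\rho t}H_t^\top D_t\,dt$, and a purely quadratic-in-$D$ term; tracking the constants carefully reproduces $-h^\top d + \tfrac12 d^\top C^{-1}d$ and $-\E\!\int_0^\infty e^{-\rho t}(D_t)^\top\tfrac{(2R+\rho)C^{-1}}{2}D_t\,dt$. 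This establishes \eqref{eq:conc}.

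For finiteness, I would observe that in the vector $X_t=(D_t^\top, (H_t-\cM_t)^\top)^\top$, the terms in \eqref{eq:conc} that are quadratic in $(D,H-\cM)$ combine into $\tfrac12\,\E\!\int_0^\infty e^{-\rho t} X_t^\top M_t X_t\,dt$ for a matrix $M_t$ of the form displayed in $\cS$ (the cross term $\rho H^\top D = \rho(H-\cM)^\top D + \rho\cM^\top D$ contributes the off-diagonal $\rho I_n$, while the extra $\rho\cM^\top D$ piece is linear in $D$ and controlled by Cauchy--Schwarz against the negative-definite quadratic form together with the finiteness of $\E\!\int_0^\infty e^{-\rho t}|\cM_t|^2\,dt$ implied by Assumption~\ref{finite-V0}). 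The hypothesis $\xi^\top M\xi\le -2\delta|\xi|^2$ makes this quadratic form uniformly negative definite, so up to the linear correction the whole expression is bounded above; combined with the admissibility integrability condition $\E\!\int_0^\infty e^{-\rho t}(|H_t|^2+|D_t|^2)\,dt<\infty$ from \eqref{eq:trans}, every expectation in \eqref{eq:conc} is finite, hence $\cJ(\theta;\dot H)-V^0(y)\le 0 <\infty$ (and is well defined), and taking the supremum gives $V(\theta)\le V^0(y)<\infty$.

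The main obstacle I anticipate is the rigorous justification of the integration-by-parts step at infinity: one must show the boundary term $e^{-\rho t}H_t^\top D_t\to 0$ and that the relevant local martingales are true martingales (or at least that expectations may be exchanged with limits), which is exactly where the transversality conditions in $\cA_\rho(\theta)$ are needed. A secondary subtlety is bookkeeping the numerical constants in the $D$-quadratic term so that precisely $\tfrac{(2R+\rho)C^{-1}}{2}$ emerges; this is routine once the integration by parts is set up, but easy to get wrong. I would assume the statement of Assumption~\ref{finite-V0} (finiteness and smoothness of $V^0$, the frictionless HJB equation \eqref{eq:V0}) and the representation \eqref{eq:vf} throughout.
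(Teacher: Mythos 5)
Your derivation of the identity \eqref{eq:conc} is essentially the paper's own argument: complete the square around the Merton portfolio so that the frictionless density $\mu^\top\Sigma^{-1}\mu/2\gamma$ cancels against $V^0(y)$ via \eqref{eq:vf}, then integrate by parts $e^{-\rho t}H_t^\top D_t$ and $e^{-\rho t}D_t^\top C^{-1}D_t$ using the dynamics \eqref{eq:distortion} and the transversality conditions in \eqref{eq:trans} to dispose of the boundary terms at infinity; tracking constants gives exactly $-h^\top d+\tfrac12 d^\top C^{-1}d$ and the $\tfrac{(2R+\rho)C^{-1}}{2}$ weight. One remark: since $D$ and $H$ are absolutely continuous, these integrations by parts contain no stochastic integrals, so the ``true martingale'' issue you flag does not arise; transversality is needed only for the vanishing of $e^{-\rho t}H_t^\top D_t$ and $e^{-\rho t}D_t^\top C^{-1}D_t$ and for the integrability of the cross and $D$-quadratic terms before taking expectations.

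Two points in your finiteness step need repair, though the conclusion survives. First, controlling the linear term $\rho\,\E\!\int_0^\infty e^{-\rho t}\cM_t^\top D_t\,dt$ requires more than Assumption~\ref{finite-V0}: that assumption yields $\E\!\int_0^\infty e^{-\rho t}\cM_t^\top\gamma\Sigma_t\cM_t\,dt=2V^0(y)<\infty$, and to deduce $\E\!\int_0^\infty e^{-\rho t}|\cM_t|^2dt<\infty$ you must also use $\gamma\Sigma(y)\geq 2\d I_n$, which follows from the lemma's matrix hypothesis with $\xi_1=0$; the paper avoids this detour by applying the Young inequality with weights $\gamma\Sigma_t$ directly and absorbing $\d^2 D_t^\top(\gamma\Sigma_t)^{-1}D_t\leq\tfrac{\d}{2}|D_t|^2$ into the negative definite quadratic form. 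Second, your concluding inequalities $\cJ-V^0\leq 0$ and $V(\theta)\leq V^0(y)$ are false in general; what the argument actually gives is
\[
V(d,h,y)\;\leq\;-h^\top d+\frac{d^\top C^{-1}d}{2}+\left(1+\frac{\rho^2}{(2\d)^2}\right)V^0(y)<\infty,
\]
which is all the lemma claims. The distinction is not cosmetic: the paper stresses in Section~\ref{ss.incorp} that the frictional value is \emph{not} dominated by the frictionless one (a large position with a sufficiently negative price distortion can outperform), and this is exactly why the shifted value function $\tilde V^\e$ is introduced there.
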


\begin{proof}

Fix an admissible control $\dot H\in\cA_\rho$. Note that due to our admissibility condition, the right-hand side of
\eqref{eq:conc} is well defined; we denote it by 
$\tilde \cJ (d,h,y; \dot{H}).$
Our objective is to use the admissibility condition to show that this quantity is equal to 
$$ \cJ(d,h,y; \dot{H})-V^0(y).$$
We differentiate $e^{-\rho r} D_r^\top C^{-1}D_r$ and use \eqref{eq:distortion} and the transversality condition~\eqref{eq:trans}, obtaining 
$$-\int_0^\infty e^{-\rho r}D_r^\top \left(\frac{2R+\rho}{2}C^{-1}\right) D_r dr+\frac{d^\top C^{-1}d}{2}=-\int_0^\infty e^{-\rho r}D_r ^\top \dot H_r dr.$$
Similarly we differentiate $e^{-\rho r}H_r^\top D_r$ to derive
\begin{align}\label{eq:blabla}
-\int_0^\infty e^{-\rho r}D_r^\top \dot H_r dr=h^\top d+\int_0^\infty e^{-\rho r}H_r^\top \dot D_rdr- \rho\int_0^\infty e^{-\rho r}H_r^\top D_rdr.
\end{align}
Another application of \eqref{eq:distortion} gives
\begin{align}
&-\int_0^\infty e^{-\rho r}D_r^\top \left(\frac{2R+\rho}{2}C^{-1}\right) D_r dr+\frac{d^\top C^{-1}d}{2}-h^\top d+\rho\int_0^\infty e^{-\rho r}H_r^\top D_rdr\\
&=-R \int_0^\infty e^{-\rho r}H_r^\top D_rdr+ \int_0^\infty e^{-\rho r}H_r^\top C \dot H_r dr.
\end{align}
Note that by definition of $\cM$ we have  
\begin{align}
&\frac{1}{2}(H_t^{\theta,\dot H}-\cM_t)^\top\gamma\Sigma_t (H_t^{\theta,\dot H}-\cM_t)=\frac{1}{2}H_t^\top \gamma\Sigma_t H_t^{\theta,\dot H} -\mu_t^\top H_t +\frac{\mu^\top_t,\Sigma^{-1}_t \mu_t}{2\gamma}
\end{align}
which implies finally that 
\begin{align}
&-\int_0^\infty e^{-\rho r}D_r^\top \left(\frac{2R+\rho}{2}C^{-1}\right) D_r dr+\frac{d^\top C^{-1}d}{2}-h^\top d+\rho\int_0^\infty e^{-\rho r}H_r^\top D_rdr\\
&-\int_0^\infty e^{-\rho r}\frac{1}{2}(H_r-\cM_r)^\top\gamma\Sigma_r (H_r-\cM_r)dr\\
&=\int_0^\infty e^{-\rho r}\left(H_r^\top (\mu_r -RD_r + C \dot H_r )-\frac{1}{2}H_r^\top \gamma\Sigma_r H_r -\frac{\mu^\top_r\Sigma^{-1}_r \mu_r}{2\gamma}\right)
dr.
\end{align}
Now, take the expectation of both sides and use \eqref{eq:vf} to obtain
$$\tilde \cJ(d,h,y; \dot{H})=\cJ (d,h,y; \dot{H})-V^0(y).$$
This shows that $\cJ$ is well defined for all $\dot H\in \cA_\rho$ and {\eqref{eq:conc}} holds.  We now rewrite the right-hand side of \eqref{eq:conc} as
\begin{align*}
-h^\top d&+\frac{d^\top C^{-1}d}{2}+\rho\E\left[\int_0^\infty e^{-\rho t}\cM_t^\top D_t dt\right]\\
&+\rho\E\left[\int_0^\infty e^{-\rho t}(H_t-\cM_t)^\top D_t dt\right]-\E\left[\int_0^\infty e^{-\rho t}D_t^\top\left(\frac{(2R+\rho) C^{-1}}{2}\right)D_t dt\right]\\
&-\E\left[\int_0^\infty \frac{e^{-\rho r}}{2}\left((H_t-\cM_t)^\top\gamma\Sigma_t (H_t-\cM_t)+\dot{H}_t^\top\Lambda_t\dot{H}_t \right)dt\right].
\end{align*}
Note that the last two lines in this expression correspond to the action of the matrices in $\cS$ on the vector $(D_t^\top, H_t^\top-\cM_t^\top)^\top$. By assumption, these are bounded from above by
   \begin{align}\label{eq:up-bound-lem}
   -\d (|D_t|^2 +|H_t-\cM_t|^2)
   \end{align}
Moreover,  together with the admissibility of $\dot H$, the $\e$-Young inequality yields 
\begin{align*}
&\rho\E\left[\int_0^\infty e^{-\rho t}\cM_t^\top D_t dt\right]=\rho\E\left[\int_0^\infty e^{-\rho t}((\gamma \Sigma_t)^{1/2}\cM_t)^\top((\gamma \Sigma_t)^{-1/2} D_t )dt\right]\\
&\qquad\leq \frac{ \rho^2}{(2\d)^2} \E\left[\int_0^\infty e^{-\rho t}\cM_t^\top \gamma\Sigma_t\cM_t  dt\right]+\d^2\E\left[\int_0^\infty e^{-\rho t} D_t^\top (\gamma\Sigma_t)^{-1} D_tdt\right]\\
&\qquad\leq \frac{ \rho^2}{(2\d)^2 } V^0(y)+\frac{\d}{2}\E\left[\int_0^\infty e^{-\rho t} |D_t|^2dt\right].
\end{align*}
   Notice that \eqref{eq:up-bound-lem} allows us to bound the last term above because
   $$\frac{\delta |D_t|^2}{2} -\d (|D_t|^2 +|H_t-\cM_t|^2)\leq 0.$$
This finally gives 
   \begin{align*}
   \cJ(d,h,y; \dot H)-V^0(y)&\leq-h^\top   d+\frac{d^\top   C^{-1}d}{2}+ \frac{ \rho^2}{(2\d)^2 } V^0(y).
\end{align*}
Moreover, we obtain the upper bound
$$V(d,h,y) \leq-h^\top   d+\frac{d^\top   C^{-1}d}{2}+\left(1+\frac{ \rho^2}{(2\d)^2 } \right)V^0(y)$$
by taking the supremum over admissible controls. This concludes the proof because $V^0$ is finite by Assumption~\eqref{finite-V0}.  
 \end{proof}

\begin{rem} \label{rem:para}
 The mappings $\dot H \to H_t^{\theta,\dot H}$ and $\dot H\to D_t^{\theta,\dot H}$ are affine and under the assumptions of Lemma~\ref{lem:conc} the mapping $(d,h) \mapsto (d, h) M (d, h)^\top$ is concave for all $M\in\cS$. In order to compare our result with \cite[Lemma 1]{garleanu.pedersen.16}, assume that $\Sigma $ is constant. Then, Lemma \ref{lem:conc}
 provides a sufficient condition for the concavity of $\dot H\to \cJ(d,h,y; \dot{H})$, which in turn yields that the frictional optimizer is unique. This sufficient condition is the positivity of the symmetric matrix 
 $$\left( {\begin{array}{cc}
   (2R+\rho)C^{-1} &- \rho I_n \\       -\rho I_n & \gamma\Sigma \      \end{array} } \right)  \mbox{or, equivalently, of } 
\left( {\begin{array}{cc}
   (2R+\rho)\gamma \Sigma^{1/2}C^{-1}\Sigma^{1/2} &- \rho I_n \\      - \rho I_n & I_n\      \end{array} } \right).$$
This is satisfied in particular if 
   \begin{equation}\label{eq:suff}
   (2R+\rho)\gamma \rho^{-2}> || \Sigma^{-1/2}C\Sigma^{-1/2}||,
   \end{equation}
 which is a sharper sufficient condition than the one from \cite[Lemma 1]{garleanu.pedersen.16}. In particular, \eqref{eq:suff} holds for sufficiently small discount rates $\rho$.
  Note also that if $C$ goes to $0$ or $R$ goes to infinity, then this condition is satisfied. 
  \end{rem}

\subsection{Additional Proofs for Section~\ref{sec:main}}\label{s:app3}

We first establish the properties of the solution $A$ of the Riccati equation \eqref{eq:matrixriccati}:

 \begin{proof}[Proof of Lemma \ref{lem:ricatti}]
The matrix $\Gamma  -\hat C\hat C^\top$ only has strictly negative eigenvalues; thus by \cite[Definition 5]{klamka2016controllability}, $(\Gamma, \hat C)$ is stabilizable. As, moreover, $\Psi$ is symmetric positive definite, \cite[Theorem 2.1]{ran1988existence} shows that there exists a maximal solution $ A$ of the Riccati equation  \eqref{eq:matrixriccati} such that all eigenvalues of 
$\Gamma-\hat C\Lambda^{-1}\hat C^\top A$ are nonpositive. In addition, by \cite[Theorem 2.2]{ran1988existence}, $ A$ is symmetric positive definite.

For $\lambda,\d>0$, define
$$
K_{\d,\lambda}:=\left( \begin{array}{cc}
\d I_{2n}&0 \\
  0&-\lambda I_{2n}\end{array} \right).
  $$
Then, for sufficiently small $\d>0$ and sufficiently large $\lambda>0$,
\begin{align}\label{eq:inequalityric}
\left( \begin{array}{cc}
\Psi&\Gamma \\
  \Gamma&-\hat C\Lambda^{-1}\hat C^\top\end{array} \right)\geq K_{\d,\lambda}.
  \end{align}
  Denote $A_{\d,\lambda}$ the maximal solution of the Riccati equation $$-\d I_{2n}+\lambda A_{\d,\lambda}A_{\d,\lambda}=0.$$
\cite[Theorem 2.2(i)]{ran1988existence} and  the inequality \eqref{eq:inequalityric} imply 
$$A\geq  A_{\d,\lambda}>0.$$  In view of Assumption \ref{assume-expansion}(i), the choice of $\d,\lambda$ can be made uniformly for $y\in\cD$ so that the lower bound in \eqref{lower-bound-varpi} also holds uniformly on $\cD$. 
  \end{proof}
  
Next, we show that the probabilistic representation \eqref{eq:fku} indeed provides the unique solution of the second corrector equation under suitable assumptions:
  
   \begin{proof}[Proof of Lemma \ref{lem:feynmann}]
 We first note that the continuity of $\mu_Y,\sigma_Y$, and $a$ implies that the upper and lower semicontinuous envelopes of the generator of \eqref{corrector2} coincide. Thus, the (discontinuous) viscosity property of $u$, i.e., that $u^*$ is a viscosity subsolution and $u_*$ is a viscosity supersolution, can be established similarly as in the proof of Proposition \ref{asm:frictional.pde}. The comparison result from Assumption \ref{assume-comparison} in turn yields $u^*\leq u_*\leq u^*$. Hence $u$ is indeed the unique continuous solution of \eqref{corrector2} as claimed.
 \end{proof}

Finally, we establish that the model with mean-reverting returns from Example~\ref{ex:OU} satisfies all regularity condition required for the application of Theorem~\ref{thm:policy}. First, we consider Assumptions~\ref{assume-expansion} and \ref{assume-comparison} 

\begin{proof}[Proof of Example~\ref{ex:riccati}]
 For the model with mean-reverting returns from Example~\ref{ex:OU}, $A(\cdot)$ and $M_\Sigma(\cdot)$ are constants. A direct computation shows that thanks to the boundedness of the derivatives of $\nu$, there exists $c>0$ such that $|\cL^Y\cM(y)|^4\leq c(1+|y_2|^2)^2$. Together with the mean reversion of $Y^2$, this implies $M_\cM(y) \leq c (1+|y_2|^2)^2$ for some $c>0$. Thus, \eqref{eq:comparison-class} becomes  $c(1+|y_2|^2)^{2} \geq \phi(y) \geq  0$, for some $c>0$. A comparison result for \eqref{corrector2} under this growth condition is estsablished in Proposition \ref{prop:comparison-example}. Note that it is obvious here that the function 
\begin{align}\label{eq:solu}
u(y) =\frac{A_2\eta^2}{\gamma^2\sigma^4}E\left[\int_0^\infty e^{-\rho t } (\nu')^2(Y^{2,y}_t)dt \right]
\end{align}
 is a smooth solution of this PDE. Nevertheless, a comparison result for semicontinuous semisolutions of the second corrector equation as in Proposition \ref{prop:comparison-example} is necessary because the upper and lower semilimits defined in Section \ref{s.semilimits} can a priori only be chracterized as viscosity semisolutions satisfying the growth condition \eqref{eq:comparison-class}. Hence, the comparison result from Proposition~\ref{prop:comparison-example} is crucial to obtain a full characterisation of these semilimits and deduce that they in fact coincide. 
\end{proof}

To conclude, we discuss the other regularity conditions from Theorem~\ref{thm:policy}:

   \begin{proof}[Proof of Example~\ref{ex:policy}]
For the model with mean-reverting returns from Example~\ref{ex:OU}, the first integrability condition \eqref{eq:assume:} is clearly satisfied because $A$ is constant and the squared diffusion coefficient $ \cM$ of the frictionless Merton portfolio is bounded in this case. The limit \eqref{eq:controlstate} posits that the optimally controlled states converge to their frictionless counterparts $(0,\cM)$ as the frictions vanish for $\e\to 0$. This can be verified as in the proofs of Lemma~\ref{expansion} and Proposition~\ref{prop:expansion} for any model where $A$ is constant as in Example~\ref{ex:OU}. 

In order to check \eqref{eq:cvhatV} one can explicitly compute each term in the definition of $\hat V^\e$ and obtain that a sufficient condition for \eqref{eq:cvhatV} is that, for all $\e>0$,
\begin{align}
\E\left[e^{-\rho T}\left(| Y^{2,y}_T|^2+| D_T^{\e d,h,y,\dot H^\e}|^2+|H_T^{\e d,h,y,\dot H^\e}-\cM_T|^2\right)\right]\to 0,\mbox{ as }T\to \infty.
\end{align}
In the context of Example~\ref{ex:OU}, the function $T\to \E\left[| Y^{2,y}_T|^2\right]$ is bounded. Set 
\begin{align}\label{def:N}N :=\left( \begin{array}{cc}
RI+C\Lambda^{-1} Q_d  &  C\Lambda^{-1}Q_h    \\
 \Lambda^{-1} Q_d  &\Lambda^{-1}Q_h    \end{array} \right)\end{align}
 which satisfies  $AN+N^\top A\geq \Psi$ by \eqref{eq:matrixriccati}.
 Then, in matrix-vector notation, the corresponding state dynamics are
 $$dX_t:=d\left( \begin{array}{c}
\e^{-1} D_t \\
 H_t-\cM_t \end{array} \right)=-\frac{N}{\e} X_t dt-d\bar\cM_t, \quad \mbox{where } \bar \cM:=\left( \begin{array}{c}
0 \\
 \cM\end{array} \right).$$
It\^o's formula applied to $X^\top_tAX_t$ shows
\begin{align*}
\frac{1}{2}d\E\left[X^\top_tAX_t\right]&=-\frac{1}{\e}\E\left[X^\top_t\left(NA+N^\top A)\right)X_t\right]dt\\
&-\lambda \E\left[(H_t^{\e d,h,y,\dot H^\e}-\cM_t)Y^{2,y}_t\right]dt +\eta^2 A_2 dt.
\end{align*}
A direct computation using the Riccati equation \eqref{eq:matrixriccati} shows that $NA+N^\top A\geq \Psi>0$.  The $\e$-Young inequality in turn yields that, for some constant $C_\e>0$,
$$\left|\lambda \E\left[(H_t^{\e d,h,y,\dot H^\e}-\cM_t)Y^{2,y}_t\right]\right|\leq \frac{1}{2\e}\E\left[X^\top_t\Psi X_t\right] +C_\e \E\left[|Y^{2,y}_t|^2\right].$$
As a consequence, $d\E\left[X^\top_tAX_t\right]\leq  2 C_\e \E\left[|Y^{2,y}_t|^2\right]dt+2\eta^2 A_2 dt$, so that $\E\left[X^\top_tAX_t\right]$ has at most linear growth in $t$; in particular, due to the uniform lower bound~\eqref{lower-bound-varpi} for $A$, \eqref{eq:cvhatV} holds.

To verify the martingale property in \eqref{eq:truemart}, recall that the frictionless value function $V^0$ is given by \eqref{eq:frictval} and its derivative in $y^2$ 
$$\pa_{y^2}V^0(y)=\frac{1}{\gamma\sigma^2 }E\left[\int_0^\infty e^{-(\rho+\lambda) t} (\nu\nu^\prime)(Y^{2,y}_t)dt\right]$$ has at most linear growth in $y^2$, and the components of $A$ are constant. Recall also the function $u=u(y)=u(y^2)$ defined in \eqref{eq:solu} whose derivative in $y^2$ can be proven to be bounded. The local martingale part in the It\^o decomposition of the approximate frictional value function $\hat{V}^\e$ from \eqref{eq:exp} therefore is 
\begin{align*}
&\int_0^\cdot \partial_{y^2} (V^0(Y_t)-\e\pa_{y^2}u(Y_t)) \eta dW^2_t\\
&+\frac{2\e\nu}{\gamma\sigma^2} \int_0^\cdot \left(A_{12} D^{\e d,h,y,\dot H^\e}_t+A_2 (H_t^{\e d,h,y,\dot H^\e}-\cM_t)\right) \eta dW_t^2.
\end{align*}
 Whence, the required square integrability of the integrands is a special case of \eqref{eq:controlstate} combined with the estimates in the second moment of $Y_t$.
\end{proof}

\bibliographystyle{agsm}
\bibliography{references}

\end{document}